\def\spacingset#1{\renewcommand{\baselinestretch}%
{#1}\small\normalsize} \spacingset{1}
\newtheorem{theorem}{Theorem}
\newtheorem{lemma}{Lemma}
\def\bal#1\eal{\begin{align}#1\end{align}}
\def\balnn#1\ealnn{\begin{align*}#1\end{align*}}
\DeclareMathOperator*{\argmin}{arg\,min}
\DeclareMathOperator*{\diag}{diag}
\DeclareMathOperator*{\tr}{Tr}
\DeclareMathOperator*{\pawd}{PAWD}
\DeclareMathOperator*{\vecop}{vec}
\DeclarePairedDelimiter\set{\{}{\}}
\DeclarePairedDelimiterX{\norm}[1]{\lVert}{\rVert}{#1}
\DeclarePairedDelimiterX{\frobnorm}[1]{\lVert}{\rVert_{\mathrm{F}}}{#1}
\DeclarePairedDelimiter\frobket{\langle}{\rangle_{\mathrm{F}}}
\DeclarePairedDelimiterX{\abs}[1]{\lvert}{\rvert}{#1}
\newcommand{\iidsim}{\overset{\text{iid}}\sim}
\newcommand{\indsim}{\overset{\text{ind.}}\sim}
\newcommand{\Reals}[1]{\mathbb{R}^{#1}}
\newcommand{\Bern}{\operatorname{Bernoulli}}
\newcommand{\bY}{\mathbf{Y}}
\newcommand{\Y}{\mathbf{Y}}
\newcommand{\by}{\mathbf{y}}
\newcommand{\bX}{\mathbf{X}}
\newcommand{\bD}{\mathbf{D}}
\newcommand{\bL}{\mathbf{L}}
\newcommand{\bR}{\mathbf{R}}
\newcommand{\bP}{\mathbf{P}}
\newcommand{\rmd}{\mathrm{d}}
\newcommand{\bx}{\mathbf{x}}
\newcommand{\bz}{\mathbf{z}}
\newcommand{\bW}{\mathbf{W}}
\newcommand{\bE}{\mathbf{E}}
\newcommand{\bU}{\mathbf{U}}
\newcommand{\bu}{\mathbf{u}}
\newcommand{\bV}{\mathbf{V}}
\newcommand{\bv}{\mathbf{v}}
\newcommand{\bw}{\mathbf{w}}
\newcommand{\bA}{\mathbf{A}}
\newcommand{\btheta}{\boldsymbol\theta}
\newcommand{\bmu}{\boldsymbol\mu}
\newcites{Sup}{References}
\title{\bf \Large Generalized Bayesian Inference for \\ Dynamic Random Dot Product Graphs}
\author{
Joshua Daniel Loyal\\
Department of Statistics, Florida State University
}
\date{}
\begin{document}
\maketitle

\begin{abstract}
    The random dot product graph is a popular model for network data with extensions that accommodate dynamic (time-varying) networks. However, two significant deficiencies exist in the dynamic random dot product graph literature:  (1) no coherent Bayesian way to update one's prior beliefs about the latent positions in dynamic random dot product graphs due to their complicated constraints, and (2) no approach to forecast future networks with meaningful uncertainty quantification. This work proposes a generalized Bayesian framework that addresses these needs using a Gibbs posterior that represents a coherent updating of Bayesian beliefs based on a least-squares loss function. We establish the consistency and contraction rate of this Gibbs posterior under commonly adopted Gaussian random walk priors. For estimation, we develop a fast Gibbs sampler with a time complexity for sampling the latent positions  that is linear in the observed edges in the dynamic network, which is substantially faster than existing exact samplers. Simulations and an application to forecasting international conflicts show that the proposed method's in-sample and forecasting performance outperforms competitors.
\end{abstract}

\noindent
{\it Keywords:} Bayesian Inference; Dynamic Networks; Forecasting; Latent Space Network Models; Statistical Network Analysis.
\noindent

\onehalfspacing

\section{Introduction}

Statistical network analysis is a rapidly growing area of statistics with applications in sociology, neuroscience, political science, economics, and others. In general, a network describes the relations, or edges, between pairs of entities, or nodes. Many contemporary scientific inquiries probe how relations between nodes change over time, such as, the formation or dissolution of friendships, the evolution of polarization and partisanship in online social networks~\citep{zhu2023}, and changes in the brain's neural connections due to external stimuli. Often, the appropriate data structure to describe these time-varying phenomena is a time series of networks known as a dynamic network. In this work, we consider dynamic networks on a common set of $n$ nodes observed over $m$ equally spaced discrete time points with edges that change over time. These dynamic networks are represented by a collection of $n \times n$ adjacency matrices $\set{\bY_t}_{t=1}^m$ with entries $\set{Y_{ij,t} \, : \, 1 \leq i,j \leq n, 1 \leq t \leq m}$ called edge variables that quantify the relation between nodes $i$ and $j$ at time $t$. These edge variables can be binary or weighted, taking on any real value, depending on the application.

Existing statistical models for discrete-time dynamic networks can be roughly divided into three categories: autoregressive processes~\citep{almquist2014, almquist2017, sewell2018, jiang2023a, jiang2023b, chang2024}, more general exponential random graph models~\citep{hanneke2010, krivitsky2014stergm}, and latent variable models such as dynamic stochastic block models~\citep{xing2010, yang2011, matias2017} and dynamic latent space models (LSMs)~\citep{sarkar2006, sewell2015}. We refer the reader to \citet{kim2018} for a more complete review. This article focuses on a specific dynamic LSM called a dynamic random dot product graph (RDPG)~\citep{young2007}, which represent each node $i$ at time $t$ with a latent position $\bx_{it}$ in some $d$-dimensional latent space $\mathcal{X} \subseteq \Reals{d}$ such that the expected value of the edge variable between nodes $i$ and $j$ at time $t$, that is, $\mathbb{E}(Y_{ij,t})$, is $\bx_{it}^{\top}\bx_{jt}$. The RDPG has been widely adopted because it is relatively interpretable and flexible enough to contain or approximate several popular network models, such as stochastic block models~\citep{sussman2012} or more general LSMs~\citep{tang2013, tang2025}. We refer the reader to \citet{athreya2018} for an overview of random dot product graphs.

This article addresses the following two problems facing dynamic RDPGs: (1) there is no coherent Bayesian way to update one's prior beliefs about the latent positions, and (2) there is no approach to forecasting future networks with meaningful uncertainty quantification, which is a crucial task in dynamic network analysis.  \citet{xie2020} developed a Bayesian approach for static ($m = 1$) RDPGs; however, extending their method to time-varying latent positions is difficult since it is non-trivial to define a meaningful prior for the latent positions' temporal evolution that respects the constraints often imposed by the latent space. For example, for binary networks, $\mathcal{X} = \{\bx \in \Reals{d} : \norm{\bx}_2 \le 1, \bx \succeq 0\}$, where $\norm{\cdot}_2$ is the Euclidean norm and $\bx \succeq 0$ indicates each element of $\bx$ is greater than or equal to zero. To forecast future networks, existing works~\citep{zhu2016, chenli2018, passino2021} proposed frequentist estimators with heuristic point forecasts. However, most of these methods lack theoretical guarantees for their in-sample performance, and their forecasts all lack confidence bands, which limits their usefulness.

Our key contribution is the introduction of a generalized Bayesian approach~\citep{bissiri2016} with theoretical guarantees that provides a coherent Bayesian framework for updating one's prior beliefs about the latent positions using information in the observed dynamic network based on a generalized posterior distribution. In this work, we adopt the popular belief that the latent positions smoothly vary over time~\citep{sarkar2006, sewell2015, zhu2016, chenli2018, liu2018, zhao2024, macdonald2025}. However, the proposed framework can easily incorporate other priors. To do so, we place hierarchical $r$-th order Gaussian random walk priors on the latent positions with node-specific transition variances. By placing half-Cauchy priors on these transition variances, we show that our Bayesian approach can adaptively pool information to efficiently estimate the latent positions without assuming any common structure shared by all time points~\citep{arroyo2021} or requiring parameter tuning~\citep{macdonald2025}. Moreover, this adaptive pooling allows our approach to estimate the latent positions accurately as the network's sparsity level increases, in contrast to methods that estimate the latent positions separately at each time point~\citep{passino2021, athreya2024}, for which the signal-to-noise ratio in each individual network determines their performance. Lastly, we use these random walk priors to develop a forecast with uncertainty quantification based on a generalized posterior predictive distribution, which is missing in the RDPG literature.

Our last contribution is the development of an efficient Markov chain Monte Carlo (MCMC) algorithm to sample from the generalized posterior distribution with a runtime for sampling the latent positions that scales as the number of observed edges in the dynamic network. This time complexity is substantially faster than existing samplers for discrete-time dynamic LSMs~\citep{loyal2023,zhao2024}, which have runtimes that scale with the number of possible edges in the dynamic network, that is, $O(mn^2)$. As such, our sampler significantly reduced the computational burden of posterior inference for dynamic RDPGs, which alleviates a deterrent to adopting Bayesian inference for large-scale dynamic networks.

The paper is organized as follows. Section~\ref{sec:rdpg_prelim} reviews the dynamic RDPG model and existing estimation schemes. In Section~\ref{sec:gb-dase}, we propose a generalized Bayesian framework for dynamic RDPG inference and introduce our method for forecasting future networks. Section~\ref{sec:theory} provides theoretical guarantees for the recovery of the latent positions. Section~\ref{sec:estimation} introduces an efficient Gibbs sampler for posterior inference. Section~\ref{sec:simulation} presents a simulation study, and we apply the methods to a real dynamic network in Section~\ref{sec:real_data}. Section~\ref{sec:discussion} concludes with a discussion. The supplementary material contains all proofs and additional technical details. A Python package for the proposed method is available at \url{https://github.com/joshloyal/DynamicRDPG}.

\section{Preliminaries}\label{sec:rdpg_prelim}

\subsection{Notation}

Throughout this article, $\frobnorm{\cdot}$ and $\norm{\cdot}_2$ denote the Frobenius and Euclidean norms.  We use $\indsim$ and $\iidsim$ to denote independently distributed and independently and identically distributed. $\mathcal{O}_d$ is the group of $d$-dimensional orthogonal matrices. We use $\text{vec}(\mathbf{A})$ to denote vectorization of a matrix. For square matrices $\bA_1, \dots, \bA_k$, we use $\diag(\bA_1, \dots, \bA_k)$ to denote a block diagonal matrix with $\bA_1, \dots, \bA_k$ as the diagonal blocks. For positive sequences $\set{a_n}$ and $\set{b_n}$, $a_n = O(b_n)$ if there exists a constant $C > 0$ such that $a_n \leq C b_n$ for $n$ large enough.

\subsection{Background and Motivation}\label{sec:rdpg}

Consider a dynamic network represented as a collection of time-index $n \times n$ symmetric hollow adjacency matrices $\set{\bY_t}_{t=1}^m$ with random entries $Y_{ij,t} \in \Reals{}$ and diagonal entries equal to zero. In many settings, the parameters of interest are the means of these adjacency matrices, that is, $\set{\mathbb{E}_0(\bY_t)}_{t=1}^m$, where $\mathbb{E}_0$ denotes the expectation under the true data-generating process. The dynamic RDPG model assumes that these mean matrices are positive semi-definite and low-rank, that is, $\mathbb{E}_0(\bY_t) = \bX_{0t} \bX_{0t}^{\top}$ for $\bX_{0t} \in \Reals{n \times d}$  for $t = 1, \dots, m$ and $d \ll n$. Expressing $\bX_{0t} = (\bx_{01t}, \dots, \bx_{0nt})^{\top} \in \Reals{n \times d}$, the rows $\bx_{01t}, \dots, \bx_{0nt} \in \mathcal{X} \subset \Reals{d}$ represent the latent positions of the $n$ nodes in the network at time $t$ with the property that $\mathbb{E}_0(Y_{ij,t}) = \bx_{0it}^{\top}\bx_{0jt}$. 

The foundation of our generalized Bayesian framework is the following property of dynamic RDPGs. Under the dynamic random dot product graph model, the latent position matrices $\set{\bX_{0t}}_{t=1}^m$ minimize the following expected loss: 
\begin{equation}\label{eq:expect_loss}
     \mathbb{E}_{0}\left(\frobnorm{\bY_t - \bX_t \bX_t^{\top}}^2\right)  \qquad (t = 1, \dots, m),
\end{equation}
or equivalently
\[
    \set{\bX_{0t}}_{t=1}^m \in \argmin_{\bX_1, \dots, \bX_m \in \Reals{n \times d}} \mathbb{E}_0\left(\sum_{t=1}^m\frobnorm{\bY_t - \bX_t \bX_t^{\top}}^2\right). 
\]
The latent position matrices $\set{\bX_{0t}}_{t=1}^m$ are not the unique minimizer of Equation~(\ref{eq:expect_loss}). In fact, for any collection of orthogonal matrices $\set{\bW_t \, : \, \bW_t \in \mathcal{O}_d}_{t=1}^m$, we have that $\set{\bX_{0t} \bW_t}_{t=1}^m$ also minimizes (\ref{eq:expect_loss}) because $(\bX_{0t} \bW_t)(\bX_{0t} \bW_t)^{\top} = \bX_{0t}\bX_{0t}^{\top}$ for any $t = 1, \dots, m$. This non-identifiability occurs in most latent space models~\citep{hoff2002} and implies that the latent positions can only be recovered up to an orthogonal transformation. 

In view of Equation~(\ref{eq:expect_loss}), \citet{sussman2012} motivated the adjacency spectral embedding (ASE) estimator of the latent positions as the minimizer of the empirical loss, that is,
\begin{equation}\label{eq:ase}
    \hat{\bX}_{t}^{(\textrm{ASE})} = \argmin_{\bX_t \in \Reals{n \times d}} \frobnorm{\bY_t - \bX_t \bX_t^{\top}}^2 \qquad (t = 1, \dots, m).
\end{equation}
Under further conditions on the true data-generating process, e.g., the edge variables are independent sub-Gaussian random variables, the ASE estimator is consistent and asymptotically normal~\citep{sussman2014, athreya2015}. 

An issue with the procedure in (\ref{eq:expect_loss})--(\ref{eq:ase}) is that it does not account for any prior beliefs about the possible temporal relationships between the latent position matrices, e.g., smoothly varying. As a result, existing works strictly enforce such beliefs by either imposing shared structure~\citep{arroyo2021} or adding a roughness penalty~\citep{zhu2016, chen2018, liu2018} to Equation~(\ref{eq:ase}). However, these approaches do not allow the data to inform the suitability of the assumed temporal relationship. As such, we introduce a generalized posterior distribution that represents a rational Bayesian update of one's prior beliefs about the latent positions based on the information in the data provided by Equation~(\ref{eq:expect_loss}).

\section{Generalized Bayes for Dynamic RDPGs}\label{sec:gb-dase}

\subsection{Generalized Bayes Dynamic Adjacency Spectral Embedding}

To develop a coherent Bayesian procedure for updating our prior beliefs, we adapt the reasoning of \citet{bissiri2016} to the dynamic RDPG setting. In this framework, the target of inference is an optimal and unknown set of rank $d$ latent position matrices defined as a minimizer of the expected loss function, namely
\begin{equation*}
    \set{\bX_{t}^{(\text{opt})}}_{t=1}^m \in \argmin_{\bX_1, \dots, \bX_m \in \Reals{n \times d}} \mathbb{E}_0\{\ell(\bY_{1:m}; \, \bX_{1:m})\} = \argmin_{\bX_1, \dots, \bX_m \in \Reals{n \times d}} \mathbb{E}_0\left(\sum_{t=1}^m \frobnorm{\bY_t - \bX_t \bX_t^{\top}}^2 \right).
\end{equation*}
Unlike in Section~\ref{sec:rdpg}, the definition of $\bX_{1:m}^{(\text{opt})}$ does not require the existence of a true set of latent positions. Even when each $\mathbb{E}_0(\bY_t)$ is full-rank, it is useful to infer the aspects of the network captured by such a low-rank embedding. However, when the networks come from a dynamic RDPG, then $\bX_{1:m}^{(\text{opt})}$ equals $\set{\bX_{0t}}_{t=1}^m$ up to a collection of orthogonal transformations. 

Since the optimal latent positions cannot be computed without knowing the true data-generating process, we instead rely on the ``best'' conditional distribution for quantifying our subjective beliefs about $\bX_{1:m}^{(\text{opt})}$ to inform us about the optimal latent positions. Let $\Pi(\bX_{1:m})$ denote a prior distribution for the latent positions, $\mathscr{P}$ be the space of conditional probability distributions for the latent positions given the data, and $\mathscr{L}$ be a loss function defined on that space. For any $\nu_1, \nu_2 \in \mathscr{P}$, the loss $\mathscr{L}$ describes whether $\nu_1$ is a better candidate than $\nu_2$ for representing one's posterior belief about $\bX_{1:m}^{(\text{opt})}$.  For coherence of the posterior, \citet{bissiri2016} argued that the loss $\mathscr{L}$ must take the form
\begin{equation}\label{eq:loss_func}
    \mathscr{L}\{\nu(\bX_{1:m} \mid \bY_{1:m})\} =  \lambda \, \mathbb{E}_{\nu}\left\{ \sum_{t=1}^m \frobnorm{\bY_t - \bX_t \bX_t^{\top}}^2\right\} + D_{\textrm{KL}}\{\nu(\bX_{1:m} \mid \bY_{1:m}) \mid \mid  \Pi(\bX_{1:m})\},
\end{equation}
where the expectation is taken with respect to $\nu(\bX_{1:m} \mid \bY_{1:m})$, $D_{\textrm{KL}}(p \mid \mid q)$ denotes the Kullback-Leibler divergence between densities $p$ and $q$, and the parameter $\lambda > 0$ is called the learning rate. The generalized posterior distribution is defined as the optimal conditional distribution $\Pi_{\lambda}(\bX_{1:m} \mid \bY_{1:m}) = \argmin_{\nu \in \mathscr{P}} \mathscr{L}\{\nu(\bX_{1:m} \mid \bY_{1:m})\}$. The unique minimizer is
\begin{equation}\label{eq:gb-dase}
    \Pi_{\lambda}(\bX_{1:m} \mid \bY_{1:m}) \propto \exp\left(-\lambda \sum_{t=1}^m \frobnorm*{\bY_t - \bX_t \bX_t^{\top}}^2 \right) \Pi(\bX_{1:m}),
\end{equation}
which is also known as a Gibbs posterior~\citep{zhang2006,martin2022}. When $\Pi(\bX_{1:m})$ is proper, this Gibbs posterior is a well-defined probability density function because the normalizing constant satisfies $0 < \int \exp\left(-\lambda \sum_{t=1}^m \frobnorm*{\bY_t - \bX_t \bX_t^{\top}}^2 \right) \Pi(\rmd \bX_{1:m}) \leq 1$. 

In addition, the Gibbs posterior provides a principled way to inject prior information into the adjacency spectral embedding, where the learning rate $\lambda$ controls the prior's influence. To clarify, we observe that the loss in (\ref{eq:loss_func}) balances two components: the discrepancy between the posterior and the data $\mathbb{E}_{\nu}\set{\ell(\bX_{1:m}; \bY_{1:m})}$, and the closeness of the posterior to the prior $D_{\text{KL}}\set{\nu(\bX_{1:m} \mid \bY_{1:m}) \mid \mid \Pi(\bX_{1:m})}$.  The learning rate $\lambda$ determines the influence of the first component. As $\lambda \rightarrow 0$, the closeness to the observed networks is not penalized, and one obtains $\Pi_0(\bX_{1:m} \mid \bY_{1:m}) = \Pi(\bX_{1:m})$. In contrast, as $\lambda \rightarrow \infty$, the effect of the prior is negligible and the Gibbs posterior collapses to a point mass centered at the adjacency spectral embedding, that is, $\delta_{\hat{\bX}_{1:m}^{(\text{ASE})}}$. In this way, the Gibbs posterior combines the ASE with one's prior beliefs about the latent positions. As such, we refer to the Gibbs posterior in (\ref{eq:gb-dase}) as the {\it generalized Bayes dynamic adjacency spectral embedding} (GB-DASE). 

In summary, GB-DASE should be interpreted like a standard Bayesian posterior, i.e., quantifying a rational update of one's prior beliefs based on data, with the crucial difference that the target of inference is $\bX_{1:m}^{(\text{opt})}$ instead of some true low-rank factorizations of $\set{\mathbb{E}_0(\bY_t)}_{t=1}^m$. Since the number of dimensions $d$ defines $\bX_{1:m}^{(\text{opt})}$, we believe that $d$ should be subjectively chosen as part of $\ell(\bX_{1:m}; \bY_{1:m})$, rather than inferred from the data. In this context, the optimal latent position matrices represent the most important information in the networks expressible by a $d$-dimensional latent space. For example, one might choose $d = 2$ or $3$ to visualize the most salient features of the dynamic network easily. Alternatively, $d$ can be selected before posterior inference using standard approaches such as the ``elbow'' method~\citep{zhu2006, athreya2015}, which we use in Section~\ref{sec:real_data}.

\subsection{Smoothness Promoting Priors}\label{subsec:prior}

A benefit of the generalized Bayesian framework is that it allows us to incorporate and update our prior beliefs about the temporal behavior of the latent position matrices. In this work, we operate under the common belief that the latent positions smoothly vary over time~\citep{sarkar2006, sewell2015, zhu2016, chenli2018, liu2018, zhao2024}, although one could apply other prior beliefs. More specifically, we define the latent trajectory of node $i$ as the matrix $\bx_{i,1:m} = (\bx_{i1}, \dots, \bx_{im}) \in \Reals{d \times m}$, and our prior belief is that the adjacent columns of this matrix are similar.

To formally quantify this temporal smoothness belief, for $i = 1, \dots, n$, we place independent $r$-th order Gaussian random walk priors on the latent trajectories, that is,
\begin{equation}\label{eq:rw_prior}
    \Delta^r \bx_{it} = \sigma_i {\bw}_{it}, \quad \bw_{it} \iidsim N(\mathbf{0}_d, \mathbf{I}_d) \qquad (t = r+1, \dots, m),
\end{equation}
so that, for $t = r+1, \dots, m$,  the prior density of $\bX_t$ conditioned  $\bX_{1:(t-1)}$ is 
\begin{equation}
    \Pi_{\text{RW}(r)}(\bX_t \mid \bX_{t-1}, \dots, \bX_{t-r}, \sigma_{1:n}) = \prod_{i=1}^n \left(2\pi \sigma_i^2\right)^{-d/2}\exp\left(-\frac{1}{2\sigma_i^2} \norm{\Delta^r \bx_{it}}_2^2\right).
\end{equation}
Here, $\mathbf{I}_d$ is the $d$-dimensional identity matrix, $\mathbf{0}_d$ is a $d$-dimensional vector of ones, $\Delta^r$ denotes the component-wise $r$-th order difference operator, and $\sigma_i > 0$ captures node-specific variability for $i = 1, \dots, n$. This process, which we call a RW($r)$ prior, is also known as an integrated random walk~\citep{young1991} and its' temporal smoothness increases with $r$. 


In this work, we focus on two special cases of the prior in (\ref{eq:rw_prior}) with $r$ set to one or two. When $r = 1$, we recover the first-order random walk prior widely used in the dynamic LSM literature~\citep{sewell2015, zhao2024}, that is, $\bx_{it} = \bx_{i(t-1)} + \sigma_i \bw_{it}$ for $t = 2, \dots, m$. However, setting $r = 2$, we recover a local linear trend model $\bx_{it} = 2 \bx_{i(t-1)} - \bx_{i(t-2)} + \sigma_i \bw_{it}$ for $t = 3, \dots, m$, which has not appeared in the dynamic LSM literature. One's prior belief about the long-term smoothness of the latent trajectories and the forecast function's form should guide the choice among these two priors, as shown in the next paragraph.

\begin{figure}[tb]
\centering \includegraphics[width=\textwidth, keepaspectratio]{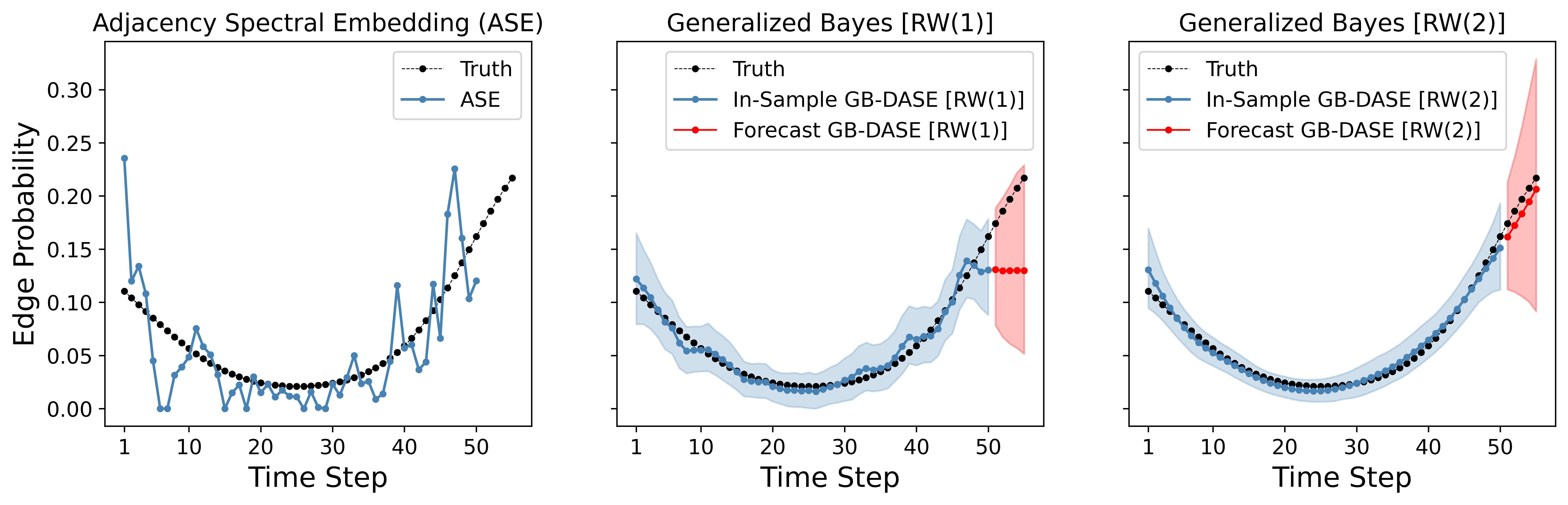}
    \caption{Comparison of the estimated edge probabilities based on (Left) ASE, (Center) GB-DASE with an RW($1$) prior, and (Right) GB-DASE with an RW($2$) prior. The dynamic network has $n = 200$ nodes and $m = 50$ in-sample time points. Five future time points are displayed that were not used to fit the models. The blue points denote in-sample edge probabilities estimated using the three methods. The red points denote the forecasted edge probabilities under GB-DASE using the method in Section~\ref{sec:forecasting}. The shaded regions indicate 95\% pointwise credible intervals and the black points indicate the true edge probabilities.}
    \label{fig:rw_comp}
\end{figure}

Figure~\ref{fig:rw_comp} compares the in-sample estimates and out-of-sample forecasts of the edge probability between two nodes from a binary network, that is, $\mathbb{E}_0(Y_{ij,t})$, using GB-DASE with RW($1$) and RW($2$) priors estimated using Algorithm~\ref{algo:gibbs} in Section~\ref{sec:estimation} and the forecasting procedure described in Section~\ref{sec:forecasting}. Overall, the GB-DASE estimates produce a smoother trajectory for the edge probability than the ASE estimates defined in Equation (\ref{eq:ase}). The estimates from GB-DASE under the RW($1$) prior are essentially a smoothed version of the ASE estimates, with ASE's spurious large fluctuations replaced with smaller bumps. This behavior underscores the generalized posterior's ability to combine ASE with smoothness-promoting priors. Furthermore, GB-DASE under an RW($2$) prior produces a continuous-looking curve. In general, the RW($2$) prior is suitable for modeling trajectories with long-term smooth variation, while the RW($1$) prior produces trajectories with shorter-term, less smooth variation~\citep{young2011}. Another crucial distinction between the two priors is how they extrapolate the edge probability forward in time using the forecasting method proposed in Section~\ref{sec:forecasting}. GB-DASE under an RW($1$) prior produces a constant forecast function, and GB-DASE under an RW($2$) prior produces a linear forecast function. Both can be reasonable forecasts depending on the dynamics of the system.

Another way to understand the behavior of GB-DASE under the proposed RW($r$) prior is to consider the maximum a posteriori (MAP) estimator, that is,
\[
    \hat{\bX}_{1:m}^{(\text{MAP})} = \argmin_{\bX_1, \dots, \bX_m \in \Reals{n \times d}} \left\{ \lambda \sum_{t=1}^m \frobnorm{\bY_t - \bX_t \bX_t^{\top}}^2  + (1/2)\sum_{i=1}^n \sum_{t=r+1}^m \sigma_i^{-2} \norm{\Delta^r \bx_{it}}_2^2\right\}.
\]
 Similar to many dynamic network embeddings, this estimator balances a cross-sectional cost that measures the goodness-of-fit to the observed network time-series with a longitudinal cost that penalizes temporal variation~\citep{zhu2016, chenli2018, liu2018}. However, a difficulty with existing cost-based embeddings is appropriately specifying the trade-off between the two costs, which is governed here by $\lambda$ and $\sigma_{1:n}$. 

 Under the generalized Bayesian framework, a natural solution to allow the generalized posterior to adapt to the underlying temporal variation in the data is to place priors over the node-wise variances. Here, we place independent half-Cauchy priors over the variances, that is, $\sigma_1, \dots, \sigma_n \iidsim C^{+}(0, 1)$, which has been widely advocated as a prior for variances in hierarchical models~\citep{gelman2006}. Furthermore, we show that these half-Cauchy priors allow the posterior to adapt to the latent trajectories' true variability in Section~\ref{sec:theory}. 

The improper RW($r$) prior may result in an improper generalized posterior. As such, we give the initial latent positions independent mean-zero Gaussian priors with identical variances $\sigma_0^2$ to ensure the priors are proper. In summary, the prior on $\bX_{1:m}$ has the density
\begin{align}\label{eq:x_prior}
    \Pi_{\text{RW}(r)}(&\bX_{1:m} \mid \sigma_{1:n}, \sigma_0) = \prod_{i=1}^n \prod_{t=1}^r (2\pi\sigma_0^2)^{-d/2} \exp\left(-\frac{1}{2\sigma_0^2} \norm{\bx_{it}}_2^2 \right) \times \nonumber \\
    &\qquad\qquad\qquad\qquad \prod_{t=r+1}^m \Pi_{\text{RW}(r)}(\bX_t \mid \bX_{t-1}, \dots, \bX_{t-r}, \sigma_{1:n}) \nonumber \\
     &\propto \prod_{i=1}^n \exp\left[-\frac{1}{2} \textrm{vec}(\bx_{i,1:m})^{\top} \left\{\left(\frac{1}{\sigma_i^2} \bD_{r}^{\top} \bD_{r} + \frac{1}{\sigma_0^2} \sum_{s=1}^r \mathbf{e}_s \mathbf{e}_s^{\top}\right) \otimes \mathbf{I}_d\right\} \textrm{vec}(\bx_{i,1:m})\right], 
\end{align}
where $\mathbf{D}_r$ is a $(m-r) \times m$ matrix representing the $r$-th order finite difference operator acting on $\bv \in \Reals{m}$ and $\mathbf{e}_s \in \Reals{m}$ is the $s$-th standard basis vector.

\subsection{Forecasting Future Networks}\label{sec:forecasting}

Forecasting future edges based on past data is a fundamental task in dynamic network analysis. Formally, our goal is to define a meaningful $k$-step ahead forecast for the network at time $m+k$, that is, $\bY_{m+k}$, for $k \geq 1$. Under the generalized Bayesian paradigm, a strategy to obtain $k$-step ahead forecasts is to rely on a $k$-step ahead predictive distribution constructed based on the generalized posterior~\citep{wu2021, loaiza2021, frazier2024}. However, unlike traditional Bayes, since we have not specified a complete statistical model for the networks, it is not clear how to define a meaningful predictive distribution. As such, to proceed, we assume the following general statistical model for the future networks: $\bY_{m+k}$ comes from an RDPG with  latent position matrix $\bX_{m+k} \in \Reals{n \times d}$, that is, $\bY_{m+k}$ has a marginal density $p_0(\bY_{m+k} \mid \bX_{m+k})$ such that $\mathbb{E}_0(\bY_{m+k}) = \bX_{m+k}\bX_{m+k}^{\top}$. 

Next, we define the {\it $k$-step ahead RW(r)-generalized  posterior predictive distribution} as 
\begin{align}\label{eq:gen_post_pred}
    Q_{\lambda}^{(r)}(&\bY_{m+k} \mid \bY_{1:m}) = \nonumber \\
    &\quad\int p_0(\bY_{m+k} \mid \bX_{m+k}) \Pi_{\text{RW}(r)}(\rmd \bX_{m+k} \mid \bX_{1:m}, \sigma_{1:n}) \Pi_{\lambda}^{(r)}(\rmd \bX_{1:m}, \rmd \sigma_{1:n} \mid \bY_{1:m}).
\end{align}
Here, $\Pi_{\lambda}^{(r)}(\bX_{1:m}, \sigma_{1:n} \mid \bY_{1:m})$ is the generalized posterior under the $r$-th order random walk and half-Cauchy priors defined in Section~\ref{subsec:prior}, and $\Pi_{\text{RW}(r)}(\bX_{m+k} \mid \bX_{1:m}, \sigma_{1:n})$ is the marginal density of $\bX_{m+k}$ under an $r$-th order Gaussian random walk with variances $\sigma_{1:n}$ propagated $k$-steps forward in time conditioned on the initial states $\bX_{1:m}$. Intuitively, the RW($r)$-generalized posterior predictive distribution breaks down the uncertainty in $\bY_{m+k}$ into three terms: (1) the uncertainty in $\bX_{1:m}$ and $\sigma_{1:n}$ given the observed networks captured by the generalized posterior $\Pi_{\lambda}^{(r)}(\bX_{1:m}, \sigma_{1:n} \mid \bY_{1:m})$, (2) the uncertainty in the future latent positions $\bX_{m+k}$ given $\bX_{1:m}$ and $\sigma_{1:n}$ represented by $\Pi_{\text{RW}(r)}(\bX_{m+k} \mid \bX_{1:m}, \sigma_{1:n})$, and (3) the uncertainty in the future network $\bY_{m+k}$ given $\bX_{m+k}$ defined by $p_0(\bY_{m+k} \mid \bX_{m+k})$.

Our strategy to obtain $k$-step ahead forecasts is to use the expectation of the $k$-step ahead RW($r$)-generalized posterior predictive distribution, that is, $\mathbb{E}_{Q_{\lambda}^{(r)}}(\bY_{m+k} \mid \Y_{1:m})$. To compute this forecasting function, we re-express Equation~(\ref{eq:gen_post_pred}) as follows:
\[
    Q_{\lambda}^{(r)}(\bY_{m+k} \mid \bY_{1:m}) = \mathbb{E}_{P_{\lambda}^{(r)}}\left\{p_0(\bY_{m+k} \mid \bX_{m+k}) \mid \bY_{1:m}\right\},
\]
where 
\begin{equation}\label{eq:lp_gen_post}
    P_{\lambda}^{(r)}(\bX_{m+k} \mid \bY_{1:m}) = \mathbb{E}_{\Pi_{\lambda}^{(r)}}\{\Pi_{\text{RW}(r)}(\bX_{m+k} \mid \bX_{1:m}, \sigma_{1:m}) \mid \bY_{1:m}\},
\end{equation}
which we call the {\it $k$-step ahead RW(r)-generalized posterior density of $\bX_{m+k}$}. It follows from Fubini's theorem that
\begin{align}
    \mathbb{E}_{Q_{\lambda}^{(r)}}(\bY_{m+k} \mid \bY_{1:m}) &= \int \bY_{m+k} \mathbb{E}_{P_{\lambda}^{(r)}}\left\{p_0(\bY_{m+k} \mid \bX_{m+k}) \mid \bY_{1:m}\right\} \rmd \bY_{m+k} \nonumber \\
    &=  \mathbb{E}_{P_{\lambda}^{(r)}}\left\{\int \bY_{m+k} \, p_0(\bY_{m+k} \mid \bX_{m+k}) \rmd \bY_{m+k} \ \middle| \ \bY_{1:m}\right\}  \nonumber \\
    &= \mathbb{E}_{P_{\lambda}^{(r)}}(\bX_{m+k} \bX_{m+k}^{\top} \mid \bY_{1:m}) \label{eq:gen_forecast}
\end{align}
under the assumption that $\mathbb{E}_0(\bY_{m+k}) = \bX_{m+k} \bX_{m+k}^{\top}$. As such, we can compute forecasts without specifying the density $p_0(\bY_{m+k} \mid \bX_{m+k})$. In addition, we can quantify the uncertainty in this forecast by constructing credible intervals based on $P_{\lambda}^{(r)}(\bX_{m+k} \mid \bY_{1:m})$.

\subsection{Inferring the Learning Rate $\lambda$}\label{subsec:lambda}

A barrier to the adoption of generalized Bayesian inference is the choice of the learning rate $\lambda$. Here, we demonstrate how to infer $\lambda$ based on the fact that, under an appropriate reparameterization,  $\lambda$ acts as the precision parameter of the data under a Gaussian pseudolikelihood. Following \citet{bissiri2016}, we incorporate $\lambda$ into the loss function and specify a hierarchical prior $\Pi(\lambda)$. Specifically, we define the augmented loss function $\ell(\bX_{1:m}, \, \lambda;\, \bY_{1:m}) = \lambda \sum_{t=1}^m \frobnorm{\bY_m - \bX_m\bX_m^{\top}}^2 - \xi \log(\lambda)$ resulting in the follow Gibbs posterior:
\[
    \Pi_{\tilde{\lambda}}^{(r)}(\bX_{1:m}, \sigma_{1:n}, \lambda \mid \bY_{1:m}) \propto \exp\left\{-\tilde{\lambda} \ell(\bX_{1:m}, \, \lambda; \, \bY_{1:m})\right\} \Pi(\lambda) \Pi_{\text{RW}(r)}(\bX_{1:m} \mid \sigma_{1:n}, \sigma_0) \Pi(\sigma_{1:n}),
\]
for some $\tilde{\lambda} > 0$ and $\xi \geq 0$. \citet{bissiri2016} observed that the subjective choice of $\tilde{\lambda}$ and $\xi$ is often easier than that of $\lambda$.

In this case, a natural default choice is to set $\tilde{\lambda} = 1/4$ and $\xi = n(n+1) m$, so that
\begin{align}\label{eq:gauss_gibbs_post} 
    &\exp\bigg\{-\tilde{\lambda} \ell(\bX_{1:m}, \, \lambda; \bY_{1:m})\bigg\} = \lambda^{n(n+1)m/4} \exp\bigg\{-\frac{\lambda}{4} \sum_{t=1}^m \frobnorm{\bY_m - \bX_m \bX_m^{\top}}^2\bigg\} \nonumber \\
    &=\lambda^{{n(n-1)}m/4} \exp\left\{-\frac{\lambda}{2} \sum_{t=1}^m \sum_{i < j} (Y_{ij,t} - \bx_{it}^{\top}\bx_{jt})^2\right\} \times \nonumber \\
    &\qquad\qquad (\lambda/2)^{nm/2} \exp\left\{-\frac{(\lambda/2)}{2}\sum_{i=1}^n \text{vec}(\bx_{i,1:m})^{\top}\text{vec}(\bx_{i,1:m})\right\},
\end{align}
where we used the fact that the adjacency matrices are hollow. In this way, the Gibbs posterior can be seen as a Bayesian update of a Gaussian likelihood with precision parameter $\lambda$ for the adjacency matrices' off-diagonal elements and $\lambda/2$ for the diagonal elements. In other words, the Gibbs posterior is an ``approximate'' Bayesian model under a Gaussian pseudolikelihood. Other choices for $\tilde{\lambda}$ and $\xi$ are possible. For example, setting $\tilde{\lambda} = \alpha /4$ and $\xi = n(n+1)m$ for $0 < \alpha < 1$ results in a fractional posterior~\citep{walker2001, bhattacharya2019} under a Gaussian pseudolikelihood, which could deflate the effect of the misspecified likelihood. To infer $\lambda$, we set its prior $\Pi(\lambda) = \textrm{Gamma}(a_{\lambda}, b_{\lambda})$, which leads to a simple Gibbs sampling step outlined in Section~\ref{sec:estimation}. 


\section{Theoretical Guarantees}\label{sec:theory}

Here, we establish the consistency and contraction rate of generalized Bayesian dynamic adjacency spectral embedding under first-order Gaussian random walk priors. Formally, our theoretical results concern the frequentist properties of the following Gibbs posterior distribution for $\bX_{1:m}$ with a fixed $\lambda > 0$:
\begin{align*}
    \Pi_{\lambda}(\bX_{1:m} \in \mathcal{A} \mid \bY_{1:m}) = \frac{N_{nm}(\mathcal{A})}{D_{nm}}
\end{align*}
where
\begin{align*}
    N_{nm}(\mathcal{A}) = \int_{\mathcal{A}} \prod_{t=1}^m \frac{e^{-\lambda \frobnorm{\bY_t - \bX_t \bX_t^{\top}}^2}}{e^{-\lambda  \frobnorm{\bY_t - \bX_{0t}\bX_{0t}^{\top}}^2}} \Pi_{\text{RW}(1)}(\rmd\bX_{1:m} \mid \sigma_0)\quad  \text{and} \quad D_{nm} = N_{nm}(\otimes_{t=1}^m \Reals{n \times d}),
\end{align*}
for any measurable set $\mathcal{A} \subset \otimes_{t=1}^m \Reals{n \times d}$. In the previous definition, we use $\Pi_{\textrm{RW}(1)}(\bX_{1:m} \mid \sigma_0)$ to denote the marginal prior distribution of the latent position matrices under the first-order random walk priors, that is, $\Pi_{\textrm{RW}(1)}(\bX_{1:m} \mid \sigma_0) = \int \Pi_{\textrm{RW}(1)}(\bX_{1:m} \mid \sigma_{1:n}, \sigma_0) \Pi(\rmd \sigma_{1:n})$.

To proceed, we assume that the observed networks are generated from a general low-rank signal matrix plus sub-Gaussian noise matrix model. Namely, that 
\[
    \bY_t = \bX_{0t} \bX_{0t}^{\top} + \bE_t \qquad (t = 1, \dots, m),
\]
where $\bX_{01}, \dots, \bX_{0m} \in \Reals{n \times d}$ and the noise matrices $\bE_1, \dots, \bE_m \in \Reals{n \times n}$ are marginally mean-zero sub-Gaussian matrices with a common variance proxy $\tau > 0$, that is, for all $\mathbf{A} \in \Reals{n \times n}$ with $\frobnorm{\mathbf{A}}^2 = 1$ and all $s > 0$, $\mathbb{P}_0(\abs{\tr(\mathbf{A}^{\top} \mathbf{E}_t)} \geq s) \leq e^{-\tau s^2}$ for $t = 1, \dots, m$. This data-generating process implies that $\mathbb{E}_0(\bY_t) = \bX_{0t} \bX_{0t}^{\top}$ for $1 \leq t \leq m$. Furthermore, this model allows for some forms of cross-sectional and longitudinal edge dependence conditioned on the latent position matrices. In particular, we do not make any assumptions about the noise matrices' temporal dependence; however, this comes at the expense of requiring $m = O\{\log(n)\}$ in the subsequent results. Lastly, this model reduces to an edge-independent Bernoulli RDPG when the elements of the noise matrices are centered Bernoulli random variables, that is, $E_{ij,t} \indsim \Bern(\bx_{0it}^{\top}\bx_{0jt}) - \bx_{0it}^{\top}\bx_{0jt}$ for $1 \leq i \leq j \leq n$ and $1 \leq t \leq m$.

Next, we define an appropriate parameter space for the true latent position matrices that reflects a smooth evolution of the latent trajectories over time. Specifically, we assume the true latent position matrices are in the following pointwise adaptive dependence (PAWD) parameter space introduced in \citet{zhao2024}:
\begin{align*}
    \textrm{PWAD}(\bL) = \bigg\{& \bX_1, \dots, \bX_m \in \Reals{n \times d} \, : \\
    &\max_{1 \leq t \leq m} \norm{\bx_{it}}_2 \leq C_{\bx}, \quad \max_{2 \leq t \leq m } \norm{\bx_{it} - \bx_{i(t-1)}}_2 \leq \frac{dL_i}{nm} \quad (i = 1, \dots, n) \bigg\},
\end{align*}
where $\bL = (L_1, \dots, L_n)^{\top} \in \Reals{n}$ such that $\bL \succeq 0$ and $C_{\bx} > 0$ is a constant independent of $n$ and $m$. The condition that the Euclidean norm of the latent positions is bounded is trivially satisfied by Bernoulli RDPGs since the edge probabilities must lie between zero and one.

The following theorem shows that, under the aforementioned conditions, the generalized posterior distribution is consistent under the average Frobenius norm with a contract rate that adapts to the underlying variation of the latent position matrices. The proof is largely based on the mathematical tools developed in \citet{xie2020} and \citet{zhao2024}.
\begin{theorem}\label{thm:post_con}
    Let $\bY_1, \dots, \bY_m$ be symmetric random $n \times n$ matrices and for each $t \in \set{1, \dots, m}$ let $\mathbb{E}_0(\bY_t) = \bX_{0t}\bX_{0t}^{\top}$ from some $\bX_{0t} \in \Reals{n \times d}$, where $d/n \rightarrow 0$ and $m = O\{\log(n)\}$. Assume that $\set{\bX_{0t}}_{t=1}^m \in \mathrm{PAWD}(\bL)$ and $\bX_{0t}$ is full-rank with minimum singular value $\tilde{\sigma}_{\mathrm{min}}(\bX_{0t}) \geq \tilde{\sigma} \sqrt{n}$ for some constant $\tilde{\sigma} > 0$, for each $t = 1, \dots, m$. Define the contraction rate
    \[
        \epsilon_{nm} = \frac{d^{1/2} \norm{\bL}_{\infty}^{1/3}}{m^{1/3}n^{2/3}} + \sqrt{\frac{d\log nm}{nm}},
    \]
    where $\norm{\bL}_{\infty} = O\{nm \log(nm)\}$. If $\set{\bY_t - \mathbb{E}_0(\bY_t)}_{t=1}^m$ are marginally mean-zero sub-Gaussian random matrices with a common variance proxy $\tau$, then under the first-order random walk prior $\Pi_{\mathrm{RW}(1)}( \bX_{1:m} \mid \sigma_{1:n}, \sigma_0)$ on $\set{\bX_t}_{t=1}^m$ and the half-Cauchy priors on the node-wise variances $\sigma_1, \dots, \sigma_n \iidsim C^{+}(0,1)$, there exists some constants $M_1, M_2 > 0$ and a constant $C_{\tau,\lambda}$ only depending on $\tau, \lambda$, $C_{\bx}$, and $\tilde{\sigma}$, such that 
    \begin{align*}
        \mathbb{E}_0\left\{\Pi_{\lambda}\left(\max_{1 \leq t \leq m} \frobnorm{\bX_t\bX_t^{\top} - \bX_{0t}\bX_{0t}^{\top}}  > M_1  n \epsilon_{nm}  \ \middle| \ \bY_{1:m} \right)\right\} &\leq 2 \exp(-C_{\tau,\lambda} M_1 d n^2 \epsilon_{nm}^2), \\
        \mathbb{E}_0\left\{\Pi_{\lambda}\left(\max_{1 \leq t \leq m} \inf_{\bW_t \in \mathcal{O}_d} \frobnorm{\bX_t - \bX_{0t} \bW_{t}}^2 > M_2  n \epsilon_{nm}^2  \ \middle| \ \bY_{1:m} \right)\right\} &\leq 2 \exp(-C_{\tau,\lambda} M_2 d n^2 \epsilon_{nm}^2)
    \end{align*}
    for sufficiently large $n$ and $m$.
\end{theorem}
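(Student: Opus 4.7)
The plan is to adapt the posterior contraction machinery of Ghosal-Shen-van der Vaart to the Gibbs posterior setting, following the template used in \citet{xie2020} for static RDPGs and \citet{zhao2024} for dynamic latent space models. Three ingredients are needed: (i) a high-probability lower bound on the denominator $D_{nm}$ via a prior-mass argument, (ii) an expectation upper bound on $N_{nm}(\mathcal{A})$ when $\mathcal{A}$ is the complement of a Frobenius-norm neighborhood of the truth, and (iii) a matrix perturbation step translating the rate on Gram matrices into a Procrustes rate for the latent positions themselves.

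First I would establish the denominator bound. Conditional on $\sigma_{1:n}$, the RW(1) prior is Gaussian with a tridiagonal-type precision per node, so the prior mass on a neighborhood $B_{nm} = \set{\bX_{1:m} : \max_{1 \leq t \leq m} \frobnorm{\bX_t - \bX_{0t}}^2 \leq c\, n \epsilon_{nm}^2}$ can be controlled in closed form. For a node with total-variation budget $L_i$, balancing squared bias of order $L_i^2/(nm \sigma_i^2)$ against random-walk variance of order $\sigma_i^2 m$ yields the optimal scale $\sigma_i^\star \asymp (L_i/(nm))^{2/3}$, which produces the exponent $1/3$ on $\norm{\bL}_\infty$ in the contraction rate. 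Because the half-Cauchy density is strictly positive and only polynomially decaying at the origin and at infinity, it places enough mass on a neighborhood of $\sigma^\star = (\sigma_1^\star, \dots, \sigma_n^\star)$ that $-\log \Pi_{\mathrm{RW}(1)}(B_{nm}) = O(d n^2 \epsilon_{nm}^2)$. Combining this with Jensen's inequality applied to $\log D_{nm}$ and sub-Gaussian concentration of the noise cross-terms $\tr(\mathbf{A}^\top \bE_t)$ then gives $D_{nm} \geq \exp(-C_1 d n^2 \epsilon_{nm}^2)$ with $\mathbb{P}_0$-probability at least $1 - \exp(-C_1 d n^2 \epsilon_{nm}^2)$.

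For the numerator, fix $\mathcal{A} = \set{\max_t \frobnorm{\bX_t \bX_t^\top - \bX_{0t}\bX_{0t}^\top} > M n \epsilon_{nm}}$ and use the decomposition
\[
    \sum_{t=1}^m \frobnorm{\bY_t - \bX_t \bX_t^\top}^2 - \sum_{t=1}^m \frobnorm{\bY_t - \bX_{0t}\bX_{0t}^\top}^2 = \sum_{t=1}^m \frobnorm{\bX_t\bX_t^\top - \bX_{0t}\bX_{0t}^\top}^2 - 2\sum_{t=1}^m \tr\set{(\bX_t\bX_t^\top - \bX_{0t}\bX_{0t}^\top)^\top \bE_t}.
\]
On $\mathcal{A}$ the signal term is at least of order $M^2 n^2 \epsilon_{nm}^2$, while the cross-term is sub-Gaussian with variance proxy of order $M^2 n^2 \epsilon_{nm}^2/\tau$ for each fixed $\bX_{1:m}$. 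A peeling argument over Frobenius shells $\set{2^k M n \epsilon_{nm} \leq \max_t \frobnorm{\bX_t\bX_t^\top - \bX_{0t}\bX_{0t}^\top} \leq 2^{k+1} M n \epsilon_{nm}}$, together with the envelope constraint $\max_t \norm{\bx_{it}}_2 \leq C_\bx$ inherited from PAWD (enforced on a sieve that the prior charges with overwhelming mass), yields $\mathbb{E}_0\set{N_{nm}(\mathcal{A})} \leq \exp(-C_2 \lambda M d n^2 \epsilon_{nm}^2)$ for $M$ sufficiently large. The $\log(nm)$ factor inside $\epsilon_{nm}$ is absorbed via a union bound over the $m$ time points, which is why the assumption $m = O\set{\log(n)}$ enters.

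Combining the two bounds and applying Markov's inequality gives the first claimed rate. For the second claim, the singular-value assumption $\tilde{\sigma}_{\min}(\bX_{0t}) \geq \tilde{\sigma}\sqrt{n}$ invokes a Procrustes-type inequality of the form $\inf_{\bW_t \in \mathcal{O}_d} \frobnorm{\bX_t - \bX_{0t} \bW_t}^2 \leq C \tilde{\sigma}^{-2} n^{-1} \frobnorm{\bX_t\bX_t^\top - \bX_{0t}\bX_{0t}^\top}^2$, so that the Gram-matrix radius $n \epsilon_{nm}$ squares and divides by $n$ to yield the Procrustes radius $n \epsilon_{nm}^2$. The main obstacle will be the prior-mass step: the half-Cauchy hyperprior on $\sigma_{1:n}$ must be shown to adapt simultaneously to the unknown smoothness vector $\bL$, producing the exponent $1/3$ in $\norm{\bL}_\infty^{1/3}$ rather than a rate that depends explicitly on the hyperparameters. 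This requires a node-by-node bias-variance balance that holds uniformly across $1 \leq i \leq n$ and is the technical heart of the argument.
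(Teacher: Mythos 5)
Your overall architecture matches the paper's: a prior small-ball bound for the denominator driven by a bias--variance balance in the transition scale $\sigma$, with the half-Cauchy hyperprior supplying enough mass near the optimal scale; an upper bound on the numerator over the complement of a Gram-matrix neighborhood; and Lemma 5.4 of Tu et al.\ to convert the Gram-matrix rate into the Procrustes rate using $\tilde{\sigma}_{\min}(\bX_{0t}) \geq \tilde{\sigma}\sqrt{n}$. The gap is in your numerator step. Bounding $\mathbb{E}_0\{N_{nm}(\mathcal{A})\}$ by Fubini and the sub-Gaussian moment generating function gives an integrand of order $\exp\{-\lambda(1 - c\lambda/\tau)\sum_t \frobnorm{\bX_t\bX_t^{\top}-\bX_{0t}\bX_{0t}^{\top}}^2\}$, which is vacuous unless $\lambda \lesssim \tau$; the theorem is stated for an arbitrary fixed $\lambda>0$ with constants merely depending on $(\tau,\lambda)$, so this route does not deliver the claim as stated. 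Worse, factoring $\mathbb{E}_0\prod_t e^{2\lambda\frobket{\bE_t,\,\cdot\,}}$ into a product of per-$t$ moment generating functions requires temporal independence of the noise matrices, which the theorem deliberately does not assume (the paper only imposes marginal sub-Gaussianity, and pays for this with $m=O\{\log n\}$). The paper instead conditions on the event $\mathcal{E}_{nm}(\alpha)$ that the normalized cross-terms $\sup_{\bX_t}\abs{\frobket{\bE_t,(\bX_t\bX_t^{\top}-\bX_{0t}\bX_{0t}^{\top})/\frobnorm{\bX_t\bX_t^{\top}-\bX_{0t}\bX_{0t}^{\top}}}}$ are uniformly small (Lemma E.1 of Xie, 2020, applied marginally to each $t$ with a union bound --- this, not the $\log(nm)$ in $\epsilon_{nm}$, is where $m=O(\log n)$ enters), and then bounds the numerator integrand deterministically on that event via $ab \leq 2a^2 + b^2/8$, which absorbs the cross-term into the quadratic signal term for every $\lambda>0$. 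With that uniform concentration in hand, your peeling over Frobenius shells and the auxiliary sieve become unnecessary.

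Two smaller points. First, your claimed optimal scale $\sigma_i^{\star}\asymp (L_i/(nm))^{2/3}$ is not what your own balance produces and is not the paper's choice; equating the drift cost $d^2L_i^2/(n^2m\sigma^2)$ with the small-ball cost $m\sigma^2/\epsilon_{nm}^2$ gives $\sigma^{*2}\asymp \epsilon_{nm}\, d\tilde{L}/(nm)$ with $\tilde{L}=\max\{\norm{\bL}_{\infty},\log^{3/2}(nm)\sqrt{n/m}\}$, and it is the resulting requirement $n^2m\epsilon_{nm}^2\gtrsim \max\{\tilde{L}/\epsilon_{nm},\, n\log(1/\epsilon_{nm})\}$ that yields both the $\norm{\bL}_{\infty}^{1/3}$ exponent and the $\sqrt{\log(nm)/(nm)}$ term (the latter comes from the prior mass of the initial positions, not from a union bound over time). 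Second, the adaptation you describe as the ``technical heart'' is handled in the paper by a single common scale $\sigma^*$ for all nodes together with the polynomial lower bound on the half-Cauchy density near $\sigma^*$, costing only $n\log(1/\sigma^{*2})+n\sigma^{*2}\lesssim dn^2m\epsilon_{nm}^2$; no node-by-node adaptation is needed because the rate is expressed through $\norm{\bL}_{\infty}$ rather than the individual $L_i$.
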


Theorem~\ref{thm:post_con} demonstrates that the generalized posterior concentrates its mass around $\set{\mathbb{E}_0(\bY_t)}_{t=1}^m$ and the true latent positions up to a collection of orthogonal transformations. These orthogonal alignments are necessary since the latent positions are only identified up to an orthogonal transformation, as discussed in Section~\ref{sec:rdpg}. Moreover, we require an often assumed minimum singular value condition~\citep{xie2020, ma2020} that $\tilde{\sigma}_{\text{min}}(\bX_{0t}) \geq \tilde{\sigma} \sqrt{n}$ for a positive constant $\tilde{\sigma} > 0$ and $t = 1,\dots,m$. If the true latent positions are viewed as random, this condition holds when the entries of $\bX_{0t}$ are i.i.d.~mean-zero random variables with bounded variances for all $1 \leq t \leq m$ and $d \ll n$.

The contraction rate in Theorem~\ref{thm:post_con} is rate-optimal up to a logarithmic factor for independent-edge Gaussian and Bernoulli dynamic RDPGs; see Theorem~2 of \citet{zhao2024} for the minimax rate for Gaussian RDPGs, but their proof is easily adapted to Bernoulli RDPGs. Moreover, the contraction rate is adaptive to the true latent trajectories' variability since the rate is an increasing function of $\norm{\mathbf{L}}_{\infty}$, implying smoother latent trajectories lead to better rates. However, the rate is at most $\{\log(nm)/nm\}^{1/2}$, which is the minimax rate up to a logarithmic factor for recovering a static set of latent positions given $m$ networks~\citep{xie2020}. Moreover, the rate improves with the number of time points $m$, which demonstrates the generalized posterior's ability to locally pool information across networks. When the latent trajectories are smooth, this behavior motivates using GB-DASE to estimate the latent trajectories instead of using separate ASEs at each time point~\citep{passino2021, athreya2024},  since the latter approach's convergence rate does not improve with $m$.

\section{Computationally Efficient MCMC}\label{sec:estimation}

We propose a simple and efficient Markov chain Monte Carlo (MCMC) algorithm to sample from the Gibbs posterior outlined in Section~\ref{subsec:lambda} under the random walk priors in (\ref{eq:x_prior}). Specifically, we develop a Gibbs sampling scheme using numerical techniques for sparse matrices that can sample the latent position matrices $\bX_{1:m}$ in $O(E)$ time, where $E = \sum_{t=1}^m \sum_{1\leq i < j \leq n} 1\set{y_{ij,t} \neq 0}$ is the number of observed edges in the dynamic network.  This runtime is substantially faster than existing samplers for discrete-time dynamic latent space models, which have runtimes that scale with the number of dyads in the network, that is, $O(mn^2)$. The proposed algorithm is faster because real-world networks are often sparse, meaning $E \ll mn^2$. The Gibbs sampling algorithm is outlined in Algorithm~\ref{algo:gibbs}. Appendix~\ref{sec:mcmc_details} of the supplement contains a derivation of the sampler and details on parameter initialization and post-processing of the posterior samples to account for the latent positions' rotation invariance discussed in Section~\ref{sec:rdpg}. In what follows, we derive the claimed $O(E)$ time complexity of sampling the latent positions, that is, Step~1 in Algorithm~\ref{algo:gibbs}.

\begin{sampler}
    Given initial parameters $\btheta^{(0)} = \{\bX_{1:m}^{(0)}, \sigma_{1:n}^{2 \, (0)}, \nu_{1:n}^{(0)}, \lambda^{(0)}\}$, precompute the following sparse-banded matrices with bandwidths $rd$, $0$, and $d-1$, respectively:
    \[
        \mathbf{K}_1 = \bD_r^{\top} \bD_r \otimes \mathbf{I}_d, \quad \mathbf{K}_2 = (1/\sigma_0^2) \sum_{s=1}^r \mathbf{e}_s \mathbf{e}_s^{\top} \otimes \mathbf{I}_d, \quad \bR = 
        \sum_{i=1}^n \diag(\bx_{i1}^{(0)}\bx_{i1}^{(0) \, \top}, \dots, \bx_{im}^{(0)}\bx_{im}^{(0) \, \top}).
    \]
    For $s = 1 \dots, S$, set $\btheta^{(s)} = \btheta^{(s-1)}$ and then do the following:
    \begin{enumerate}
        \item For each $i = 1, \dots, n$, sample node $i$'s latent trajectory \\
            $\text{vec}(\bx_{i,1:m}) \sim N\{\text{vec}(\bmu_{i,1:m}), \mathbf{P}_i^{-1}\}$ as follows:
        \begin{enumerate}
            \item Calculate the sparse-banded precision matrix 
            \[
                    \bP_i = (1/\sigma_i^{(s)}) \, \mathbf{K}_1 + \mathbf{K}_2 + (\lambda^{(s)}/2) \, \mathbf{I}_{md} + \lambda^{(s)} \, \bR_i 
\]
where 
        \[ 
                    \bR_i = \bR - \diag(\bx_{i1}^{(s)}\bx_{i1}^{(s) \, \top}, \dots, \bx_{im}^{(s)}\bx_{im}^{(s)\, \top}).
        \]
    \item Obtain $\bP_i$'s Cholesky factor $\bL_i$ such that $\bL_i \bL_i^{\top} = \bP_i$. 
    \item Solve \begin{equation}\label{eq:normal_eq}
            \bL_i \bL_i^{\top} \text{vec}(\bmu_{i,1:m}) = \lambda^{(s)} \begin{pmatrix}
                \sum_{j \in \mathcal{N}_{i1}} \bx_{j1}^{(s)} y_{ij,1} \\
        \vdots \\
                \sum_{j \in \mathcal{N}_{im}} \bx_{jm}^{(s)} y_{ij,m}
\end{pmatrix}
    \end{equation}
        by forward-substitution and backward-substitution to obtain $\text{vec}(\bmu_{i,1:m})$.
    \item Sample $\bz_i \sim N(\mathbf{0}_{md}, \mathbf{I}_{md})$ and solve $\bL_i \bu_i = \bz_i$ for $\bu_i$ by back-substitution, so $\bu_i \sim N(\mathbf{0}_{md}, \bP_i^{-1})$. 
    \item Set $\text{vec}(\bx_{i,1:m}^{(s)}) = \text{vec}(\bmu_{i,1:m}) + \bu_i$.
    \item Update $\mathbf{R} = \mathbf{R}_i + \diag(\bx_{i1}^{(s)}\bx_{i1}^{(s) \, \top}, \dots, \bx_{im}^{(s)}\bx_{im}^{(s) \, \top})$.
        \end{enumerate}
    \item For each $i = 1, \dots, n$, sample the $i$-th node's variance parameters as follows:
        \begin{enumerate}
            \item $\sigma_i^{2 \, (s)} \sim \text{Inverse-Gamma}[\{(m-r)d + 1\}/2, (1/2)\norm{\sum_{t=r}^m\Delta^r \bx_{it}^{(s)})}_2^2 + (1/\nu_i^{(s)})]$.
            \item $\nu_i^{(s)} \sim \text{Inverse-Gamma}\{1, 1 + (1/\sigma_i^{(s)})\}$.
        \end{enumerate}
    \item Sample $\lambda^{(s)} \sim \text{Gamma}(a_{\lambda} + n(n+1)m/4, b_{\lambda} + (1/4) \sum_{t=1}^m \frobnorm{\bY_t - \bX_t^{(s)} \bX_t^{(s) \, \top}}^2)$.
    \end{enumerate}
    \caption{Gibbs sampler for GB-DASE. We set $a_{\lambda} = b_{\lambda} = 10^{-3}$ to induce a broad prior.}
    \label{algo:gibbs}
\end{sampler}

Step 1 in Algorithm~\ref{algo:gibbs} outlines the procedure for sampling each latent trajectory $\bx_{i, 1:m} = (\bx_{i1}, \dots, \bx_{im}) \in \Reals{d \times m}$ efficiently from its Gaussian full-conditional using the sparsity of the precision matrices and the observed network. First, the precision matrix $\mathbf{P}_i$ and the right-hand side of Equation~(\ref{eq:normal_eq}) can be computed in $O(md^2 + d \sum_{t=1}^m \abs{\mathcal{N}_{it}})$ time, where $\mathcal{N}_{it} = \set{j \, : \, y_{ij,t} \neq 0}$ is the neighborhood of node $i$ at time $t$. This time follows from the fact that the precision matrix $\bP_i$ can be calculated in $O(md^2)$ time by appropriate caching and sparse matrix addition, see Step (1.a) and Step (1.f) in Algorithm~\ref{algo:gibbs}. Also, the right side of Equation~(\ref{eq:normal_eq}) explicitly accounts for the network's sparsity and can be computed in $O(d \sum_{t=1}^m \abs{\mathcal{N}_{it}})$ time. 

Next, we use the sparse linear algebra routines proposed in \citet{rue2001} to avoid costly matrix inversions and sample from each latent trajectory's Gaussian full-conditional distribution in $O(md^3)$ time. These steps are outlined in Steps (1.b)--(1.e) of Algorithm~\ref{algo:gibbs}. Note that $\bP_i \in \Reals{md \times md}$ is a banded matrix with bandwidth $rd$, which follows from the fact that $\bP_i$ is a sum of four banded matrices with bandwidths $rd$, $0$, $0$, and $d - 1$, respectively. As such, we perform Step (1.b) in $O(mr^2d^3)$ operations using Algorithm 4.3.5 in \citet{golub1996}, e.g., we use {\tt cholesky\_banded} in SciPy~\citep{scipy2020} version 1.12.0. Furthermore, the Cholesky factor $\bL_i$ is also a banded matrix with lower bandwidth $rd$ by Theorem 4.3.1 in \citet{golub1996}. As such, the system in Step (1.c) can be solved in $O(mrd^2)$ operations using Algorithms 4.3.2 and 4.3.3 in \citet{golub1996}, e.g., we use {\tt cho\_solve\_banded} in SciPy version 1.12.0. Similarly, the back-substitution in Step (1.d) can be performed in $O(mrd^2)$ operations, e.g., we use {\tt spsolve\_triangular} in SciPy version 1.12.0. As such, the overall cost of Step 1 is $O(nmd^3 + nmd^2 + Ed)$, which is $O(E)$ since $nm \leq E$ in most networks, that is, all nodes have at least one edge at all time points. 

\section{Simulation Study}\label{sec:simulation}


We present a simulation study comparing GB-DASE with two established ASE-based methods for estimating dynamic RDPGs in terms of recovering the latent trajectories. An additional simulation study comparing the $k$-step ahead forecasting performance of GB-DASE with four ASE-based competitors is in Appendix~\ref{sec:forecast_sim} of the supplement.

The first ASE-based competitor is the traditional ASE applied separately to each adjacency matrix, that is, $\hat{\bX}_{1:m}^{(\text{ASE})}$ defined in Equation~(\ref{eq:ase}). The second ASE-based method is the omnibus embedding (OMNI)~\citep{levin2017}, which finds a separate embedding at each time point $\hat{\bX}_{1:m}^{(\text{OMNI})}$ by applying ASE to the $mn \times mn$ omnibus matrix, that is, the block matrix with the ($t, s$)-th block equal to $(\bY_t + \bY_s)/2$ for $1 \leq t,s \leq m$. We estimated GB-DASE with RW(1) and RW(2) priors using 2,500 samples from Algorithm~\ref{algo:gibbs} after a burn-in of 2,500 iterations and used the posterior means as point estimates. For all estimators, we assume oracle knowledge of the latent space dimension $d$. 

We generate synthetic networks from a Bernoulli dynamic RDPG with independent edges and a latent space dimension $d = 2$, so that
\begin{equation}\label{eq:bern_rdpg}
    Y_{ij,t} \indsim \Bern(\bx_{it}^{\top}\bx_{jt}) \qquad (1 \leq i < j \leq n, \ 1 \leq t \leq m).
\end{equation}
To produce latent trajectories that lie in $\mathcal{X} = \{\bx \in \Reals{d}\, : \, \norm{\bx}_2 \leq 1, \bx \succeq 0\}$, we set $\bx_{it} = \rho \, d^{-1/2} \, \sigma\{\tilde{\bx}_i(t)\}$ for $1 \leq i \leq n$ and $1 \leq t \leq m$, where the scalar $\rho \in [0,1]$ and $\sigma(\cdot)$ is the standard logistic function applied component-wise to the value at time $t$ of a $d$-dimensional vector-valued function $\tilde{\bx}_{i}(s) = (x_{i,1}(s), \dots, x_{i,h}(s))^{\top} : [0,m] \rightarrow \Reals{d}$. In the subsequent experiments, we set $\rho$ to fix the overall expected density of the networks. We sample $\tilde{x}_{i,h}(\cdot) \iidsim \text{GP}(0, C)$ for $1 \leq i \leq n$ and $1 \leq h \leq d$, where $\text{GP}(0, C)$ denotes a mean-zero Gaussian process with covariance function $C(t, t')$. To generate latent trajectories with different levels of smoothness, we use a Mat\'ern covariance function
\[
    C(t,t') =  \frac{a^2}{\Gamma(\nu)2^{\nu - 1}} \left(\frac{\sqrt{2\nu} \abs{t - t'}}{b} \right)^{\nu} K_{\nu}\left(\frac{\sqrt{2\nu} \abs{t - t'}}{b}\right),
\]
where $a,b,\nu > 0$ and $K_{\nu}(\cdot)$ is the modified Bessel function of the second kind, which results in sampled functions that are $\lceil \nu \rceil - 1$ times mean-squared differentiable~\citep{rasmussen2006}. We set the standard deviation $a = \sqrt{5}$ and length scale $b = m/3$, which results in the latent trajectories displayed in Figure~\ref{fig:trajectories_exp} of Appendix~\ref{sec:vis_sim} of the supplement. We evaluate the estimation error using the following root-mean-squared error (RMSE):
\[
    \text{RMSE}_{\bX_{1:m}} = \left\{ \frac{1}{mnd} \sum_{t=1}^m \inf_{\mathbf{W}_t \in \mathcal{O}_d} \frobnorm{\bX_t - \hat{\bX}_t \mathbf{W}_t}^2\right\}^{1/2},
\]
which closely matches the error bounded by Theorem~\ref{thm:post_con}. In all experiments, we recorded the errors averaged over 50 independent replications. Appendix~\ref{sec:vis_sim} of the supplement contains visualizations of the estimated and true latent trajectories for a single replication to provide a more nuanced picture of the performance differences between the estimators.

Figures~\ref{fig:recovery_nodes} and \ref{fig:recovery_times} report the results for networks with expected edge densities of $0.1$, $0.2$, and $0.3$ and relatively smooth ($\nu = 2.5$) and erratic ($\nu = 0.5$) latent trajectories. In Figure~\ref{fig:recovery_nodes}, we vary the number of nodes $n \in \set{25, 50, 100, 200, 400}$ for a fixed $m = 50$. In Figure~\ref{fig:recovery_times}, we vary the number of time points $m\in \set{10, 25, 50, 100, 200}$ for a fixed $n = 50$.  In all settings, the GB-DASE methods perform best. 

\begin{figure}[tbp]
\centering \includegraphics[height=0.35\textheight, width=\textwidth, keepaspectratio]{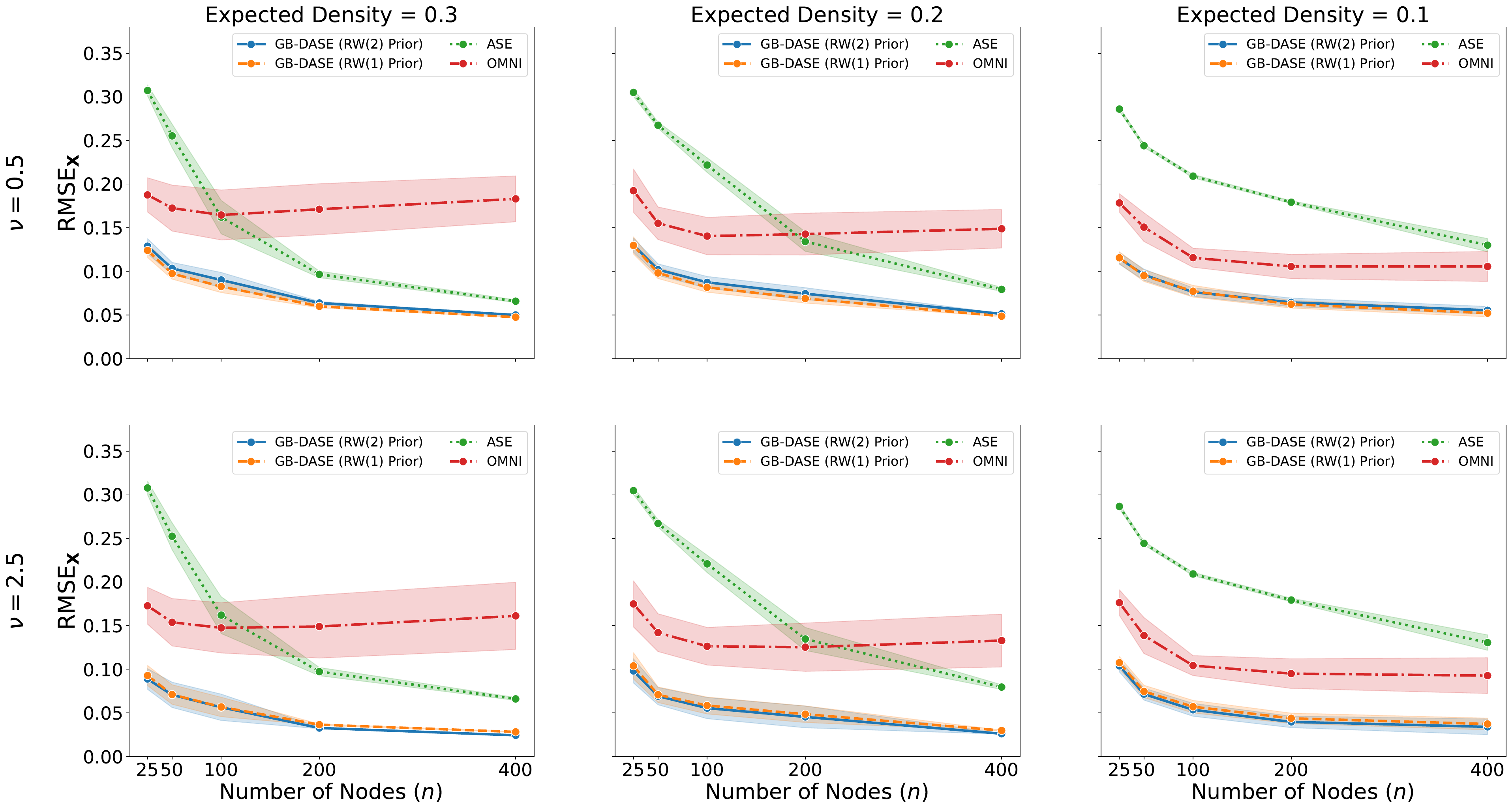}
    \caption{RMSEs for a fixed $m = 50$, varying $n$, expected edge densities, and $\nu = 0.5$ (Top) and $\nu = 2.5$ (Bottom). The curves and shaded regions indicate averages and one standard deviations over 50 independent replicates, respectively.}
    \label{fig:recovery_nodes}
\end{figure}

\begin{figure}[tbp]
\centering \includegraphics[height=0.35\textheight, width=\textwidth, keepaspectratio]{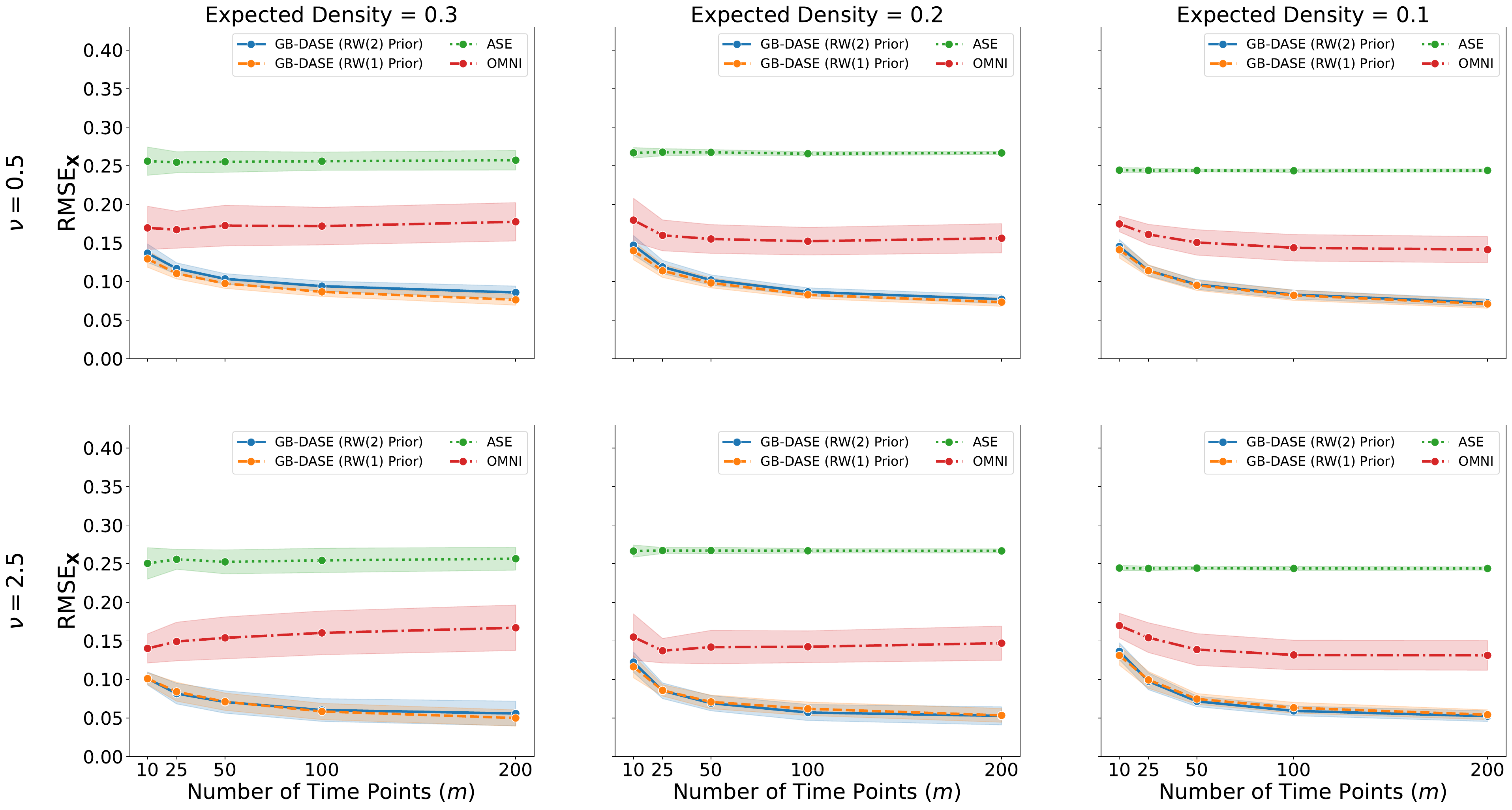}
    \caption{RMSEs for a fixed $n = 50$, varying $m$, expected edge densities, and $\nu = 0.5$ (Top) and $\nu = 2.5$ (Bottom). The curves and shaded regions indicate averages and one standard deviations over 50 independent replicates, respectively.}
    \label{fig:recovery_times}
\end{figure}

In Figure~\ref{fig:recovery_nodes}, all methods show improvement as $n$ increases, with OMNI and ASE having the slowest and fastest improvement rates, respectively. However, they never outperform the GB-DASE methods for any value of $n$ considered. Moreover, ASE gets substantially worse as the expected density decreases. In contrast, OMNI improves as the expected density shrinks. The GB-DASE methods become slightly worse as the expected density decreases, but always outperform the competitors. Overall, OMNI and GB-DASE are less sensitive to the expected density since they can compensate for the lower signal-to-noise ratio by pooling information across time points. OMNI and GB-DASE slightly improve as the latent trajectories become smoother, that is, as $\nu$ increases. The ASE's performance is relatively insensitive to $\nu$ since it does not make any assumptions about the latent trajectories' smoothness. Regardless, GB-DASE outperforms ASE for all values of $\nu$ considered. Lastly, when the signal-to-noise ratio is high enough, GB-DASE with an RW(2) prior slightly improves upon GB-DASE with an RW(1) prior for $\nu = 2.5$ and vice versa for $\nu = 0.5$, which is expected since the RW(2) prior produces smoother estimates than the RW(1) prior, as elaborated on in Section~\ref{subsec:prior}.

In Figure~\ref{fig:recovery_times}, the GB-DASE estimators improve as $m$ increases. In contrast, the ASE does not improve with increasing $m$, and OMNI's performance slightly improves before either leveling off or degrading. This phenomenon occurs because OMNI's global information sharing causes it to progressively over-smooth local changes in the latent trajectories as $m$ increases. The behavior of each estimator's performance as a function of $\nu$ is similar to  Figure~\ref{fig:recovery_nodes} discussed in the previous paragraph.

\section{Application to Forecasting International Conflicts}\label{sec:real_data}

We employ the proposed methodology to analyze a dynamic network of international conflicts between nations derived from the POLECAT global political event dataset~\citep{halterman2023}. After preprocessing, we have a dynamic network composed of the $n = 50$ most active nations or global organizations during the $m = 161$ weeks from January 2021 through January 2024. In this network, an edge ($y_{ij,t} = 1$) indicates that at least one material conflict, as defined by the PLOVER ontology~\citep{halterman2023b}, occurred between nation $i$ and nation $j$ during the $t$-th week. Under the PLOVER ontology, material conflicts encompass occurrences of economic sanctions, cyberattacks, military engagements, and other similar events.

The goal of this application is to compare GB-DASE with other ASE-based methods based on their goodness-of-fit to the dynamic conflict networks and their ability to forecast the probability of future conflicts occurring between nations. The forecasting task is particularly relevant since the POLECAT dataset is an updated version of the Integrated Crisis Early Warning System (ICEWS) dataset, which was originally curated by DARPA in the United States to develop a crisis forecasting tool~\citep{obrien2010}.  To do so, we use the first $m = 157$ weeks from January 2021 through December 2023 to estimate the competing methods. We held out the remaining $k = 4$ weeks in January 2024 to evaluate the performance of each method's $k$-step ahead forecasts. 

We estimated GB-DASE with RW(1) and RW(2) priors using 5,000 samples from Algorithm~\ref{algo:gibbs} after a burn-in of 5,000 iterations. Sampling took roughly 43 and 45 minutes on a laptop with an Apple M1 Pro processor for GB-DASE with RW(1) and RW(2) priors, respectively. The proposed GB-DASE methods use the $k$-step ahead RW($r$)-generalized posterior predictive distribution's expectation defined in Equation~(\ref{eq:gen_forecast}) as a $k$-step ahead forecast. We form a simple Monte Carlo estimate of this expectation using samples from the GB-DASE posterior. 

We compare the in-sample and $k$-step ahead forecasting performance of the proposed methods with four ASE-based competitors. The first two competitors are the ASE and OMNI estimators defined in Section~\ref{sec:simulation}. The other two methods are the unfolded ASE (UASE)~\citep{jones2021, gallagher2021}  and the multiple ASE (MASE)~\citep{arroyo2021}, which impose shared structure on the expected adjacency matrices. In particular, UASE assumes the networks come from a multilayer random dot product graph, that is, $\mathbb{E}_0(\bY_{t}) = \bV \mathbf{R}^{(t)} \bU^{(t) \, \top}$ for $\bU \in \Reals{n \times d}, \mathbf{R}^{(t)} \in \Reals{d \times d}$, and $\bV^{(t)} \in \Reals{n \times d}$. Similarly, MASE assumes the networks come from the COSIE model, that is, $\mathbb{E}_0(\bY_t) = \bV \mathbf{R}^{(t)} \bV^{\top}$ for $\bV \in \Reals{n \times d}$ and $\mathbf{R}^{(t)}\in \Reals{d \times d}$ so that the expectations share a common invariant subspace. Since existing ASE-based methods do not provide $k$-step ahead forecasts, we use the edge probability estimates at the last observed time point as a forecast for the edge probability at future time points, that is, we use $\hat{\mathbb{E}}(\bY_m)$ as a forecast for $\bY_{m+k}$.


For each ASE-based method, we selected the latent space dimension using the profile likelihood approach of \citet{zhu2006}. This procedure performs dimension selection by locating an elbow in the scree plot of the ordered singular values of the matrix used to estimate the embedding, e.g., the omnibus matrix for OMNI. For the traditional ASE, this procedure estimates a separate dimension $\hat{d}_{\text{ASE}}^{(t)}$ for each conflict network, which are displayed in Figure~\ref{fig:ase_dhat} of the supplement. In contrast, this procedure selects a single dimension $\hat{d}_{\text{OMNI}} = 5$, $\hat{d}_{\text{UASE}} = 3$, and $\hat{d}_{\text{MASE}} = 3$, for OMNI, UASE, and MASE, respectively. Since GB-DASE uses the same spectral information as the ASE, we set the embedding dimension $d$ based on the estimated ASE embedding dimensions $\set{\hat{d}_{\text{ASE}}^{(t)}}_{t=1}^m$. Specifically, we chose $d = 4$ for GB-DASE since most ASE dimension estimates were less than or equal to four.

First, we compare the methods based on their in-sample goodness-of-fit to the conflict networks. Table~\ref{tab:auc_aupr} reports each method's overall AUC (area under the operator characteristic curve) and overall AUPR (area under the precision-recall curve) for classifying edges over all in-sample time points ($1 \leq t \leq 157$). As a baseline, we also report the performance of randomly guessing, which by definition achieves an AUC of 0.5 and an AUPR equal to the overall edge density of the dynamic network. All models greatly improve upon the random guessing baseline. Moreover, both GB-DASE methods outperform the ASE-based methods, with GB-DASE with an RW(1) prior performing best. The performance difference is especially large between the AUPRs. In practice, the AUPR is often the preferred metric because the AUC will be too optimistic when there is a large imbalance between the number of edges and non-edges~\citep{saito2015, gwee2024}, which is the case for the POLECAT conflict network that has an edge density of 7\%. For a more nuanced in-sample goodness-of-fit comparison, we quantify each method's ability to reconstruct the observed dynamic network's degree distribution using the graphical goodness-of-fit test proposed by \citet{hunter2008} in Appendix~\ref{sec:gof} of the supplement.

    \begin{table}[tb]
        \small
        \centering
    \begin{tabular}{@{}lccccccc@{}} \toprule
       Metric  & Random & GB-DASE & GB-DASE & ASE & OMNI & UASE & MASE\\
        & & [RW(1)] & [RW(2)] & & &  & \\
    \hline
        AUC & 0.5 & {\bf 0.88} & 0.86 & 0.71 & 0.77 & 0.82 & 0.80 \\
        AUPR & 0.07 & {\bf 0.59} & 0.50 & 0.25 & 0.36 & 0.39 & 0.38\\
    \bottomrule
    \end{tabular}
        \caption{AUCs and AUPRs for detecting in-sample conflicts in the conflict networks. Random is the performance of randomly guessing. Bolded entries indicate the best-performing method.}
        \label{tab:auc_aupr}
    \end{table}
    
    \begin{table}[tb]
        \small
        \centering
    \begin{tabular}{@{}cccccccc@{}} \toprule
        $k$-weeks ahead  & Random & GB-DASE & GB-DASE  & ASE & OMNI & UASE & MASE \\
         & & [RW(1)] & [RW(2)] & & &  & \\
    \hline

        0  & 0.08 & 0.54 & {\bf 0.58} & 0.34 & 0.45 & 0.39 & 0.39\\
        \hline
        1  & 0.09 & 0.48 & {\bf 0.52} & 0.34 & 0.40 & 0.39 & 0.42 \\
        2  & 0.09 & 0.47 & {\bf 0.49} & 0.36 & 0.44 & 0.43 & 0.44 \\
        3  & 0.08 & 0.48 & {\bf 0.50} & 0.30 & 0.39 & 0.31 & 0.32 \\
        4  & 0.08 & 0.51 & {\bf 0.55} & 0.29 & 0.43 & 0.35 & 0.38 \\
    \bottomrule
    \end{tabular}
    \caption{AUPRs for forecasting future conflicts in the conflict networks. Random is the performance of randomly guessing. Bolded entries indicate the best-performing method.}
    \label{tab:forecast_polecat}
    \end{table}

Next, we compare the $k$-step ahead forecasting performance of the competing methods. Table~\ref{tab:forecast_polecat} displays the AUPRs for forecasting edges in the conflict network $k$-weeks ahead of the last in-sample network. For reference, the row with $k = 0$ indicates each method's AUPR for classifying edges in the last in-sample network. In this case, the GB-DASE methods out perform the competitors for all $k$-step ahead forecasts. Furthermore, GB-DASE with an RW(2) prior performs the best, which indicates a linear forecasting function is appropriate for this dataset. To further demonstrate the performance differences, we analyze the conflict probability estimates between nations involved in the Gaza war that began near the end of the in-sample observation period on October 7th, 2023.

Figures~\ref{fig:un_is} and \ref{fig:pl_us} show the estimated edge probabilities between Israel and the United Nations and the Palestinian Territories and the United States, respectively. For visual clarity, these plots only display the estimates for GB-DASE with an RW(2) prior. Appendix~\ref{sec:add_figures} of the supplement contains the estimates for GB-DASE with an RW(1) prior. The ASE estimates produce sporadic estimates due to the individual networks' low expected density and contain large jumps that do not always align the observed time series' behavior. OMNI does better in capturing the Israel and United Nations conflicts, but tends to smooth over bursts of conflict between the Palestinian Territories and the United States. UASE and MASE both appear to over-smooth the edge probabilities in both time series, with only minor changes occurring during major bursts or lulls in pairwise conflicts. The GB-DASE estimates tend to track the time series' movements well; however, they are slower to react to the change point occurring on the week of October 7th, 2023 due to the influence of the prior's local smoothness. As expected, GB-DASE with an RW(1) prior produces rougher estimates that suffer less from over-smoothing, which partly explains its' better in-sample performance. 

Lastly, we comment on the forecasts from GB-DASE with an RW(2) prior. The point forecasts predict an upward trend in the conflict probabilities, which matches the escalation of the conflicts during that time. Moreover, the widths of the 95\% credible intervals are narrow enough to be informative and increase as the forecast window $k$ increases. The width of the interval at a given $k$ in Figure~\ref{fig:un_is} is narrower than the width in Figure~\ref{fig:pl_us}, which is reasonable since the inferred edge probabilities between Israel and the United Nations are less variable than those between the Palestinian Territories and the United States.

\begin{figure}[tbp]
\centering \includegraphics[height=0.34\textheight, width=\textwidth, keepaspectratio]{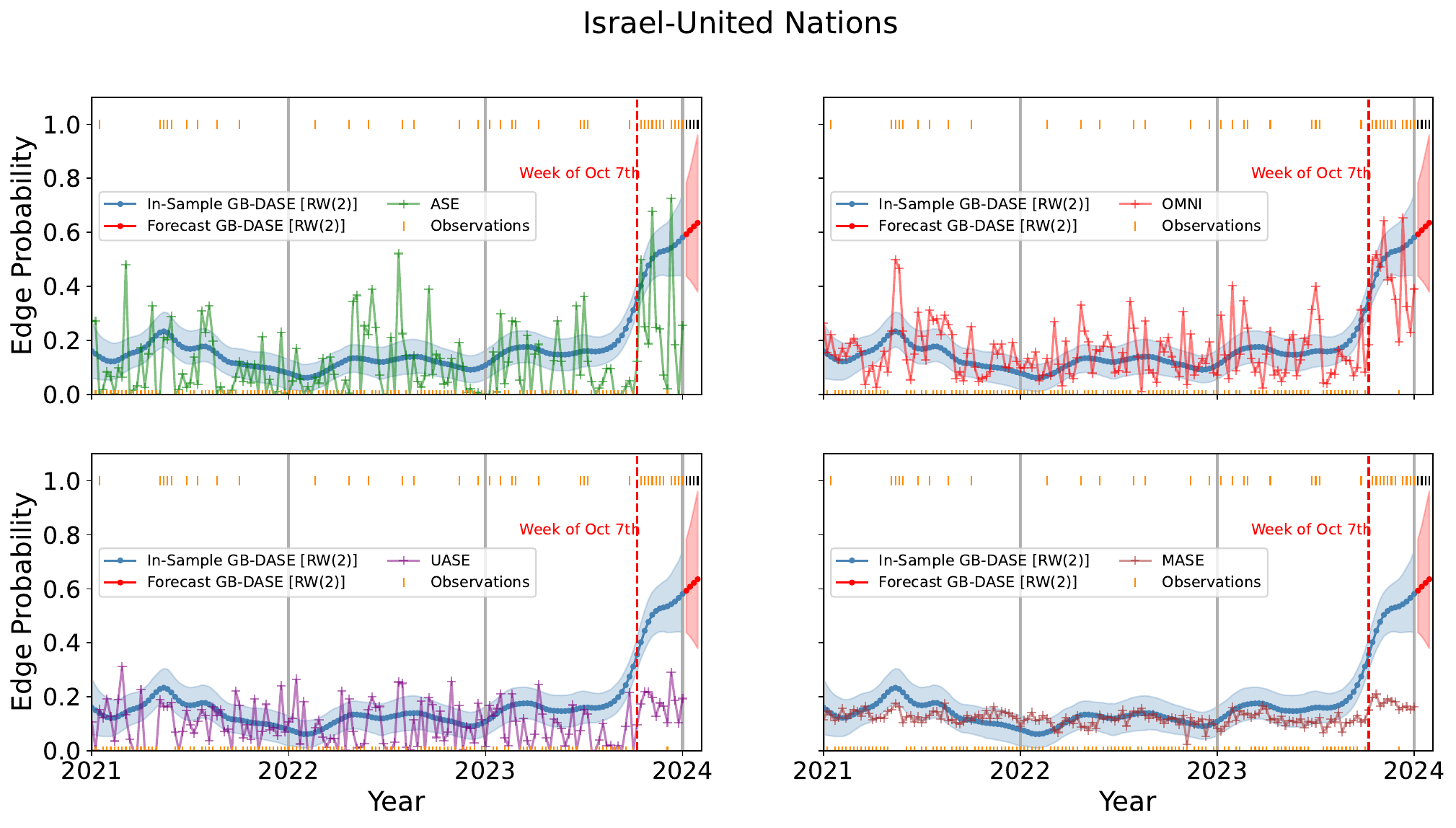}
    \caption{Estimated conflict probabilities between Israel and the United Nations. The curves denote point estimates. The blue and red shaded regions indicate 95\% pointwise credible intervals for in-sample and out-of-sample probabilities, respectively. The orange and black ticks indicate whether a conflict occurred ($y_{ij,t} = 1$) or did not occur ($y_{ij,t} = 0$) on a given week. Orange and black ticks indicate in-sample and out-of-sample weeks, respectively. The dashed horizontal line denotes the week of October 7th, 2023.}
    \label{fig:un_is}
\end{figure}

\begin{figure}[tbp]
\centering \includegraphics[height=0.34\textheight, width=\textwidth, keepaspectratio]{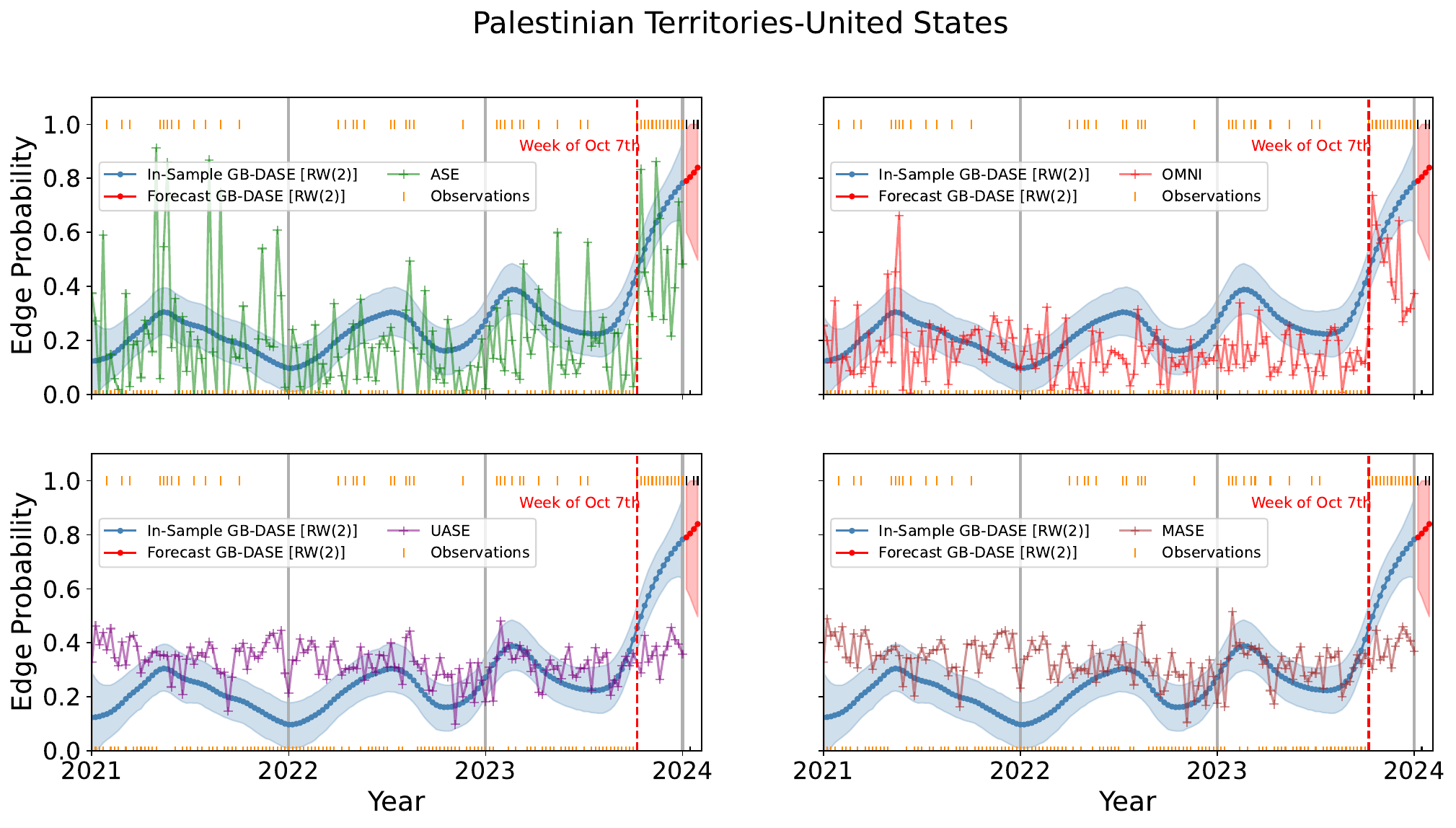}
    \caption{Estimated conflict probabilities between the Palestinian territories and the United States. The curves denote point estimates. The blue and red shaded regions indicate 95\% pointwise credible intervals for in-sample and out-of-sample probabilities, respectively. The orange and black ticks indicate whether a conflict occurred ($y_{ij,t} = 1$) or did not occur ($y_{ij,t} = 0$) on a given week. Orange and black ticks indicate in-sample and out-of-sample weeks, respectively. The dashed horizontal line denotes the week of October 7th, 2023.}
    \label{fig:pl_us}
\end{figure}

\section{Discussion}\label{sec:discussion}

This work introduced a generalized Bayesian framework for dynamic RDPGs with theoretical guarantees that provides a coherent way to update one's prior beliefs based on data and make forecasts with meaningful uncertainty quantification. Moreover, we introduced a Gibbs sampler that scales with the number of observed edges in the dynamic network, unlike existing samplers for dynamic LSMs. Lastly, we demonstrated the efficacy of the proposed approach compared to existing ASE-based methods using simulated and real networks.

There are various avenues for future research. Although we focused on smoothness-promoting random walk priors to develop informative forecasts, the proposed framework can use other temporal priors when appropriate for the statistical task. For example, the proposed framework can easily be modified to detect network change-points with meaningful uncertainty quantification by placing Markov-switching process priors on the latent positions~\citep{park2020, casarin2023}. In addition, the expected loss function in Equation~(\ref{eq:expect_loss}) used to define the generalized posterior can be extended to include the generalized RDPG~\citep{rubindelanchy2022} when a model that possesses heterophilic connectivity is appropriate. We leave these extensions to future work.

\bibliographystyle{apalike}
\bibliography{bibliography.bib}

\clearpage

\setcounter{page}{1}
\setcounter{section}{0}
\setcounter{equation}{0}
\setcounter{table}{0}
\setcounter{figure}{0}
\pagenumbering{arabic}
\renewcommand{\thesection}{\Alph{section}}
\renewcommand{\thesubsection}{\Alph{section}.\arabic{subsection}}
\renewcommand\theequation{S.\arabic{equation}}
\renewcommand\thetable{S.\arabic{table}}
\renewcommand\thefigure{S.\arabic{figure}}
\renewcommand{\thetheorem}{S.\arabic{theorem}}
\renewcommand{\thelemma}{S.\arabic{lemma}}

\begin{center}
{\Large\bf Supplementary Material for \\
    ``Generalized Bayesian Inference for Dynamic Random Dot Product Graphs''}\\[2em]
{\large Joshua Daniel Loyal\\
Department of Statistics, Florida State University}
\end{center}

\section{Additional Details on the MCMC Algorithm}\label{sec:mcmc_details}

\subsection{Derivation of the Full-Conditional Distribution of $\vecop(\bx_{i,1:m})$}

To develop the Gibbs sampler in Algorithm~\ref{algo:gibbs} of the main text, we leverage the fact that the Gibbs posterior can be viewed as a Bayesian update of a Gaussian likelihood, which allows us to recast our model as a Gaussian state-space model. Specifically, we obtain a linear observation equation when block-sampling each node $i$'s latent trajectory $\bx_{i,1:m} = (\bx_{i1}, \dots, \bx_{im}) \in \Reals{d \times m}$ conditioned on the remaining latent trajectories $\set{\bx_{j,1:m} : 1 \leq j \neq i \leq n}$ and model parameters. A few existing discrete-time LSMs can be recast as Gaussian state-space models for a transformed response~\citepSup{loyal2023, zhao2024}. These models apply the Kalman filter and smoother to sample from the latent trajectories' full-conditional distributions in $O(m)$ time by accounting for the sparsity in the full-conditional's precision matrix. However, this approach does not take advantage of the additional sparsity in the observed network, so it scales quadratically in $n$. As such, we develop a new approach that takes advantage of both the sparsity in the precision matrix and the observed network.

To proceed, we derive each latent trajectory's full-conditional distribution. Based on the Gaussian pseudolikelihood in Equation~(\ref{eq:gauss_gibbs_post}) of the main text, for each node $i = 1, \dots ,n$, given the other latent trajectories $\set{\bx_{j,1:m} : 1 \leq j \neq i \leq n}$, $\sigma_{1:n}$, and $\lambda$, we obtain the following linear observation equations in $\bx_{it}$:
\begin{equation*}
    \by_{it} = \bX_{t}^{(-i)}\,  \bx_{it} + \bm{\varepsilon}_{it}, \qquad \bm{\varepsilon}_{it} \indsim N(\mathbf{0}_{n-1},\, \lambda^{-1} \, \mathbf{I}_{n-1}) \qquad (t = 1, \dots, m),
\end{equation*}
where $\bX_{t}^{(-i)} \in \Reals{(n-1) \times d}$ is the latent position matrix at time $t$ with the $i$-th row removed and similarly $\by_{it} \in \Reals{n-1}$ is the $i$-th column of the observed adjacency matrix at time $t$ with the $i$-th row removed. Note that since the diagonal elements of the adjacency matrices are non-random,  we absorb their contributions into the prior. The previous equation can be compactly expressed as a single linear regression equation:
\[
    \by_i = \diag\{\bX_1^{(-i)}, \dots, \bX_m^{(-i)}\} \  \text{vec}(\bx_{i,1:m}) + \bm{\varepsilon}_i,
\]
where $\by_i = (\by_{i1}^{\top}, \dots, \by_{im}^{\top})^{\top} \in \Reals{m(n-1)}$ and $\bm{\varepsilon}_i = (\bm{\varepsilon}_{i1}^{\top}, \dots, \bm{\varepsilon}_{im}^{\top})^{\top} \in \Reals{m(n-1)}$. 
Recalling the random walk priors in Equation~(\ref{eq:x_prior}) of the main text, standard manipulations for Bayesian linear regression demonstrate that the latent trajectories have the following multivariate normal full-conditional distribution for each $i = 1, \dots, n$:
\begin{equation}\label{eq:full_con}
    p(\text{vec}(\bx_{i,1:m}) \mid \cdot) =  N\{\text{vec}(\bx_{i,1:m}) \mid \text{vec}(\bmu_{i,1:m}), \mathbf{P}_i^{-1}\},
\end{equation}
with precision matrix
\begin{align}\label{eq:con_prec}
    \mathbf{P}_i &= \{(1/\sigma_i^2) \bD_r^{\top}\bD_r  + (1/\sigma_0^2) \sum_{s=1}^r \mathbf{e}_s \mathbf{e}_s^{\top}\} \otimes \mathbf{I}_d + (\lambda/2) \mathbf{I}_{md} \nonumber \\
    &\qquad\qquad+ \lambda \, \diag\{\bX_1^{(-i)}, \dots, \bX_m^{(-i)}\}^{\top} \diag\{\bX_1^{(-i)}, \dots, \bX_m^{(-i)}\} \nonumber \\
    &= \{(1/\sigma_i^2) \bD_r^{\top}\bD_r  + (1/\sigma_0^2) \sum_{s=1}^r \mathbf{e}_s \mathbf{e}_s^{\top}\} \otimes \mathbf{I}_d + (\lambda/2) \mathbf{I}_{md} \nonumber \\
    &\qquad\qquad + \lambda \sum_{j\neq i} \diag(\bx_{j1}\bx_{j1}^{\top}, \dots, \bx_{jm} \bx_{jm}^{\top}), 
\end{align}
and a mean vector that satisfies the following linear system of equations
\begin{equation}\label{eq:con_mean}
    \mathbf{P}_i \text{vec}(\bmu_{i,1:m}) = \lambda \diag\{\bX_1^{(-i)}, \dots, \bX_m^{(-i)}\}^{\top}\by_i = \lambda  \begin{pmatrix}
        \sum_{j \in \mathcal{N}_{i1}} \bx_{j1} y_{ij,1} \\
        \vdots \\
        \sum_{j \in \mathcal{N}_{im}} \bx_{jm} y_{ij,m},
\end{pmatrix},
\end{equation}
where $\mathcal{N}_{it} = \set{j \, : \, y_{ij,t} \neq 0}$ is the neighborhood of node $i$ at time $t$. This full-conditional distribution is sampled for each $i = 1, \dots, n$ in Step 1 of Algorithm~\ref{algo:gibbs} of the main text. 


\subsection{Derivation of the Full-Conditionals for $\sigma_{1:n}$ and $\lambda$}

Here, we derive the full-conditional distributions for the node-wise variances $\sigma_{1:n}$ and the learning rate $\lambda$. To sample the node-wise variances, we use a parameter expansion strategy based on the scale mixture representation of the half-Cauchy distribution~\citepSup{wand2011}, that is, if 
\[
    \sigma_i^2 \mid \nu_i \indsim \text{Inverse-Gamma}(1/2, 1/\nu_i), \quad \nu_i \iidsim \text{Inverse-Gamma}(1/2, 1) \quad (i = 1, \dots, n),
\]
then marginally $\sigma_i \iidsim C^{+}(0, 1)$ for $i = 1, \dots, n$. Including this additional hierarchy in our model results in the conjugate full-conditional updates in Step (2.a) and Step (2.b) of Algorithm~\ref{algo:gibbs} through standard manipulations. 

Lastly, the $\text{Gamma}(a_{\lambda}, b_{\lambda})$ prior on $\lambda$ is conjugate, so it produces a gamma full-conditional distribution with shape $a_{\lambda} + n(n+1)m/4$ and rate $b_{\lambda} + (1/4) \sum_{t=1}^m \frobnorm{\bY_m - \bX_m\bX_m^{\top}}^2$. Although the rate parameter requires $O(mn^2d)$ operations to compute, it can be efficiently vectorized using specialized BLAS routines and thus accounts for a negligible proportion of the algorithm's runtime. For example, on a laptop with an Apple M1 Pro processor, Step 3 in Algorithm~\ref{algo:gibbs} takes about 0.1 seconds, and a complete iteration of Steps 1--3 of Algorithm~\ref{algo:gibbs} takes roughly one second for a network with an average edge density of $0.1$, $n = 400$ nodes, $m = 100$ time points, and $d = 2$ latent dimensions.  However, for very large networks, Step 3 can be expensive. In this case, we recommend employing an empirical Bayesian approach and fixing $\lambda$ to one over the sample variance of the observed edge-variables. We found this approach performed only slightly worse than sampling $\lambda$.

\subsection{Initialization and Post-Processing}\label{sec:init}

Good initial values for the model parameters can greatly improve the mixing time of  MCMC samplers for high-dimensional posteriors. Recall that the model parameters are $\bX_{1:m}, \sigma_{1:n}^2, \nu_{1:n}$, and $\lambda$. We initialized the latent positions to the ASE applied separately to each adjacency matrix, that is, we set $\bX_{1:m}^{(0)} = \hat{\bX}_{1:m}^{(\mathrm{ASE})}$ defined in Equation~(\ref{eq:ase}) of the main text. To ensure the initial latent trajectory is smooth, we aligned the neighboring latent position matrices using sequential Procrustes rotations~\citepSup{hoff2002}. Specifically, moving sequentially forward in time, starting at $t = 2$, we projected $\bX_t^{(0)}$ to the locations that are closest to its previous location $\bX_{t-1}^{(0)}$ through a Procrustes rotation. We initialized the node-wise transition variances $\sigma_i^{2 \, (0)} = \{(m-1)d\}^{-1} \sum_{t=2}^m \norm{\bx_{it}^{(0)} - \bx_{i(t-1)}^{(0)}}^2_2$ for $i = 1, \dots, n$. We set $\nu_i^{(0)} = 1$ for $i = 1, \dots, n$. Lastly, we set $\lambda^{(0)}$ to one over the sample variance of the observed edge-variables based on its role as a precision parameter, as discussed in Section~\ref{subsec:lambda} of the main text.

As discussed in Section~\ref{sec:rdpg} of the main text, the latent positions are only identified up to unknown orthogonal transformations. This non-identifiability remains in the generalized posterior and can lead to inflated variance estimates without proper post-processing. As such, we post-process each posterior draw $\bX_{1:m}^{(s)}$ by matching it to some reference set of latent positions $\bX_{1:m}^{\dagger}$ using Procrustes matching, which is the standard solution in the Bayesian latent space literature~\citepSup{hoff2002}. For convenience, we set the reference latent positions $\bX_{1:m}^{\dagger}$ to the last posterior draw $\bX_{1:m}^{(S)}$ after sequentially aligning its latent positions, that is, moving sequentially forward in time starting at $t = 2$, we rotated $\bX_{t}^{(S)}$ to the location closest to its previous location $\bX_{t-1}^{(S)}$ through a Procrustes rotation. Then, we employed Procrustes rotations to individually match each $\bX_t^{(s)}$ to $\bX_t^{\dagger}$ for all $t = 1, \dots, m$ and $s = 1, \dots, S$.

\section{Proof of Theorem~1}\label{sec:proofs}

\subsection{Additional Notation}

First, we establish some additional notation used throughout this section. We use $\frobket{\cdot, \cdot}$ to denote the Frobenius inner product. For positive sequences $\set{a_n}$ and $\set{b_n}$, we write $a_n = O(b_n)$ if there exists a constant $C > 0$ such that $a_n \leq C b_n$ for $n$ large enough. We use $a_n \lesssim b_n$ (or $b_n \gtrsim a_n$) to indicate that $a_n = O(b_n)$. We say $a_n = o(b_n)$ if $\lim_{n\rightarrow \infty} a_n/b_n = 0$. 
\subsection{Preliminary Lemmas}

    This section contains preliminary results used in the proof of the main theorem. The following lemma lower-bounds the small-ball probabilities of the first-order random walk prior, whose proof is found in \citetSup{zhao2024}.

\begin{lemma}[Proof of Theorem 3 on page 31 and Lemma 13 of \citetSup{zhao2024}]\label{lemma:rw_concentration}
    If the components of $\bx = (x_1, \dots, x_m)^{\top}$ follow a first-order Gaussian random walk with transition variance $\sigma^2$, that is,
    \[
        x_t \mid x_{t-1} \sim N(x_{t-1}, \sigma^2) \quad (t = 2, \dots, m),
    \]
    then for any $\delta > 0$, 
    \[
        \mathbb{P}\left(\max_{2 \leq t \leq m} \abs{x_t} \leq \delta \right) \geq e^{-Cm\sigma^2/\delta^2},
    \]
    where $C = 2 (1 + 3 e \sqrt{2 \pi})^2$.
\end{lemma}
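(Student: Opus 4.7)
Plan: The plan is to prove the small-ball bound via a block decomposition of the time indices combined with Gaussian maximal inequalities. I interpret the statement as holding conditionally on $x_1 = 0$ (since the bound only involves $\max_{2 \leq t \leq m} \abs{x_t}$ and does not depend on $x_1$). The intuition is that a Brownian-type walk tends to return to a neighborhood of the origin after short excursions, so disjoint time blocks can be treated as conditionally almost-independent and the target probability bounded by a product of per-block survival probabilities.

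Concretely, I would proceed as follows. Partition $\set{2, \dots, m}$ into $n_b = \lceil m/k \rceil$ consecutive blocks of length $k$, where $k$ is of order $\delta^2/\sigma^2$ up to a constant. For each block $i$, define the event $E_i = \{\max_{t \in \textrm{block}_i} \abs{x_t} \leq \delta, \ \abs{x_{\mathrm{end}(i)}} \leq r\delta\}$ for a fixed $r \in (0,1)$ (for instance $r = 1/2$), and note that if $x_1 = 0$ and all $E_i$ hold, then every block begins at a state with absolute value $\leq r\delta$ and the full path stays in $[-\delta,\delta]$. The Markov property then gives $\mathbb{P}(\cap_i E_i) \geq \prod_i \inf_{\abs{a} \leq r\delta} \mathbb{P}(E_i \mid x_{\mathrm{start}(i)} = a)$, and the task reduces to lower-bounding this infimum by a universal constant $p > 0$.

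The per-block bound would follow from combining two Gaussian estimates for the shifted walk $\set{x_t - a}_{t \in \textrm{block}_i}$, which is a centered Gaussian random walk of length $k$ with endpoint variance $k\sigma^2$. Doob's maximal inequality together with Mills' inequality $\mathbb{P}(\abs{Z} > t) \leq e^{-t^2/2}/(t\sqrt{2\pi})$ for $Z \sim N(0,1)$ controls $\mathbb{P}(\max_t \abs{x_t - a} > (1-r)\delta)$, while direct Gaussian anti-concentration bounds $\mathbb{P}(\abs{x_{\mathrm{end}(i)}} \leq r\delta \mid a)$ from below. Choosing $k = c\delta^2/\sigma^2$ with $c$ small enough makes both estimates strictly nontrivial, yielding $p > 0$ uniformly in $\abs{a} \leq r\delta$. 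Multiplying across blocks gives $\mathbb{P}(\cap_i E_i) \geq p^{n_b} \geq \exp\{-(m\sigma^2/(c\delta^2))\log(1/p)\}$, i.e., a bound of the required form with $C = \log(1/p)/c$.

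The main obstacle is matching the specific constant $C = 2(1 + 3e\sqrt{2\pi})^2$, which requires a delicate joint optimization over $k$, $r$, and the tail cutoffs in Mills' inequality; the factor $3e\sqrt{2\pi}$ strongly suggests that the proof uses the explicit Mills bound and a careful accounting of block-boundary constraints, and that one must separately treat the easy regime in which $m\sigma^2/\delta^2$ is bounded below so as to absorb lower-order corrections. A cleaner but less explicit alternative is the spectral approach, in which the top eigenvalue $\mu_1$ of the Gaussian integral operator with kernel $\phi_\sigma(a-b)$ restricted to $L^2([-\delta,\delta])$ is shown to satisfy $\mu_1 \geq 1 - O(\sigma^2/\delta^2)$ via a cosine test function in the Rayleigh quotient, giving $\mu_1^{m-1} \geq \exp(-O(m\sigma^2/\delta^2))$; I would fall back to this route only if pinning down the explicit constant through the block argument proves too cumbersome.
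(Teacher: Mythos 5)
The paper does not actually prove this lemma: it is imported wholesale, with the bracketed attribution pointing to the proof of Theorem~3 and Lemma~13 of \citetSup{zhao2024}, and the surrounding text says only that the proof ``is found in'' that reference. So there is no in-paper argument to match yours against; I can only judge your plan on its own merits and against what the quoted constant suggests about the source's method.

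Your block-decomposition strategy is a legitimate and standard route to small-ball bounds of the form $e^{-Cm\sigma^2/\delta^2}$, and the skeleton is sound: the Markov-property factorization $\mathbb{P}(\cap_i E_i) \geq \prod_i \inf_{\abs{a}\leq r\delta}\mathbb{P}(E_i \mid x_{\mathrm{start}(i)}=a)$ is valid, and the per-block bound $\mathbb{P}(E_i \mid a) \geq \mathbb{P}(\abs{a+S_k}\leq r\delta) - \mathbb{P}(\max_j\abs{S_j} > (1-r)\delta)$ does yield a uniform $p>0$ once $k\sigma^2 = c\delta^2$ with $c$ small. Your reading that the bound is to be understood with $x_1=0$ is also the right one, since the lemma is invoked in Lemma~\ref{lemma:prior_support_sigma} for the differenced process $\tilde{\bx}_t = \bx_t - \bx_1$. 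However, as written the argument does not deliver the stated inequality in two regimes, both of which need separate (elementary but non-optional) patches. First, when $\delta \lesssim \sigma$ the prescribed block length $c\delta^2/\sigma^2$ drops below one, blocks degenerate to single steps, and the per-block probability is no longer bounded below by a universal constant --- it decays like $\delta/\sigma$ --- so one must verify directly that $m\log(\sigma/\delta) \lesssim m\sigma^2/\delta^2 \cdot C$ there, which holds for the paper's large $C$ but is a case you have not written out. Second, when $m\sigma^2/\delta^2 \to 0$ the right-hand side tends to one while your product bound saturates at $p^{n_b} \leq p < 1$; you flag this but the fix (a one-block L\'evy/Doob maximal bound showing $1 - 2e^{-\delta^2/(2m\sigma^2)} \geq e^{-Cm\sigma^2/\delta^2}$, which requires $\sup_{v>0} 2v e^{-v/2} \leq C$) still has to be checked against the specific $C$. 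Finally, you will not recover $C = 2(1+3e\sqrt{2\pi})^2$ this way; the explicit $\sqrt{2\pi}$ and $e$ in that constant indicate the source's proof is a direct computation with the Gaussian kernel and a Mills-ratio bound rather than a block-renewal argument. None of this matters for the paper's downstream use --- in Lemma~\ref{lemma:prior_support_sigma} the constant is immediately absorbed into $C_1 = -\{1/(2c_0) + 36Cc_0\}$, so any universal $C$ would do --- but to stand as a proof of the lemma as stated, the two boundary regimes must be closed and the constant either matched or the statement weakened to an unspecified universal constant.
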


    The next lemma is the key concentration inequality used to control the size of the generalized posterior distribution, whose proof can be found in \citetSup{xie2020}.

\begin{lemma}[Lemma E.1 of \citetSup{xie2020}]\label{lemma:concen}
    Let $\bE \in \Reals{n \times n}$ be a symmetric random matrix with $\mathbb{E}_0(\bE) = \mathbf{0}_{n \times n}$. Assume that $\bE$ is sub-Gaussian, that is, there exists some constant $\tau > 0$, such that for all $\bA \in \Reals{n \times n}$ with $\frobnorm{\bA}^2 = 1$, for all $t > 0$, $\mathbb{P}_0(\abs{\tr(\bA^{\top}\bE)} > t) \leq 2e^{-\tau t^2}$. For any $\bX, \bX_0 \in \Reals{n \times d}$ and for any $t > 0$,
    \[
        \mathbb{P}_0\left(\sup_{\bX \in \Reals{n \times d}} \, \abs*{\frobket*{\bE, \frac{\bX\bX^{\top} - \bX_0 \bX_0^{\top}}{\frobnorm{\bX\bX^{\top} - \bX_0 \bX_0^{\top}}}}} > nt \right) \leq 6 \exp\left(3nd - \frac{\tau n^2 t^2}{4}\right).
    \]
\end{lemma}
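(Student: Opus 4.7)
The plan is to prove this uniform sub-Gaussian deviation bound via a standard low-rank $\epsilon$-net argument, exploiting the fact that the random quantity inside the supremum is generated by matrices of controlled rank. The starting observation is structural: for every $\bX \in \Reals{n \times d}$, the matrix $\bX \bX^\top - \bX_0 \bX_0^\top$ is symmetric with rank at most $2d$, so after normalization it lies in
\[
    \mathcal{M}_r = \set{\bA \in \Reals{n \times n} : \bA = \bA^\top, \ \text{rank}(\bA) \leq r, \ \frobnorm{\bA} = 1}
\]
with $r = 2d$. The problem thus reduces to bounding $\sup_{\bA \in \mathcal{M}_{2d}} \abs{\frobket{\bE, \bA}}$, a supremum over a low-dimensional manifold on which the sub-Gaussian tail is available pointwise.

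Next I would construct an $\epsilon$-net $\mathcal{N}_\epsilon$ for $\mathcal{M}_{2d}$ in Frobenius norm through the spectral decomposition $\bA = \bU \bLambda \bU^\top$ with $\bU$ on the Stiefel manifold $V_{2d, n}$ and $\bLambda$ diagonal with $\frobnorm{\bLambda} = 1$. Standard volume/covering bounds give an $\epsilon$-net for the Stiefel manifold of cardinality at most $(C/\epsilon)^{2nd}$ and an $\epsilon$-net for the unit sphere in $\Reals{2d}$ of cardinality at most $(3/\epsilon)^{2d}$; the product covers $\mathcal{M}_{2d}$ up to a Frobenius perturbation controlled by $\epsilon$. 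Taking $\epsilon$ a fixed small constant (such as $1/2$) yields $\log \abs{\mathcal{N}_\epsilon} \leq 3 n d$ for $n$ large, which is the source of the $3nd$ term in the exponent.

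To pass from the maximum over $\mathcal{N}_\epsilon$ to the full supremum, I would use the standard peeling/doubling step: for $\bA \in \mathcal{M}_{2d}$ with nearest net point $\bA^{(j)}$, the difference $\bA - \bA^{(j)}$ is symmetric of rank at most $4d$ with Frobenius norm at most $1/2$, so it equals $(1/2) \bC$ with $\bC \in \mathcal{M}_{4d}$, giving
\[
    \sup_{\bA \in \mathcal{M}_{2d}} \abs{\frobket{\bE, \bA}} \leq \max_{\bA^{(j)} \in \mathcal{N}_\epsilon} \abs{\frobket{\bE, \bA^{(j)}}} + \tfrac{1}{2} \sup_{\bC \in \mathcal{M}_{4d}} \abs{\frobket{\bE, \bC}}.
\]
Covering $\mathcal{M}_{4d}$ by the same construction (which only enlarges constants) and iterating closes the recursion to give $\sup_{\mathcal{M}_{2d}} \abs{\frobket{\bE, \cdot}} \leq 2 \max_{\mathcal{N}} \abs{\frobket{\bE, \cdot}}$ for a slightly larger net $\mathcal{N}$. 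The pointwise sub-Gaussian hypothesis then delivers $\mathbb{P}_0(\abs{\frobket{\bE, \bA^{(j)}}} > n t / 2) \leq 2 \exp(-\tau n^2 t^2 / 4)$ for each $\bA^{(j)}$, and a union bound over $\mathcal{N}$ produces the claimed tail $6 \exp(3 n d - \tau n^2 t^2 / 4)$, the factor $6$ absorbing the small constants from the doubling step and the two cascaded nets.

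The main obstacle is Step 3 — cleanly closing the discretization recursion when $\bA - \bA^{(j)}$ escapes $\mathcal{M}_{2d}$ into the strictly larger set $\mathcal{M}_{4d}$. A naive approach leaves an uncontrolled perturbation term, and the honest fix is to cover $\mathcal{M}_{4d}$ from the outset while carefully checking that the enlarged covering number is still absorbed into the target exponent $3nd$. A secondary technical point is tracking the Frobenius-norm approximation error that arises from composing a Stiefel-manifold net with a diagonal-matrix net through the nonlinear map $(\bU, \bLambda) \mapsto \bU \bLambda \bU^\top$, since small perturbations in $\bU$ propagate nonlinearly; this is handled by the submultiplicativity of the Frobenius norm under multiplication by matrices with bounded operator norm, but the bookkeeping must be explicit to land the constants exactly.
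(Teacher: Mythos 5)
First, a point of comparison: the paper does not actually prove this lemma --- it is imported verbatim as Lemma E.1 of the cited reference and used as a black-box concentration tool, so there is no in-paper proof to measure you against. That said, your architecture (reduce to a supremum over symmetric, unit-Frobenius-norm matrices of rank at most $2d$; discretize; union-bound the pointwise sub-Gaussian tail at threshold $nt/2$, which is where $\tau n^2t^2/4$ comes from) is the standard Cand\`es--Plan-style covering argument for low-rank matrix sets and is the right route to this bound.

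The genuine gap is in your Step 3, and the repair you propose does not close it. If $\bA \in \mathcal{M}_{2d}$ and $\bA^{(j)}$ is its nearest net point, the remainder $\bA - \bA^{(j)}$ indeed has rank up to $4d$; but ``covering $\mathcal{M}_{4d}$ from the outset'' merely relocates the problem, since the remainder of a point of $\mathcal{M}_{4d}$ against its own net has rank up to $8d$, and iterating produces covering numbers growing like $\exp(c\,2^k nd)$ against a geometric gain of only $2^{-k}$: the recursion never terminates. The standard fix keeps everything inside $\mathcal{M}_{2d}$. Write the spectral decomposition of the symmetric remainder $\bB = \bA - \bA^{(j)}$ and split its (at most $4d$) eigenvector terms into two groups of at most $2d$, giving $\bB = \bB_1 + \bB_2$ with each $\bB_i$ symmetric of rank at most $2d$ and $\frobnorm{\bB_1}^2 + \frobnorm{\bB_2}^2 = \frobnorm{\bB}^2$ by orthonormality. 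Then $\abs{\frobket{\bE, \bB}} \leq (\frobnorm{\bB_1} + \frobnorm{\bB_2}) \sup_{\mathcal{M}_{2d}}\abs{\frobket{\bE,\cdot}} \leq \sqrt{2}\,\epsilon\, \sup_{\mathcal{M}_{2d}}\abs{\frobket{\bE,\cdot}}$, so $\sup_{\mathcal{M}_{2d}}\abs{\frobket{\bE,\cdot}} \leq (1-\sqrt{2}\epsilon)^{-1}\max_{\mathcal{N}_\epsilon}\abs{\frobket{\bE,\cdot}}$ for any $\epsilon < 1/\sqrt{2}$, closing the recursion in a single step. A secondary issue is your constant accounting: with $\epsilon = 1/2$ a Stiefel-times-sphere net already has log-cardinality roughly $2nd\log 6 > 3nd$, and $(1-\sqrt{2}/2)^{-1} \approx 3.4 > 2$, so landing exactly on $6\exp(3nd - \tau n^2 t^2/4)$ requires a more careful joint choice of $\epsilon$ and the threshold split; this is cosmetic, but as written your stated constants are not justified.
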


Lastly, we state the following lemma that relates two common metrics for comparing matrices.
\begin{lemma}[Lemma 5.4 in \citetSup{tu2016}]\label{lemma:tu2016}
    For any $\bX, \bX_0 \in \Reals{n \times d}$, we have
    \[
        \min_{\bW \in \mathcal{O}_d} \frobnorm{\bX - \bX_0 \bW}^2 \leq  \frac{1}{2(\sqrt{2} - 1) \sigma_d^2(\bX_0)} \frobnorm{\bX \bX^{\top} - \bX_0\bX_0^{\top}}^2,
    \]
    where $\sigma_d(\bX_0)$ is the smallest singular value of $\bX_0$.
\end{lemma}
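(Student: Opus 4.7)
The plan is to follow the standard Gibbs posterior contraction template adapted to the dynamic setting: write $\Pi_{\lambda}(\mathcal{A} \mid \bY_{1:m}) = N_{nm}(\mathcal{A})/D_{nm}$ and control numerator and denominator separately under $\mathbb{E}_0$. Substituting $\bY_t = \bX_{0t}\bX_{0t}^{\top} + \bE_t$ reduces the log-integrand to a quadratic drift plus a linear noise term,
\[
-\lambda \sum_{t=1}^m \frobnorm{\bX_t \bX_t^{\top} - \bX_{0t}\bX_{0t}^{\top}}^2 + 2\lambda \sum_{t=1}^m \frobket*{\bE_t,\, \bX_t\bX_t^{\top} - \bX_{0t}\bX_{0t}^{\top}},
\]
which drives both directions of the argument and couples to the sub-Gaussian hypothesis only through inner products of the form $\frobket{\bE_t, \bA}$.

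For the denominator I would show $D_{nm} \gtrsim \exp(-C n^2 \epsilon_{nm}^2)$ with high $\mathbb{P}_0$-probability by restricting integration to a neighborhood $\set{\max_t \inf_{\bW_t \in \mathcal{O}_d} \frobnorm{\bX_t - \bX_{0t}\bW_t}^2 \le n \epsilon_{nm}^2}$ of the truth, on which the quadratic drift is controlled by construction and the linear noise is controlled in probability by sub-Gaussianity. The prior mass on this neighborhood is lower-bounded by conditioning on each $\sigma_i$ lying in a compact interval calibrated to $L_i/(nm)$, applying Lemma~\ref{lemma:rw_concentration} coordinate-wise to the $d$ coordinates of the random walk centered at the true PAWD trajectory, and multiplying by the positive mass that the half-Cauchy prior assigns to that $\sigma_i$-interval. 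Optimizing the trade-off between the small-ball radius and the admissible range of $\sigma_i$ is what produces the adaptive exponent $\norm{\bL}_\infty^{1/3}/(m^{1/3}n^{2/3})$.

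For the numerator I would combine a peeling argument with a union bound over $t = 1,\dots,m$. For each fixed $t$, the set $\set{\frobnorm{\bX_t\bX_t^{\top} - \bX_{0t}\bX_{0t}^{\top}} > M_1 n \epsilon_{nm}}$ is decomposed into geometric shells; on each shell, the negative quadratic drift $-\lambda \frobnorm{\cdot}^2$ dominates the sub-Gaussian linear noise whose supremum Lemma~\ref{lemma:concen} controls via the tail $\exp\{3nd - \tau n^2 s^2/4\}$ at level $s$. The hypothesis $m = O\{\log(n)\}$ is precisely what makes the union bound affordable in the absence of any assumption on temporal noise dependence. Summing the shell bounds and taking $\mathbb{E}_0$ gives $\mathbb{E}_0[N_{nm}(\mathcal{A})] \leq 2\exp(-C' d n^2 \epsilon_{nm}^2)$ for $\mathcal{A}$ the bad set, and combining with the denominator lower bound via the standard Markov-type decomposition yields the first displayed conclusion. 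The second conclusion then follows term-by-term from Lemma~\ref{lemma:tu2016}, since the hypothesis $\tilde{\sigma}_{\min}(\bX_{0t}) \geq \tilde{\sigma}\sqrt{n}$ gives $\sigma_d^2(\bX_{0t}) \gtrsim n$ and thereby converts Gram-matrix control into Procrustes-aligned latent-position control at the same rate up to constants.

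The main obstacle I anticipate is the adaptive prior-concentration step. Calibrating each $\sigma_i$ to its nodewise PAWD parameter $L_i$ under half-Cauchy mixing, while simultaneously forcing all $d$ coordinates of all $n$ random walks to stay inside the small ball at every time $t$, requires delicate bookkeeping. In particular, verifying that $\norm{\bL}_\infty^{1/3} m^{-1/3} n^{-2/3}$ is the correct adaptive scaling (rather than merely sufficient) requires optimizing the small-ball radius jointly with the $\sigma_i$-interval and then meshing that optimization with the peeling radii in the numerator step, so that the prior-concentration lower bound and the testing upper bound align up to a constant factor in the exponent $d n^2 \epsilon_{nm}^2$ that appears in the final statement.
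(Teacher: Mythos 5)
Your proposal does not prove the statement at hand. The statement is Lemma~\ref{lemma:tu2016}, a purely deterministic matrix-analysis inequality: for arbitrary $\bX, \bX_0 \in \Reals{n \times d}$, the Procrustes-aligned distance $\min_{\bW \in \mathcal{O}_d} \frobnorm{\bX - \bX_0 \bW}^2$ is bounded by $\frobnorm{\bX\bX^{\top} - \bX_0\bX_0^{\top}}^2$ divided by $2(\sqrt{2}-1)\sigma_d^2(\bX_0)$. There is no randomness, no prior, no posterior, and no dynamic structure in this claim. What you have written instead is a sketch of the proof of Theorem~\ref{thm:post_con} (the posterior contraction result): denominator lower bounds via prior small-ball probabilities, numerator upper bounds via peeling and sub-Gaussian concentration, and so on. Worse, your sketch explicitly \emph{invokes} Lemma~\ref{lemma:tu2016} in its final step ("The second conclusion then follows term-by-term from Lemma~\ref{lemma:tu2016}\dots"), so you are assuming the very statement you were asked to establish.

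A proof of the actual lemma would proceed along entirely different lines. The standard argument (this is Lemma 5.4 of Tu et al., which the paper cites rather than reproves) starts from the identity $\min_{\bW \in \mathcal{O}_d} \frobnorm{\bX - \bX_0\bW}^2 = \frobnorm{\bX}^2 + \frobnorm{\bX_0}^2 - 2\norm{\bX_0^{\top}\bX}_{*}$, where $\norm{\cdot}_{*}$ is the nuclear norm (the orthogonal Procrustes solution), and then compares this quantity to $\frobnorm{\bX\bX^{\top} - \bX_0\bX_0^{\top}}^2$ through a careful singular-value analysis of $\bX_0^{\top}\bX$; the constant $2(\sqrt{2}-1)$ and the factor $\sigma_d^2(\bX_0)$ emerge from optimizing that comparison. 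None of this machinery appears in your proposal, so the attempt must be judged as missing the target entirely rather than as a gap in an otherwise correct argument.
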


\subsection{Prior Support}

    The next results establish the support of the first-order random walk prior distributions over small balls around the true latent positions. These results are used to control the size of the denominator $D_{nm}$ of the generalized posterior distribution defined in Section~\ref{sec:theory} of the main text. Lemma~\ref{lemma:prior_support_sigma} demonstrates that the RW(1) prior conditioned on specific transition variances, that is, $\Pi_{\text{RW}(1)}(\cdot \mid \sigma_{1} = \sigma^*, \dots, \sigma_n = \sigma^{*}, \sigma_0)$, has appropriate support. Lemma~\ref{lemma:prior_support} demonstrates that the half-Cauchy prior for the transition variances places enough mass around these specific variances so that the result extends to the marginal prior $\Pi_{\text{RW}(1)}(\cdot \mid \sigma_0)$.

\begin{lemma}\label{lemma:prior_support_sigma}
    Define the set  $\mathcal{B}_{nm}(\gamma) = \set{\bX_{1:m} \, : \, \max_{1 \leq t \leq m} \frobnorm{\bX_t \bX_t^{\top} - \bX_{0t} \bX_{0t}^{\top}} \leq C_{\bx} \sqrt{d} n \gamma}$ for $\set{\bX_{0t}}_{t=1}^m\in \pawd(\bL)$ and $\gamma > 0$. Let 
    \[
        \epsilon_{nm} = \frac{\norm{\bL}_{\infty}^{1/3}}{m^{1/3}n^{2/3}} + \sqrt{\frac{\log nm}{nm}}
    \]
    and $\sigma^{*\, 2} = c_0 \epsilon_{nm} d \tilde{L} / (nm)$ for any positive constant $c_0 > 0$ and $\tilde{L} = \max\{\norm{\bL}_{\infty}, \log^{3/2}(nm) \sqrt{n/m}\}$. Under the first-order random walk prior with initial variance $\sigma_0$ and transition variances $\sigma_1, \dots, \sigma_n$,
    \[
        \Pi_{\mathrm{RW}(1)}\{\mathcal{B}_{nm}(\epsilon_{nm}) \mid \sigma_1 = \sigma^*, \dots, \sigma_n = \sigma^*, \sigma_0\} \geq e^{-C d n^2 m \epsilon_{nm}^2}
    \]
    for a constant $C$ that only depends on $\sigma_0$.
\end{lemma}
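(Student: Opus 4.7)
The plan is to lower-bound the prior probability via a Cameron--Martin shift combined with a small-ball estimate for the first-order random walk. First, using the factorization $\bX_t \bX_t^\top - \bX_{0t}\bX_{0t}^\top = (\bX_t - \bX_{0t})\bX_t^\top + \bX_{0t}(\bX_t - \bX_{0t})^\top$ together with the bound $\|\bX_{0t}\|_{\mathrm{op}} \leq C_{\bx}\sqrt{n}$ (which follows from each $\|\bx_{0it}\|_2 \leq C_{\bx}$), one shows that the event $\mathcal{B}_{nm}(\epsilon_{nm})$ is implied by the per-node event $E_i = \{\|\bx_{i,1:m} - \bx_{0i,1:m}\|_F \leq c\sqrt{d}\,\epsilon_{nm}\}$ for all $i = 1, \dots, n$, since $\max_{1\leq t \leq m}\frobnorm{\bX_t - \bX_{0t}}^2 \leq \sum_{i,t}\|\bx_{it} - \bx_{0it}\|_2^2 = \sum_i \frobnorm{\bx_{i,1:m} - \bx_{0i,1:m}}^2$. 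Conditional on $\sigma_1 = \cdots = \sigma_n = \sigma^*$, the trajectories are independent across $i$, so the prior probability factors as $\prod_{i=1}^n \Pi_i(E_i)$, and it suffices to lower bound each factor.

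For each node, $\mathrm{vec}(\bx_{i,1:m})$ is a centered Gaussian on $\mathbb{R}^{md}$ under the prior, with precision $(\sigma^{*-2}\bD_1^\top\bD_1 + \sigma_0^{-2}\mathbf{e}_1\mathbf{e}_1^\top)\otimes\mathbf{I}_d$. Because the Frobenius ball is symmetric, the Cameron--Martin shift inequality yields
\[
    \Pi_i(E_i) \geq \exp\!\left(-\tfrac{1}{2}\|\bx_{0i,1:m}\|_H^2\right)\Pi_i\!\left(\frobnorm{\bx_{i,1:m}} \leq c\sqrt{d}\,\epsilon_{nm}\right),
\]
where the RKHS norm $\|\bx_{0i,1:m}\|_H^2 = \|\bx_{0i1}\|_2^2/\sigma_0^2 + \sum_{t=2}^m \|\bx_{0it} - \bx_{0i(t-1)}\|_2^2/\sigma^{*2}$ is bounded above by $C_{\bx}^2/\sigma_0^2 + d^2 L_i^2/(n^2 m\sigma^{*2})$ via the PAWD condition on $\bL$. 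For the centered small-ball, I would decompose coordinate-wise across the $d$ independent coordinates of the walk: since $\frobnorm{\bx_{i,1:m}}^2 \leq m\sum_{h=1}^d\max_{1 \leq t \leq m} x_{it,h}^2$, it suffices to require $\max_t |x_{it,h}| \leq c\epsilon_{nm}/\sqrt{m}$ for each $h$. Splitting off the initial value $x_{i1,h}\sim N(0,\sigma_0^2)$ (a Gaussian anti-concentration estimate in an interval) and applying Lemma~\ref{lemma:rw_concentration} to the zero-initialized increment random walk gives a product lower bound of the form $[(c'\epsilon_{nm}/(\sigma_0\sqrt{m}))\exp(-Cm^2\sigma^{*2}/\epsilon_{nm}^2)]^d$.

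Summing the log-probabilities over the $n$ independent nodes, the negative log-probability is at most (up to constants)
\[
\frac{n}{\sigma_0^2} \;+\; \frac{d^2 \sum_{i}L_i^2}{n^2 m\,\sigma^{*2}} \;+\; nd\,\bigl|\log(\epsilon_{nm}/(\sigma_0\sqrt{m}))\bigr| \;+\; \frac{nd\,m^2\sigma^{*2}}{\epsilon_{nm}^2}.
\]
Substituting $\sigma^{*2} = c_0\epsilon_{nm}d\tilde{L}/(nm)$ and invoking the key comparison $n^2 m\epsilon_{nm}^3 \gtrsim \tilde{L} \geq \|\bL\|_\infty$ (which is immediate from the definitions of $\epsilon_{nm}$ and $\tilde{L}$ by expanding the cube), the Cameron--Martin term becomes of order $d\|\bL\|_\infty/\epsilon_{nm} \lesssim dn^2 m\epsilon_{nm}^2$, and the other terms are handled analogously using $m = O(\log n)$ and the definition of $\tilde{L}$. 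The main obstacle is the trade-off governed by $\sigma^{*2}$: a smaller $\sigma^{*2}$ inflates the RKHS penalty $\|\bx_{0i,1:m}\|_H^2$ on the fixed trajectory, while a larger $\sigma^{*2}$ inflates the small-ball exponent $m^2\sigma^{*2}/\epsilon_{nm}^2$. The stated choice $\sigma^{*2} = c_0\epsilon_{nm}d\tilde{L}/(nm)$ is precisely tuned to balance these two effects at the claimed rate, and the term $\log^{3/2}(nm)\sqrt{n/m}$ in $\tilde{L}$ is exactly what is needed so that the bound continues to hold in the smooth-trajectory regime where $\|\bL\|_\infty$ is small and $\sigma^{*2}$ would otherwise be too small for the small-ball to have sufficient mass.
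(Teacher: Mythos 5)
Your overall strategy is the same as the paper's: a Gaussian shift (Cameron--Martin/Anderson) inequality to reduce to a centered small-ball probability, the random-walk small-ball bound of Lemma~\ref{lemma:rw_concentration} applied coordinate-wise, and the tuned choice of $\sigma^{*2}$ to balance the RKHS penalty against the small-ball exponent. The RKHS computation, the use of the PAWD condition, and the final balancing argument all match the paper.

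However, there is a genuine loss in your first reduction that breaks the claimed constant. The set $\mathcal{B}_{nm}(\epsilon_{nm})$ only constrains $\max_{1\leq t\leq m}\frobnorm{\bX_t-\bX_{0t}}$, so the correct per-node sufficient event is a \emph{sup-in-time} ball, $\max_{1\leq t\leq m}\norm{\bx_{it}-\bx_{0it}}_2 \lesssim \sqrt{d}\,\epsilon_{nm}$, which translates into per-coordinate events $\norm{\tilde{\bx}_{i,h}-\tilde{\bx}_{0i,h}}_{\infty}\leq \epsilon_{nm}/6$ with radius independent of $m$; this is what the paper uses. By instead requiring $\frobnorm{\bx_{i,1:m}-\bx_{0i,1:m}}\leq c\sqrt{d}\,\epsilon_{nm}$ (a sum over the $m$ time points) you are forced to take per-coordinate radius $c\,\epsilon_{nm}/\sqrt{m}$, and since Lemma~\ref{lemma:rw_concentration} gives $\exp(-Cm\sigma^{2}/\delta^{2})$, shrinking $\delta$ by $\sqrt{m}$ multiplies the small-ball exponent by $m$: your own final display shows the term $nd\,m^{2}\sigma^{*2}/\epsilon_{nm}^{2}$, which after substituting $\sigma^{*2}=c_{0}\epsilon_{nm}d\tilde{L}/(nm)$ equals $c_{0}d^{2}m\tilde{L}/\epsilon_{nm}$. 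The comparison $n^{2}m\epsilon_{nm}^{3}\gtrsim\tilde{L}$ then only gives $d^{2}m\tilde{L}/\epsilon_{nm}\lesssim d^{2}n^{2}m^{2}\epsilon_{nm}^{2}$, i.e.\ a lower bound of the form $e^{-Cd^{2}n^{2}m^{2}\epsilon_{nm}^{2}}$ rather than the claimed $e^{-Cdn^{2}m\epsilon_{nm}^{2}}$; the extra factor of $m$ cannot be absorbed into a constant $C$ depending only on $\sigma_{0}$ (invoking $m=O(\log n)$ does not help, and that assumption is not even part of this lemma). The fix is simply to keep the max structure: bound $\max_{t}\frobnorm{\bX_t-\bX_{0t}}^{2}\leq\sum_{i}\max_{t}\norm{\bx_{it}-\bx_{0it}}_{2}^{2}$ instead of $\sum_{i}\sum_{t}\norm{\bx_{it}-\bx_{0it}}_{2}^{2}$, split off the initial position, and apply Lemma~\ref{lemma:rw_concentration} with $\delta=\epsilon_{nm}/6$; the rest of your argument then goes through and recovers the paper's bound.
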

\begin{proof}
    Observe that for all $1 \leq t \leq m$,
    \begin{align*}
        \frobnorm{\bX_t \bX_t^{\top} - \bX_{0t} \bX_{0t}^{\top}} &\leq \frobnorm{(\bX_t - \bX_{0t})(\bX_t - \bX_{0t})^{\top}} + \frobnorm{\bX_{0t}(\bX_t - \bX_{0t})^{\top}} + \frobnorm{(\bX_t - \bX_{0t})\bX_{0t}^{\top}} \\
        &\leq \frobnorm{\bX_t - \bX_{0t}}^2 + 2 C_{\bx} \sqrt{n}  \frobnorm{\bX_t - \bX_{0t}} \\
        &= (2 C_{\bx} \sqrt{n} + \frobnorm{\bX_t - \bX_{0t}}) \frobnorm{\bX_t - \bX_{0t}},
    \end{align*}
    where we used the fact that $\max_{1 \leq t \leq m}\norm{\bx_{0it}}_2 \leq C_{\bx}$ under the PAWD($\bL$) condition. In addition,
    \[
        \frobnorm{\bX_t - \bX_{0t}} \leq \frobnorm{\bX_1 - \bX_{01}} + \frobnorm{(\bX_t - \bX_1) - (\bX_{0t} - \bX_{01})} = \frobnorm{\bX_1 - \bX_{01}} + \frobnorm{\tilde{\bX}_t - \tilde{\bX}_{0t}},
    \]
    where $\tilde{\bX}_t = \bX_t - \bX_1$ and $\tilde{\bX}_{0t} = \bX_{0t} - \bX_{01}$. 
    It follows that
    \begin{align*}
        \bigg\{\max_{1 \leq i \leq n} \norm{\bx_{i1} - \bx_{0i1}}_2 \leq \frac{\sqrt{d} \epsilon_{nm}}{6}, &\max_{1 \leq i \leq n, 1 \leq h \leq d} \norm{\tilde{\bx}_{i,h} - \tilde{\bx}_{0i,h}}_{\infty} \leq \frac{\epsilon_{nm}}{6}\bigg\} \\
        &\qquad\subset \left\{\max_{1 \leq t \leq m} \frobnorm{\bX_t - \bX_{0t}} \leq \frac{\sqrt{dn} \epsilon_{nm}}{3} \right\} \\
        &\qquad\subset \left\{\max_{1 \leq t \leq m}\frobnorm{\bX_t \bX_t^{\top} - \bX_{0t} \bX_{0t}^{\top}} \leq C_{\bx} \sqrt{d} n \epsilon_{nm} \right\}\\
        &\qquad=\mathcal{B}_{nm}(\epsilon_{nm}),
    \end{align*}
    where $\bx_{ih} = (x_{ih,1}, \dots, x_{ih,m})^{\top}$ and $\bx_{0ih} = (x_{0ih,1}, \dots x_{0ih,m})^{\top}$.
    Under the independence assumptions of the first-order random walk priors, we have
    \begin{align}\label{eq:support_lb}
        \Pi_{\mathrm{RW}(1)}\{\mathcal{B}_{nm}(\epsilon_{nm}) &\mid  \sigma_1 = \sigma^*, \dots, \sigma_n = \sigma^*, \sigma_0\} \geq \nonumber \\
        &\Pi\left(\max_{1 \leq i \leq n} \norm{\bx_{i1} - \bx_{0i1}}_2 \leq \frac{\sqrt{d} \epsilon_{nm}}{6} \, \bigg| \, \sigma_0\right) \times \nonumber\\
        &\quad\Pi\left(\max_{1 \leq i \leq n, 1 \leq h \leq d} \norm{\tilde{\bx}_{i,h} - \tilde{\bx}_{0i,h}}_{\infty} \leq \frac{\epsilon_{nm}}{6}\, \bigg| \, \sigma_1 = \sigma^*, \dots, \sigma_n = \sigma^* \right).
    \end{align} 
    
    We begin with the first term. Recall that we place mean-zero Gaussian priors on the initial latent positions. The concentration of a Gaussian distribution can be lower bounded by Anderson's lemma:
    \begin{align*}
        \Pi\left(\max_{1 \leq i \leq n} \norm{\bx_{i1} - \bx_{0i1}}_2 \leq \frac{\sqrt{d}\epsilon_{nm}}{6} \, \bigg| \, \sigma_0\right)  &\geq \exp\left(-\frac{1}{2\sigma_0^2}\frobnorm{\bX_0}^2\right)\prod_{i=1}^n \Pi\left(\norm{\bx_{i1}}_2 \leq \frac{\sqrt{d}\epsilon_{nm}}{6} \, \bigg|\, \sigma_0\right) \\
        &\geq \exp\left(-\frac{1}{2\sigma_0^2}\frobnorm{\bX_0}^2\right)\prod_{i=1}^n \prod_{h=1}^d \Pi\left(x_{i1,h}^2 \leq \frac{\epsilon_{nm}^2}{36} \, \bigg|\, \sigma_0\right) \\
        &\geq \exp\left(-\frac{n C_{\bx}^2}{2\sigma_0^2}\right) \left(\frac{1}{\sqrt{2\pi} \sigma_0}\right)^{nd} \left(\frac{\epsilon_{nm}}{6}\right)^{nd} \\
        &\geq \exp\left\{-\left(\frac{C_{\bx}^2}{2\sigma_0^2} + d\right)n -nd \abs*{\log \frac{\epsilon_{nm}}{6 \sigma_0}}\right\} \\
        &\geq \exp\left\{-\left(\frac{C_{\bx}}{2\sigma_0^2} + d + d \abs{\log 6 \sigma_0} \right)n - nd \log \frac{1}{\epsilon_{nm}}\right\}.
    \end{align*}

    For the second term in Equation~(\ref{eq:support_lb}), we again apply Anderson's lemma:
    \begin{align*}
        \Pi\bigg(&\max_{1 \leq i \leq n, 1 \leq h \leq d} \norm{\tilde{\bx}_{i,h} - \tilde{\bx}_{0i,h}}_{\infty} \leq \frac{\epsilon_{nm}}{6} \, \bigg| \, \sigma_1 = \sigma^*, \dots, \sigma_n = \sigma^* \bigg) \\
        &\geq \prod_{i=1}^n \prod_{h=1}^d \exp\left(-\frac{1}{2\sigma^2_i} \norm*{\bD_1 \tilde{\bx}_{0i,h}}_2^2 \right) \Pi\left(\norm{\tilde{\bx}_{i,h}}_{\infty} \leq \frac{\epsilon_{nm}}{6}\, | \, \sigma_1 = \sigma^*, \dots, \sigma_n = \sigma^*\right) \\
        &\geq \exp\left(-\frac{1}{2}\sum_{i=1}^n \sum_{t=1}^m \frac{\norm{\bx_{0it} - \bx_{0i(t-1)}}_2^2}{\sigma^{* \,2}}\right) \prod_{i=1}^n \prod_{h=1}^d\Pi\left(\norm{\tilde{\bx}_{i,h}}_{\infty} \leq \frac{\epsilon_{nm}}{6} \, \bigg|\, \sigma_1 = \sigma^*, \dots, \sigma_n = \sigma^* \right)\\
        &\geq \exp\left(-\frac{d^2}{2n^2m} \sum_{i=1}^n \frac{L_i^2}{\sigma^{* \, 2}} - 36 C m \sum_{i=1}^n \frac{\sigma^{* \, 2}}{\epsilon^2_{nm}} \right),
    \end{align*}
    where the last inequality follows from the $\textrm{PAWD}(\bL)$ condition and Lemma~\ref{lemma:rw_concentration}. Setting $\sigma^{*\, 2} =  c_0 \epsilon_{nm}  d \tilde{L} / (nm)$, we observe that
    \[
        \Pi\bigg(\max_{1 \leq i \leq n, 1 \leq h \leq d} \norm{\tilde{\bx}_{i,h} - \tilde{\bx}_{0i,h}}_{\infty} \leq \frac{\epsilon_{nm}}{6} \,\bigg|\, \sigma_1 = \sigma^*, \dots, \sigma_n = \sigma^* \bigg) \geq \exp\left\{-C_1 d \frac{\tilde{L}}{\epsilon_{nm}}\right\},
    \]
    where $C_1 = -\{1/(2c_0) + 36 C c_0 \}$. Combining these inequalities, we have
    \begin{align*}
        \Pi_{\mathrm{RW}(1)}\{\mathcal{B}_{nm}(\epsilon_{nm}) \mid &\sigma_1 = \sigma^*, \dots, \sigma_n = \sigma^*, \sigma_0\} \\
        &\geq \exp\left\{-\left(\frac{C_{\bx}}{2\sigma_0^2} + d + d \abs{\log 6 \sigma_0} \right)n - nd \log \frac{1}{\epsilon_{nm}}- C_1 d \frac{\tilde{L}}{\epsilon_{nm}}\right\}.
    \end{align*}
    The rate $\tilde{\epsilon}_{nm} = \tilde{L}^{1/3}m^{-1/3}n^{-2/3} + \{\log(nm)/(nm)\}^{1/2}$ is obtained by finding the smallest possible $\tilde{\epsilon}_{nm}$ such that $n^2 m \tilde{\epsilon}_{nm}^2 \gtrsim \max\{\tilde{L}/\tilde{\epsilon}_{nm}, n \log(1/\tilde{\epsilon}_{nm})\}$. Furthermore, $\tilde{\epsilon}_{nm} \leq 2 \epsilon_{nm}$ and $\{(C_{\bx}/2\sigma_0^2) + d + d \abs{\log 6 \sigma_0}\}n \lesssim d n^2 m \epsilon_{nm}^2$. As such,
    \[
        \Pi_{\mathrm{RW}(1)}\{\mathcal{B}_{nm}(\epsilon_{nm}) \mid \sigma_1 = \sigma^*, \dots, \sigma_n = \sigma^*, \sigma_0\} \geq  e^{-C_2 dn^2m \epsilon_{nm}^2},
    \]
    for a constant $C_2$ that only depends on $\sigma_0$.
\end{proof}

\begin{lemma}\label{lemma:prior_support}
    Define the set  $\mathcal{B}_{nm}(\gamma) = \set{\bX_{1:m} \, : \, \max_{1 \leq t \leq m} \frobnorm{\bX_t \bX_t^{\top} - \bX_{0t} \bX_{0t}^{\top}} \leq C_{\bx} \sqrt{d} n \gamma}$ for $\set{\bX_{0t}}_{t=1}^m\in \mathrm{PAWD}(\bL)$ where $\norm{\bL}_{\infty} = O\{nm \log(nm)\}$. Let 
    \[
        \epsilon_{nm} = \frac{\norm{\bL}_{\infty}^{1/3}}{m^{1/3}n^{2/3}} + \sqrt{\frac{\log nm}{nm}}
    \]
    Under the first-order random walk prior with initial variance $\sigma_0$ and transition variances $\sigma_1, \dots, \sigma_n \iidsim C^{+}(0,1)$,
    \[
        \Pi_{\mathrm{RW}(1)}\{\mathcal{B}_{nm}(\epsilon_{nm}) \mid \sigma_0\} \geq e^{-C d n^2 m \epsilon_{nm}^2}
    \]
    for a constant $C$ that only depends on $\sigma_0$.
\end{lemma}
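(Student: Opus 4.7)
The plan is to reduce Lemma~\ref{lemma:prior_support} to the already-proved conditional Lemma~\ref{lemma:prior_support_sigma} by integrating the half-Cauchy prior over a carefully chosen favorable neighborhood of the specific variance vector $\sigma_i \equiv \sigma^*$ used there. First I would write
\[
    \Pi_{\mathrm{RW}(1)}\{\mathcal{B}_{nm}(\epsilon_{nm}) \mid \sigma_0\} = \int \Pi_{\mathrm{RW}(1)}\{\mathcal{B}_{nm}(\epsilon_{nm}) \mid \sigma_{1:n}, \sigma_0\}\, \Pi(\rmd \sigma_{1:n}),
\]
and restrict the outer integral to the set
\[
    \mathcal{S} = \set{\sigma_{1:n} \in \Reals{n} \, : \, \sigma^*/2 \leq \sigma_i \leq \sigma^*, \, 1 \leq i \leq n},
\]
where $\sigma^{*\,2} = c_0 \epsilon_{nm} d \tilde{L}/(nm)$ is the value from Lemma~\ref{lemma:prior_support_sigma}.

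Next I would verify that on $\mathcal{S}$ the conditional lower bound from Lemma~\ref{lemma:prior_support_sigma} is essentially preserved. For $\sigma_{1:n} \in \mathcal{S}$, every $\sigma_i^{-2}$ and $\sigma_i^{2}$ agrees with $\sigma^{*\,-2}$ and $\sigma^{*\,2}$ up to universal multiplicative constants, so tracing through the proof of Lemma~\ref{lemma:prior_support_sigma}---whose critical exponent has the form $-(d^2/2n^2m)\sum_i L_i^2/\sigma_i^2 - 36 Cm \sum_i \sigma_i^2/\epsilon_{nm}^2$---shows that $\Pi_{\mathrm{RW}(1)}\{\mathcal{B}_{nm}(\epsilon_{nm}) \mid \sigma_{1:n}, \sigma_0\} \geq \exp(-C_2' d n^2 m \epsilon_{nm}^2)$ uniformly over $\mathcal{S}$, for an enlarged constant $C_2'$ depending only on $\sigma_0$.

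I would then lower bound the half-Cauchy mass of $\mathcal{S}$. Since the half-Cauchy density exceeds $1/\pi$ on $[0,1]$ and $\sigma^* \leq 1$ for $n,m$ sufficiently large, by independence
\[
    \Pi(\mathcal{S}) \geq \left(\int_{\sigma^*/2}^{\sigma^*}\frac{2}{\pi(1+\sigma^2)}\, \rmd \sigma\right)^n \geq \left(\frac{\sigma^*}{2\pi}\right)^n,
\]
contributing $-n\log(2\pi/\sigma^*)$ to the log-probability. The hypothesis $\norm{\bL}_\infty = O\{nm\log(nm)\}$ gives $\tilde{L} = O\{nm\log(nm)\}$, and together with $\epsilon_{nm} \geq \sqrt{\log(nm)/(nm)}$ this yields $\abs{\log \sigma^*} = O\{\log(nm)\}$. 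Since $d n^2 m \epsilon_{nm}^2 \geq d n \log(nm)$, the half-Cauchy penalty $n\abs{\log \sigma^*}$ is dominated by $dn^2 m\epsilon_{nm}^2$. Combining,
\[
    \Pi_{\mathrm{RW}(1)}\{\mathcal{B}_{nm}(\epsilon_{nm}) \mid \sigma_0\} \geq \Pi(\mathcal{S}) \cdot \inf_{\sigma_{1:n}\in \mathcal{S}}\Pi_{\mathrm{RW}(1)}\{\mathcal{B}_{nm}(\epsilon_{nm}) \mid \sigma_{1:n}, \sigma_0\} \geq e^{-C d n^2 m \epsilon_{nm}^2}
\]
for a constant $C$ depending only on $\sigma_0$, as claimed.

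The main delicate point is confirming that the half-Cauchy cost $n \abs{\log \sigma^*}$ fits inside the target rate $dn^2 m \epsilon_{nm}^2$; this is precisely where the polynomial control $\norm{\bL}_\infty = O\{nm\log(nm)\}$ enters essentially, since without such a bound $\sigma^*$ could decay super-polynomially and the half-Cauchy would place exponentially too little mass on $\mathcal{S}$. The rest of the argument is largely bookkeeping on top of Lemma~\ref{lemma:prior_support_sigma}.
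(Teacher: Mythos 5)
Your proposal is correct and follows essentially the same route as the paper's proof: restrict the half-Cauchy integral to a constant-factor neighborhood of the same $\sigma^*$ from Lemma~\ref{lemma:prior_support_sigma}, note the conditional bound holds uniformly there, and absorb the resulting $n\abs{\log\sigma^*}$ cost into $dn^2m\epsilon_{nm}^2$ via $\norm{\bL}_{\infty}=O\{nm\log(nm)\}$ and $\epsilon_{nm}\geq\sqrt{\log(nm)/(nm)}$. The only differences are cosmetic (your interval $[\sigma^*/2,\sigma^*]$ versus the paper's $[\sigma^*/2,3\sigma^*/2]$, and your making explicit the uniformity over the neighborhood that the paper leaves implicit).
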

\begin{proof}
    As in Lemma~\ref{lemma:prior_support_sigma}, we define $\sigma^{*\, 2} = \epsilon_{nm} d \tilde{L} / (nm)$ with $\tilde{L} = \max\{\norm{\bL}_{\infty}, \log^{3/2}(nm) \sqrt{n/m}\}$. Limiting on the subset $\mathcal{N}(\sigma_{1:n}) = \cap_{i=1}^n \set{\sigma_i \, : \, \abs{\sigma_i - \sigma^{*}} \leq \sigma^{*} / 2}$, we have for some constant $C_1 > 0$ that only depends on $\sigma_0$,
    \begin{align*}
        \Pi_{\mathrm{RW}(1)}\{\mathcal{B}_{nm}(\epsilon_{nm}) \mid \sigma_0\} &\geq \int_{\mathcal{N}(\sigma_{1:n})} \Pi_{\mathrm{RW}(1)}\{\mathcal{B}_{nm}(\epsilon_{nm}) \mid \sigma_{1}, \dots, \sigma_n, \sigma_0\} \Pi(\rmd \sigma_{1:n}) \\
        &\geq \exp\left\{-C_1 d n^2 m \epsilon_{nm}^2\right\} \prod_{i=1}^n \Pi(\abs{\sigma_i - \sigma^{*}} \leq \sigma^{*}/2) \\
        &= \exp\left\{- C_1 d n^2 m \epsilon_{nm}^2\right\}  \Pi(\abs{\sigma_1 - \sigma^{*}} \leq \sigma^{*}/2)^n,
    \end{align*}
    where we used Lemma~\ref{lemma:prior_support_sigma} in the first line and the independence of the priors in the second line. Recall that $\sigma_1^2$ has a $C^{+}(0,1)$ prior, so that
    \begin{align*}
        \Pi(\abs{\sigma_1 - \sigma^{*}} \leq \sigma^{*}/2) &= \frac{2}{\pi} \int_{\sigma^{*}/2}^{3\sigma^{*}/2} \frac{1}{1 + \sigma_1^2} \ \rmd \sigma_1 \\
        &\geq \frac{\sigma^*}{2} \frac{1}{1 + 9 \sigma^{2 *}/4}.
    \end{align*}
    As such,
    \begin{align*}
        -n \log\{\Pi(\abs{\sigma_1 - \sigma^{*}} \leq \sigma^*/2)\} &\leq  n \log(2) - n \log \sigma^*  + n \log\{1 + 9\sigma^{2 *}/4\} \\
        &\leq n \log(2) -\frac{1}{2}n \log \sigma^{2 *} + \frac{9}{4} n \sigma^{2 *},
    \end{align*}
    where we used $\log(1 + x) \leq x$ for $x > -1$. Since $\epsilon_{nm} = o(1)$ and $\norm{\bL}_{\infty} = O\{nm \log(nm)\}$, we have
    \[
        n \sigma^{2 *} \leq \epsilon_{nm} d \tilde{L} /m \lesssim d \tilde{L}/m \lesssim dn \log(nm) \lesssim d n^2 m \epsilon_{nm}^2.
    \]
    Furthermore, 
    \[
        -n\log\sigma^{2 *} \lesssim n \log(1/\epsilon_{nm}) + n\log(m) - n\log(\tilde{L}) \lesssim n \log(nm) \lesssim dn^2 m \epsilon_{nm}^2,
    \]
    where we used the facts that $\tilde{L} \geq \log^{3/2}(nm) \{n/m\}^{1/2}$, $\epsilon_{nm} \geq \sqrt{\log(nm) / nm}$, and $mn >  1$. Combining these results, we have
    \[
        \Pi_{\mathrm{RW}(1)}\{\mathcal{B}_{nm}(\epsilon_{nm}) \mid \sigma_0\} \geq \exp\left\{-C_2 d n^2 m \epsilon_{nm}^2\right\},
    \]
    for a constant $C_2$ that only depends on $\sigma_0$.
\end{proof}

\subsection{Proof of the Main Result}

\begin{proof}[Proof of Theorem~\ref{thm:post_con}]
    To begin, for $1 \leq t \leq m$, we have the equality 
    \[
        -\frac{1}{2} \frobnorm{\bY_t - \bX_t \bX_t^{\top}}^2 + \frac{1}{2} \frobnorm{\bY_t - \bX_{0t}\bX_{0t}^{\top}}^2 = - \frac{1}{2} \frobnorm{\bX_t \bX_t^{\top} - \bX_{0t} \bX_{0t}^{\top}}^2 + \frobket{\bE_t, \bX_t \bX_t^{\top} - \bX_{0t} \bX_{0t}^{\top}},
    \]
    where $\bE_t = \bY_t - \mathbb{E}_0 \bY_t$. For any $\alpha > 0$, denote the event
    \[
        \mathcal{E}_{nm}(\alpha) = \bigcap_{t=1}^m \left\{\sup_{\bX_t \in \Reals{n \times d}} \abs*{\frobket*{\bE_t, \frac{\bX_t\bX_t^{\top} - \bX_{0t} \bX_{0t}^{\top}}{\frobnorm{\bX_t\bX_t^{\top} - \bX_{0t} \bX_{0t}^{\top}}}}} \leq \alpha \sqrt{d} n \epsilon_{nm} \right\}.
    \]
    Since each $\bE_t$ are marginally sub-Gaussian random variables, we invoke Lemma~\ref{lemma:concen} and a union bound to show the previous event occurs with high probability
    \begin{align*}
        \mathbb{P}_0\{\mathcal{E}_{nm}(\alpha)^c\} &\leq m6 \exp\left(3nd - \frac{\tau \alpha^2 d n^2 \epsilon^2_{nm}}{4}\right) \\
        &\leq \exp\left(5nd - \frac{\tau \alpha^2 d n^2  \epsilon_{nm}^2}{4}\right) \\
        &=  \exp\left\{-\left(\frac{\tau \alpha^2 n \epsilon_{nm}^2}{4} - 5\right) d n \right\},
    \end{align*}
    where we used the assumption that $m = O\{\log(n)\}$ and the fact that $d \geq 1$, so that $\log m < nd$ and $\log(6) < nd$ for $n$ large enough.

    Define the set $\mathcal{B}_{nm}(\gamma) = \set{\max_{1 \leq t \leq m} \frobnorm{\bX_t \bX_t^{\top} - \bX_{0t} \bX_{0t}^{\top}} \leq C_{\bx} \sqrt{d} n \gamma}$. Then over the event $\mathcal{E}_{nm}(\alpha)$, the denominator $D_{nm}$ can be lower bounded as follows:
    \begin{align*}
        D_{nm} &\geq \int_{\mathcal{B}_{nm}(\epsilon_{nm})} \prod_{t=1}^m \exp\bigg\{-\frac{\lambda}{2} \frobnorm{\bX_t \bX_t^{\top} - \bX_{0t} \bX_{0t}^{\top}}^2 + \\
        &\qquad\qquad\qquad\qquad\qquad\qquad\lambda \frobket{\bE_t, \bX_t \bX_t^{\top} - \bX_{0t} \bX_{0t}^{\top}} \bigg\} \Pi_{\mathrm{RW}(1)}(\rmd \bX_{1:m} \mid \sigma_0) \\
        &\geq \int_{\mathcal{B}_{nm}(\epsilon_{nm})} \prod_{t=1}^m \exp\bigg\{-\frac{\lambda}{2} \frobnorm{\bX_t \bX_t^{\top} - \bX_{0t} \bX_{0t}^{\top}}^2 - \\
        &\qquad\qquad\lambda \abs*{\frobket*{\bE_t, \frac{\bX_t \bX_t^{\top} - \bX_{0t} \bX_{0t}^{\top}}{\frobnorm{\bX_t \bX_t^{\top} - \bX_{0t}\bX_{0t}^{\top}}}}} \frobnorm{\bX_t \bX_t^{\top} - \bX_{0t}\bX_{0t}^{\top}}  \bigg\} \Pi_{\mathrm{RW}(1)}(\rmd \bX_{1:m} \mid \sigma_0) \\
        &\geq \int_{\mathcal{B}_{nm}(\epsilon_{nm})} \prod_{t=1}^m \exp\bigg\{-\frac{\lambda}{2} \frobnorm{\bX_t \bX_t^{\top} - \bX_{0t} \bX_{0t}^{\top}}^2 -\\
        &\qquad\qquad\qquad\qquad\qquad\qquad\lambda \sqrt{d} \alpha n \epsilon_{nm} \frobnorm{\bX_t \bX_t^{\top} - \bX_{0t}\bX_{0t}^{\top}} \Pi_{\mathrm{RW}(1)}(\rmd \bX_{1:m} \mid \sigma_0) \\
        &\geq \int_{\mathcal{B}_{nm}(\epsilon_{nm})} \prod_{t=1}^m \exp\bigg\{-\frac{\lambda}{2} \frobnorm{\bX_t \bX_t^{\top} - \bX_{0t} \bX_{0t}^{\top}}^2 -\frac{\lambda}{2} \alpha^2 d n^2 \epsilon_{nm}^2 - \\
        &\qquad\qquad\qquad\qquad\qquad\qquad\frac{\lambda}{2} \frobnorm{\bX_t \bX_t^{\top} - \bX_{0t}\bX_{0t}^{\top}}^2 \bigg\}\Pi_{\mathrm{RW}(1)}(\rmd \bX_{1:m} \mid \sigma_0) \\
        &\geq \Pi_{\mathrm{RW}(1)}\{\mathcal{B}_{nm}(\epsilon_{nm}) \mid \sigma_0\} \exp\left\{-\lambda \left(C_{\bx}^2 + \frac{\alpha^2}{2}\right) d n^2 m \epsilon_{nm}^2\right\} \\
        &\geq \exp\left\{-\left(\frac{\lambda(\alpha^2 + 2C_{\bx}^2) + 2C_1}{2} \right) dn^2 m \epsilon^2_{nm}\right\},
    \end{align*}
    where $C_1$ is the constant from Lemma~\ref{lemma:prior_support} and the fourth inequality used the fact that $2ab \leq a^2 + b^2$ for any $a,b > 0$.

    Similarly, over the event $\mathcal{E}_{nm}(\alpha)$, the numerator can be upper bounded as follows:
    \begin{align*}
        N_{nm}&\{\mathcal{B}^c_{nm}(M\epsilon_{nm})\} \\
        &=  \int_{\mathcal{B}_{nm}^c(M \epsilon_{nm})} \prod_{t=1}^m \exp\bigg\{-\frac{\lambda}{2} \frobnorm{\bX_t \bX_t^{\top} - \bX_{0t} \bX_{0t}^{\top}}^2 + \\
        &\qquad\qquad\qquad\qquad\qquad\qquad\lambda \frobket{\bE_t, \bX_t \bX_t^{\top} - \bX_{0t} \bX_{0t}^{\top}} \bigg\} \Pi_{\mathrm{RW}(1)}(\rmd \bX_{1:m} \mid \sigma_0)  \\
        &\leq \int_{\mathcal{B}_{nm}^c(M \epsilon_{nm})} \prod_{t=1}^m \exp\bigg\{-\frac{\lambda}{2} \frobnorm{\bX_t \bX_t^{\top} - \bX_{0t} \bX_{0t}^{\top}}^2 + \\
        &\qquad\qquad\qquad\qquad\qquad\qquad \lambda \sqrt{d} \alpha n \epsilon_{nm} \frobnorm{\bX_t \bX_t^{\top} - \bX_{0t} \bX_{0t}^{\top}} \bigg\} \Pi_{\mathrm{RW}(1)}(\rmd \bX_{1:m} \mid \sigma_0) \\
        &\leq \int_{\mathcal{B}_{nm}^c(M \epsilon_{nm})} \prod_{t=1}^m \exp\bigg\{-\frac{\lambda}{2} \frobnorm{\bX_t \bX_t^{\top} - \bX_{0t} \bX_{0t}^{\top}}^2 + \lambda 2 \alpha^2 d n^2 \epsilon_{nm}^2 + \\
        &\qquad\qquad\qquad\qquad\frac{\lambda}{8} \frobnorm{\bX_t \bX_t^{\top} - \bX_{0t} \bX_{0t}^{\top}}^2 \bigg\} \Pi_{\mathrm{RW}(1)}(\rmd \bX_{1:m} \mid \sigma_0) \\
        &\leq \exp\left\{-\lambda \left(\frac{3M^2C_{\bx}^2}{8} - 2 \alpha^2\right) d n^2 m \epsilon_{nm}^2 \right\},
    \end{align*}
    where the third inequality used the fact that $ab \leq 2a^2 + b^2/8$ for any $a,b > 0$. 

    We proceed to bound $\mathbb{E}_0[\Pi_{\lambda}\{\mathcal{B}_{nm}(M\epsilon_{nm}) \mid \bY_{1:m}\}]$ as follows:
    \begin{align*}
        \mathbb{E}_0&[\Pi_{\lambda}\{\mathcal{B}_{nm}^c(M\epsilon_{nm}) \mid \bY_{1:m}\}] \\
        &\leq\mathbb{E}_0\left[\mathbbm{1}\{\bY_{1:m} \in \mathcal{E}_{nm}(\alpha)\}\ \frac{N_{nm}\{\mathcal{B}_{nm}^c(M\epsilon_{nm})\}}{D_{nm}}\right] + \mathbb{P}_0\{\mathcal{E}_{nm}^c(\alpha)\} \\
        &\leq\exp\left\{\left(\frac{\lambda(\alpha^2 + 2C_{\bx}^2) + 2C_1}{2} \right) d n^2 m \epsilon^2_{nm} - \lambda \left(\frac{3}{8}M^2C_{\bx}^2 - 2 \alpha^2\right) d n^2 m \epsilon_{nm}^2 \right\} \\
        &\qquad + \exp\left\{-\left(\frac{\tau \alpha^2 n \epsilon^2_{nm}}{4} - 5 \right)dn \right\} \\
        &\leq\exp\left\{-\lambda \left(\frac{3}{8} M^2 C_{\bx}^2 - \frac{5\alpha^2 + 2C_{\bx}^2 + 2C_1/\lambda}{2} \right) d n^2 m \epsilon^2_{nm}\right\} +  \exp\left\{-\left(\frac{\tau \alpha^2 n \epsilon^2_{nm}}{4} - 5 \right)dn \right\}.
    \end{align*}
    Recall the assumption that $m = O(\log n)$ so that  $n\epsilon_{nm}^2 \gtrsim 1$, which ensures the second term goes to zero for $\alpha$ sufficiently large. Hence, taking $\alpha = \sqrt{\{M^2 C_{\bx}^2 - 2(2C_{\bx}^2+2C_1/\lambda)\}/10}$, for sufficiently large $M$, we see that
    \begin{align*}
        \mathbb{E}_0&[\Pi_{\lambda}\{\mathcal{B}_{nm}^c(M\epsilon_{nm}) \mid \bY_{1:m}\}]  \leq \exp\left(-\frac{\lambda}{16} C_{\bx}^2 M^2 d n^2 m \epsilon_{nm}^2\right) + \exp\left(-\frac{\tau}{80} C_{\bx}^2 M^2  dn^2  \epsilon_{nm}^2\right) \\
        &\leq 2 \exp\left\{-\min\left(\frac{\lambda}{16}, \frac{\tau}{80}\right) C_{\bx}^2 M^2 d n^2 \epsilon_{nm}^2\right\}.
    \end{align*}
    
    Next, define the set $\mathcal{U}_{nm}(\gamma) = \{ \max_{1 \leq t \leq m} \inf_{\bW_t \in \mathcal{O}_d} \frobnorm{\bX_t - \bX_{0t} \bW_t}^2 \leq C_{\bx}^2 nd \gamma^2\}$. Using the assumption that $\bX_{0t}$ is full rank with minimum singular value $\tilde{\sigma}_{\text{min}}(\bX_{0t}) \geq \tilde{\sigma} \sqrt{n}$ for all $1 \leq t \leq n$ and Lemma~\ref{lemma:tu2016}, we see that 
    \[
        \max_{1 \leq t \leq m} \inf_{\bW_t \in \mathcal{O}_d} \frobnorm{\bX_t - \bX_{0t} \bW_t}^2 \leq  \frac{\tilde{\sigma}}{n} \max_{1 \leq t \leq m} \frobnorm{\bX_t \bX_t^{\top} - \bX_{0t} \bX_{0t}^{\top}}^2. 
    \]
    As such,
    \[
        \frac{1}{n} \max_{1 \leq t \leq m} \inf_{\bW_t \in \mathcal{O}_d} \frobnorm{\bX_t - \bX_{0t} \bW_t}^2 \leq  \frac{\tilde{\sigma}}{n^2} \max_{1 \leq t \leq m} \frobnorm{\bX_t \bX_t^{\top} - \bX_{0t} \bX_{0t}^{\top}}^2,
    \]
    which implies $\mathcal{U}_{nm}^c(M\epsilon_{nm}) \subset \mathcal{B}_{nm}^c\{(M/\tilde{\sigma})^{1/2} \epsilon_{nm}\}$. Thus,
    \begin{align*}
        \mathbb{E}_0[\Pi_{\lambda}\{\mathcal{U}_{nm}^c(M\epsilon_{nm}) \mid \bY_{1:m}\}] &\leq  \mathbb{E}_0[\Pi_{\lambda}\{\mathcal{B}_{nm}^c\{(M/\tilde{\sigma})^{1/2}\epsilon_{nm}\} \mid \bY_{1:m}\}] \\
        &\leq 2 \exp\left\{-\frac{1}{\tilde{\sigma}}\min\left(\frac{\lambda}{16}, \frac{\tau}{80}\right)C_{\bx}^2 M d n^2 \epsilon_{nm}^2\right\},
    \end{align*}
    which completes the proof.
\end{proof}

\section{Additional Simulation Study Results}

\subsection{Visualizations of the Estimated and True Latent Trajectories}\label{sec:vis_sim}

In this section, we provide an explicit visualization comparing the estimated and true latent trajectories from a single replication of the simulation study described in Section~\ref{sec:simulation} of the main text. Specifically, for $\nu = 0.5$ and $\nu = 2.5$, we generated a dynamic network with $n = 200$ nodes, $m = 50$ time points, and an expected density of $0.2$ from the same data-generating process described in the simulation study of the main text. Then, using each simulated network, we estimated  GB-DASE with an RW(1) and RW(2) prior and the competing ASE-based methods using the same procedures and hyperparameter settings described in the simulation study in Section~\ref{sec:simulation} of the main text. 

Figure~\ref{fig:trajectories_exp} displays the first dimension of the inferred and true latent trajectories of three nodes when $\nu = 0.5$ and $\nu = 2.5$, respectively. These visualizations highlight some of the causes behind the performance differences between the competing methods in the simulation study in Section~\ref{sec:simulation} of the main text. OMNI tends to over-smooth the latent trajectories, causing the estimates to be biased above or below the peaks and troughs of the true latent trajectories. As previously observed, this behavior is caused by OMNI sharing information globally when the networks are only similar locally. The ASE estimator suffers from the opposite behavior, as it shares no information across networks. As such, the ASE estimator produces sporadic latent trajectory estimates that contain large jumps that overfit to the individual networks. The GB-DASE estimators behave like locally-smoothed versions of the ASE estimator that better match the true latent trajectories. In addition, the GB-DASE estimates are smoother under the RW(2) prior than the RW(1) prior, which is beneficial when $\nu = 2.5$. Lastly, the 95\% generalized posterior credible intervals from both GB-DASE methods often cover the true values while maintaining reasonable widths.

\begin{figure}[htb]
\centering
\begin{subfigure}[b]{\textwidth}
    \centering 
    \includegraphics[width=\textwidth, keepaspectratio]{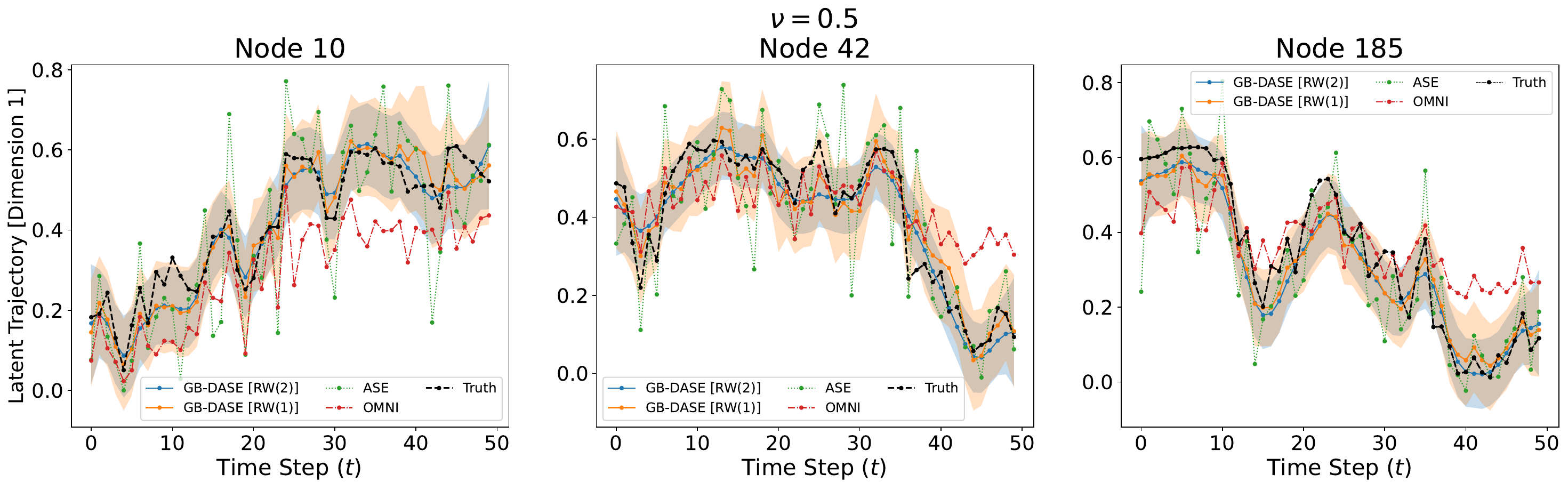}
\end{subfigure}
\begin{subfigure}[b]{\textwidth}
    \centering 
    \includegraphics[width=\textwidth, keepaspectratio]{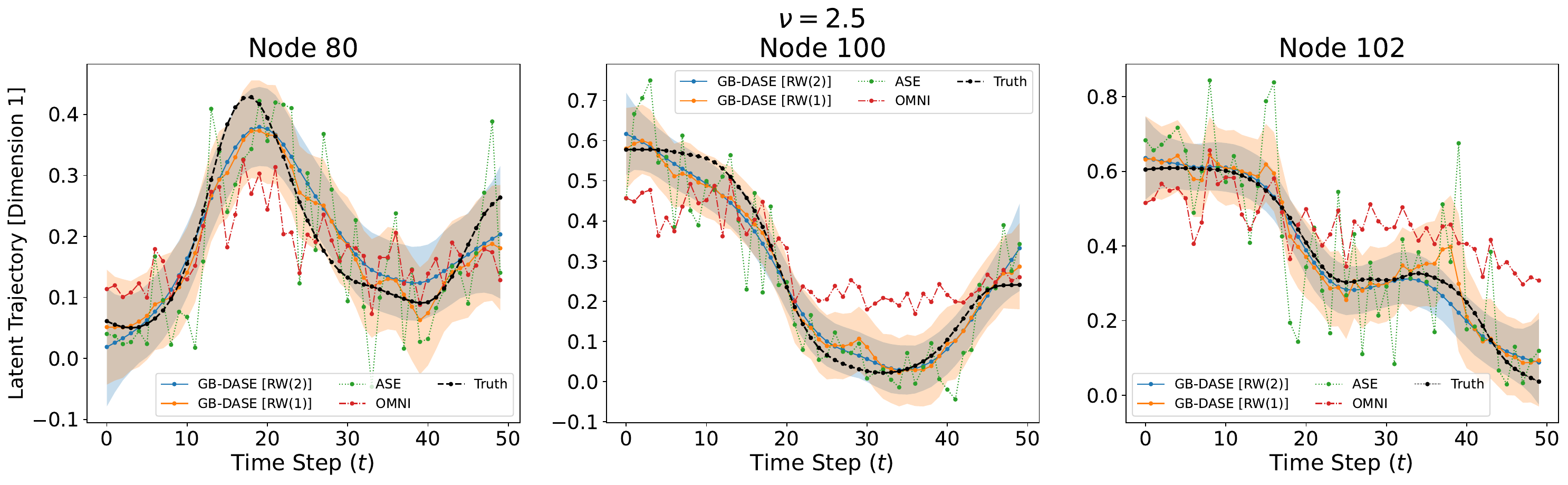}
\end{subfigure}
    \caption{The first dimension of the true and inferred latent trajectories for three nodes from a network with $n = 200$, $m = 50$, an expected density of $0.2$ under two smoothness settings: (Top) $\nu = 0.5$ and (Bottom) $\nu = 2.5$. The colored curves are the point estimates from the four competitors. The shaded regions indicate 95\% pointwise credible intervals from GB-DASE with an RW(2) prior (blue) and an RW(1) prior (orange), respectively.}
    \label{fig:trajectories_exp}
\end{figure}

\subsection{Forecasting Performance}\label{sec:forecast_sim}

This section contains a simulation study comparing the $k$-step ahead forecasting performance of GB-DASE with four ASE-based competitors when the true forecasting function is a linear function.  In particular, the proposed GB-DASE methods use the $k$-step ahead RW($r$)-generalized posterior predictive distribution's expectation defined in Equation~(\ref{eq:gen_forecast}) of the main text as a $k$-step ahead forecast. We form a simple Monte Carlo estimate of this expectation using 2,500 samples from the GB-DASE posterior obtained with Algorithm~\ref{algo:gibbs} after a burn-in of 2,500 iterations. We considered the same four competitors as in the real-data application in Section~\ref{sec:real_data} of the main text, that is, ASE, OMNI, the unfolded ASE (UASE), and the multiple ASE (MASE). Since these existing ASE-based methods do not provide $k$-step ahead forecasts, we use the edge probability estimates at the last observed time point as a forecast for the edge probability at future time points, that is, we use $\hat{\mathbb{E}}(\bY_m)$ as a forecast for $\bY_{m+k}$. For all estimators, we fixed the latent space dimension to the true dimension.


We generate synthetic networks from a Bernoulli dynamic RDPG with independent edges and $d = 2$ latent space dimensions defined in Equation~(\ref{eq:bern_rdpg}) of the main text for $m + \ell$ time points. We use the first $m$ networks to estimate the models and the last $\ell$ networks to evaluate the $k$-step ahead forecasting performance. In all experiments, we set $\ell = 5$. For the first $m - q$ time points, we set the true latent trajectories to the values at times $t = 1, \dots, m-q$ of a function parameterized in terms of a cubic B-spline basis with three knots equally spaced on the interval $[0, m - q]$, so that
\[
    \bx_{it} = \rho \, d^{-1/2} \, (\bw^{\top}_{i1} \mathbf{b}(t), \dots, \bw^{\top}_{id} \mathbf{b}(t))^{\top} \qquad (1 \leq i \leq n, 1 \leq t \leq m - q),
\]
where the scalar $\rho \in [0,1]$, $\mathbf{b}(\cdot)$ is a five-dimensional collection of B-spline basis functions, and each $\bw_{ih} \in \Reals{5}$ consists of basis weights. In the subsequent experiments, we set $\rho$ to fix the overall density of the networks. To ensure each $\bx_{it} \in \mathcal{X} = \{\bx \in \Reals{d}\, : \, \norm{\bx}_2 \leq 1, \bx \succeq 0\}$, we sample $\bw_{ih} \iidsim \text{Dirichlet}(1/5, \dots, 1/5)$ for $1 \leq i \leq n$ and $1 \leq h \leq d$. To set each latent trajectory's values at the remaining $q + \ell$ time points, we linearly extrapolated the latent trajectory forward in time so that the true forecasting function is linear. In all experiments, we set $q = 10$. Figure~\ref{fig:sim_forecast_viz} displays edge probabilities that this procedure produces. To compare the $k$-step ahead forecasts, we evaluate the error for recovery of the $k$-step ahead edge probabilities:
\begin{align*}
    \text{RMSE}_k &= \left[\frac{2}{n(n-1)}\sum_{i < j} \left\{\bx_{i(m+k)}^{\top} \bx_{j(m+k)} - \hat{\mathbb{P}}(Y_{ij,m+k} = 1 \mid \bY_{1:m})\right\}^2\right]^{1/2}, 
\end{align*}
where $\hat{\mathbb{P}}(Y_{ij,m+k} = 1 \mid \bY_{1:m})$ is the forecasted probability of an edge between $i$ and $j$ at time $m+k$, that is, $\hat{\mathbb{E}}_{Q_{\lambda}}^{(r)}(Y_{ij, m+k} \mid \bY_{1:m})$ for GB-DASE and $\hat{\mathbb{E}}(Y_{ij,m})$ for the ASE-based methods.

Figure~\ref{fig:sim_forecast} displays the results for networks with $n = 200$ nodes. For each experiment, we varied the number of time points $m \in \set{100, 150}$ for networks with expected densities of 0.1, 0.2, and 0.3. In all settings, the GB-DASE methods perform best. As expected, GB-DASE with an RW(2) prior outperforms GB-DASE with an RW(1) prior when the signal-to-noise ratio is high enough because it estimates a linear forecast function that matches the true linear forecasts. 
All methods improved slightly as $m$ increased, with their relative performance unchanged. Lastly, for networks with expected densities of 0.2 and 0.3, we see that the error of GB-DASE with an RW(2) prior is roughly constant as the forecasting window $k$ increases, unlike the competing methods. This behavior is because this method estimates the correct linear forecasting rule. To demonstrate, Figure~\ref{fig:sim_forecast_viz} displays the inferred and true edge probabilities between three dyads for a network with $n = 200$, $m = 100$, and an expected density of $0.2$. We see that the linear forecasts from GB-DASE with an RW(2) prior do well in tracking the true edge probabilities, unlike the constant forecasts from the competing methods. However, the 95\% credible bands from GB-DASE with an RW(1) prior still cover the truth. 

\begin{figure}[htbp]
\centering \includegraphics[width=\textwidth, keepaspectratio]{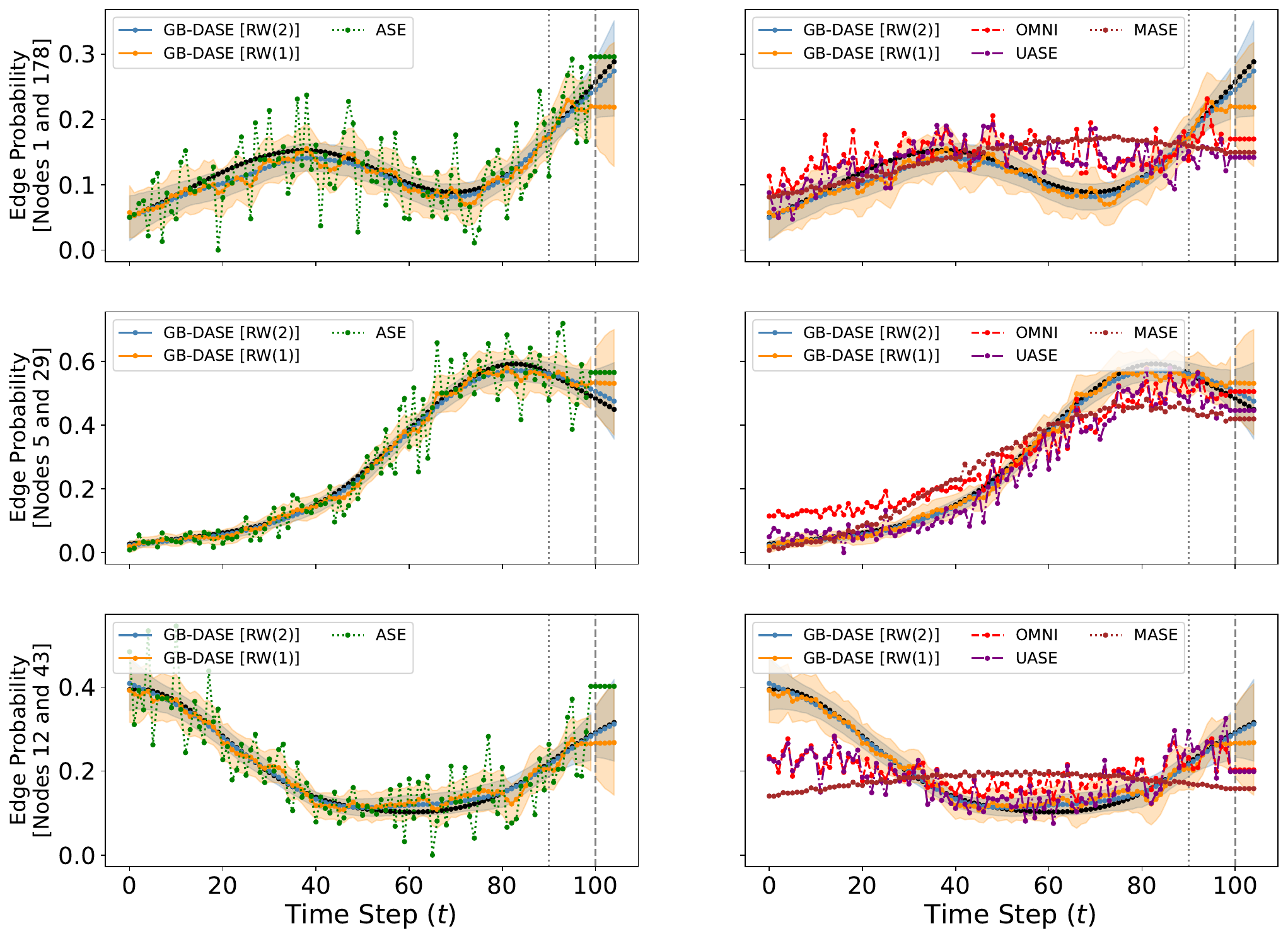}
    \caption{The inferred and true edge probabilities between three dyads from a network with $n = 200$, $m = 100$, and an expected density of $0.2$. The colored curves are the point estimates from the six competitors. The shaded regions indicate 95\% pointwise credible intervals from  GB-DASE with an RW(2) prior (blue) and GB-DASE with an RW(1) prior (orange), respectively. The vertical dotted line indicates the start of linear extrapolation ($t = m-q+1$). The vertical dashed line indicates the time at which forecasting begins ($t = m+1$).}
    \label{fig:sim_forecast_viz}
\end{figure}

\begin{figure}[htbp]
\centering \includegraphics[width=\textwidth, keepaspectratio]{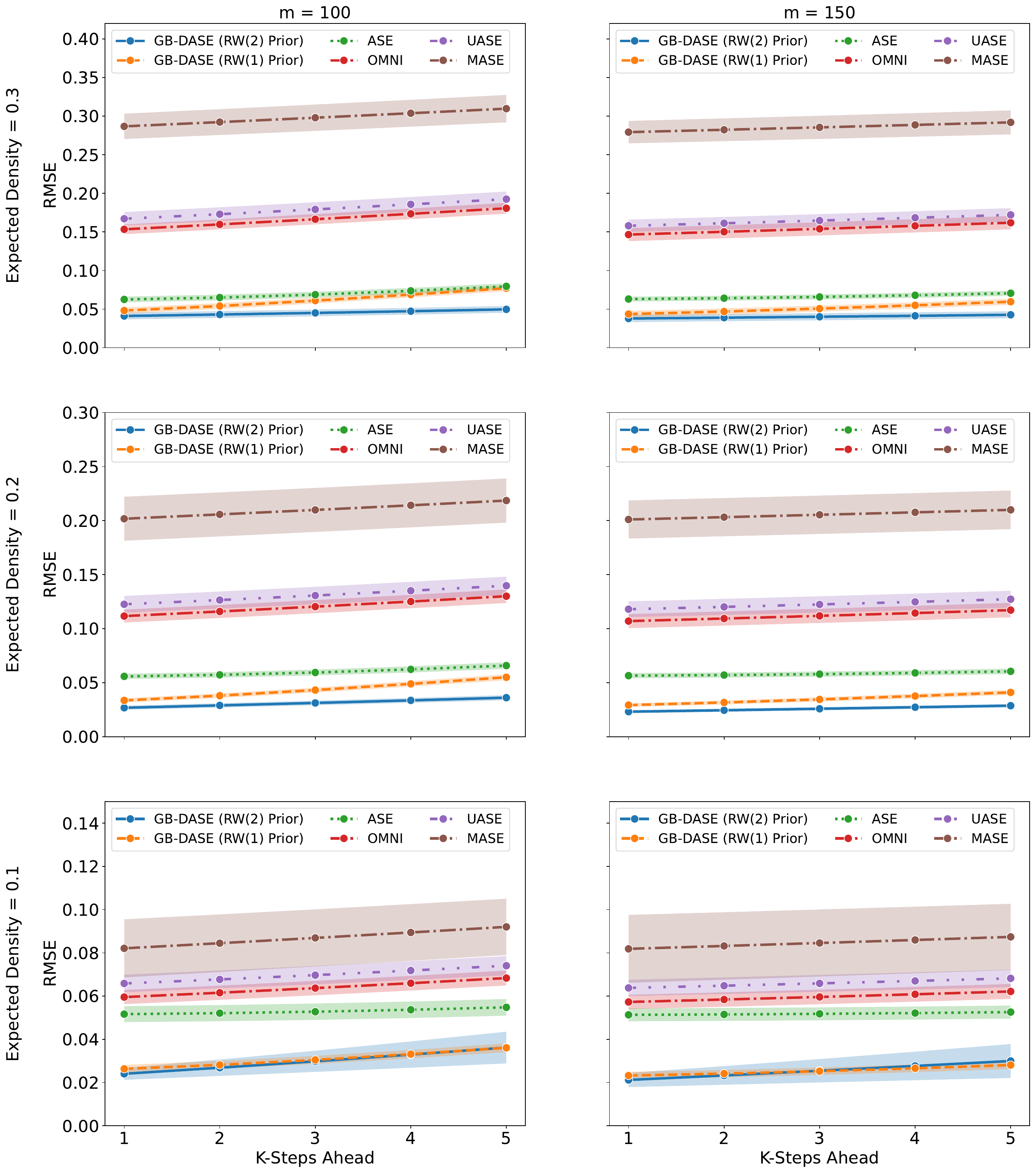}
    \caption{RMSEs of the $k$-step ahead forecasts for networks with a fixed $n = 200$ and varying $m$ and expected densities. The curves and shaded regions indicate averages and one standard deviations over 50 independent replicates, respectively.}
    \label{fig:sim_forecast}
\end{figure}

\section{Additional Results from the Real Data Application}

\subsection{Goodness-of-Fit to the Degree Distribution}\label{sec:gof}

For a more nuanced in-sample goodness-of-fit comparison on the conflict network analyzed in Section~\ref{sec:real_data} of the main text, we quantify each competing method's ability to reconstruct the observed dynamic network's degree distribution using the graphical goodness-of-fit test proposed by \citetSup{hunter2008}. Because this graphical test requires the specification of a complete data-generating process, for this comparison, we will assume that the observed networks come from various independent edge inhomogeneous Bernoulli random graph models with edge probabilities assumed by the corresponding methods. As such, the primary differences highlighted by the graphical test will be due to the edge probability matrices estimated by each method. Specifically, for each ASE-based method, we simulate 100 networks from their associated data-generating model with parameters equal to their point estimates. Based on these simulated networks, we report the empirical degree distribution, that is, the distribution of the number of nodes with a specific degree observed over all time points. For the GB-DASE methods, we calculate the generalized posterior predictive degree distribution using a working data-generating model from the dynamic Bernoulli RDPG with independent edges in Equation~(\ref{eq:bern_rdpg}) of the main text.

Figure~\ref{fig:deg_predictive} depicts the results. We see that ASE and OMNI do a poor job of reconstructing the degree distribution, with both methods producing too many low-degree nodes. In contrast, UASE, MASE, and both GB-DASE methods capture the observed degree distribution well. The generalized posterior predictive degree distributions have wider simulation bands than the empirical distributions. This difference is because the generalized posterior predictive distribution expresses both aleatoric uncertainty from the data-generating model and epistemic uncertainty in the parameters from the generalized posterior. In contrast, the  empirical distributions of the ASE-based methods only reflect aleatoric uncertainty. This comparison highlights a benefit of GB-DASE, which is providing rigorous parameter uncertainty.

\begin{figure}[htbp]
\centering \includegraphics[width=\textwidth, keepaspectratio]{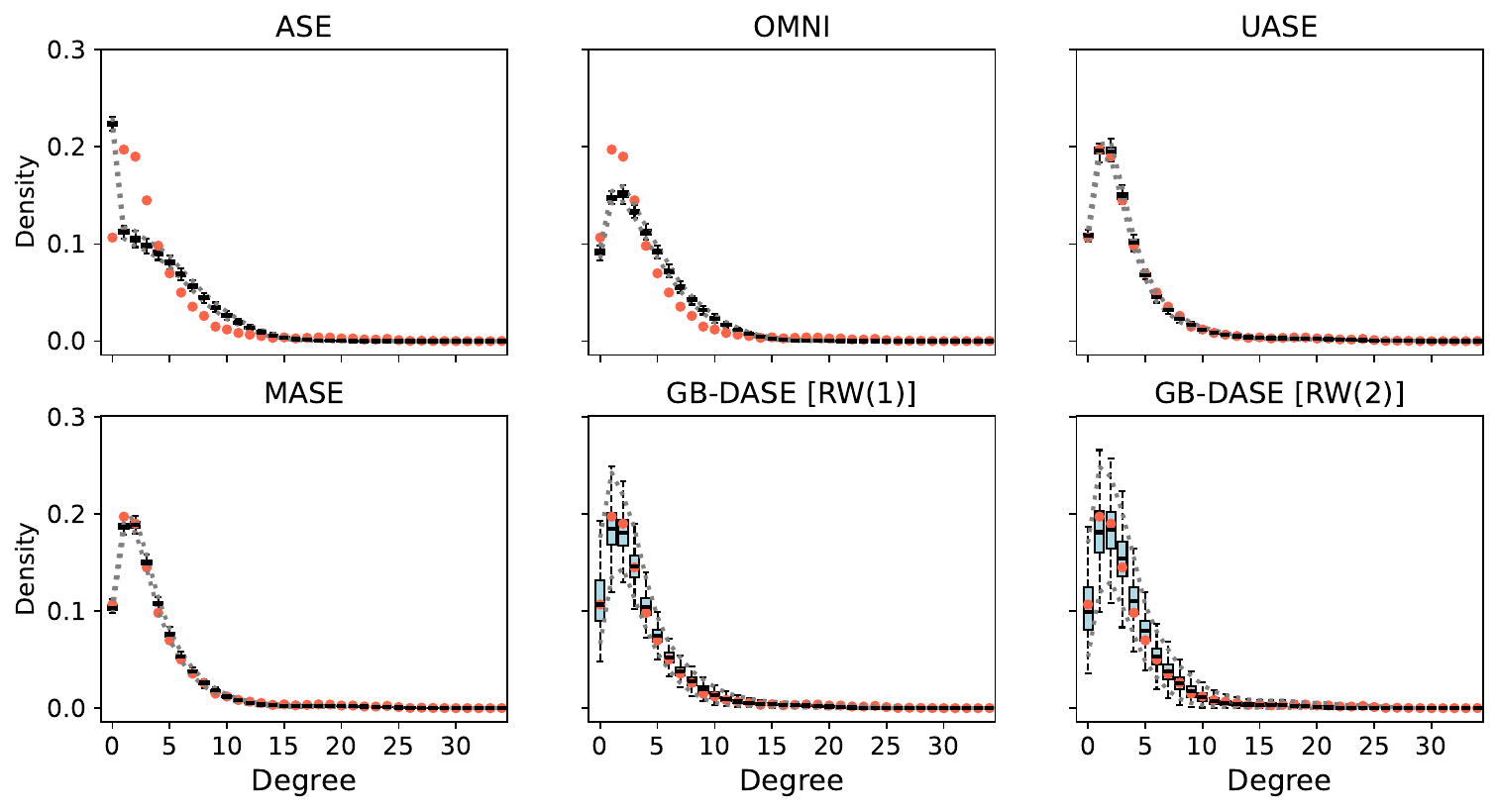}
    \caption{Graphical goodness-of-fit plots for the competing methods on the conflict networks. The boxplots indicate the empirical  (ASE, OMNI, UASE, MASE) or  generalized posterior predictive (GB-DASE) degree distribution, e.g., the number of degree $k$ nodes observed over all time points. The red dots represent the observed statistics, and the region between the dashed lines indicates the range in which 95\% of the simulated statistics fall. For the GB-DASE methods, these regions indicate 95\% pointwise credible intervals.}
    \label{fig:deg_predictive}
\end{figure}

\subsection{Additional Figures from the Real Data Application}\label{sec:add_figures}

Figure~\ref{fig:ase_dhat} displays embedding dimensions selected by the profile-likelihood method of \citetSup{zhu2006} when applied separately to the 157 conflict networks. Based on this plot, we chose an embedding dimension of four for GB-DASE because the majority of estimates were less than or equal to four. Figures~\ref{fig:un_is_rw1} and \ref{fig:pl_us_rw1} show the edge probabilities estimated by GB-DASE with an RW(1) prior between Israel and the United Nations, and between the Palestinian Territories and the United States, respectively. As discussed in the main text, the estimates are less smooth than those from GB-DASE with an RW(2) prior. In particular, they tend to better capture sudden changes in the in-sample time series, which explains the improved in-sample performance of GB-DASE with an RW(1) prior. In contrast, GB-DASE with an RW(1) prior produces a constant forecast, which in this scenario underperforms the linear forecast produced by GB-DASE with an RW(2) prior. Moreover, the widths of the 95\% credible intervals for these forecasts are substantially wider than those from GB-DASE with an RW(2) prior.

\begin{figure}[tb]
\centering \includegraphics[width=0.75\textwidth, keepaspectratio]{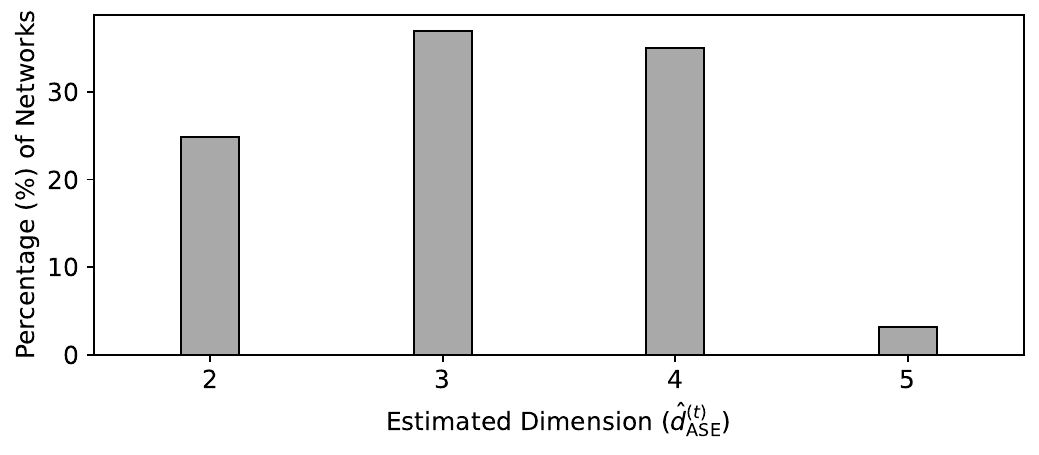}
    \caption{Percentage that the profile-likelihood method selected a given embedding dimension when applied separately to the 157 conflict networks.}
    \label{fig:ase_dhat}
\end{figure}

\begin{figure}[htbp]
\centering \includegraphics[height=0.34\textheight, width=\textwidth, keepaspectratio]{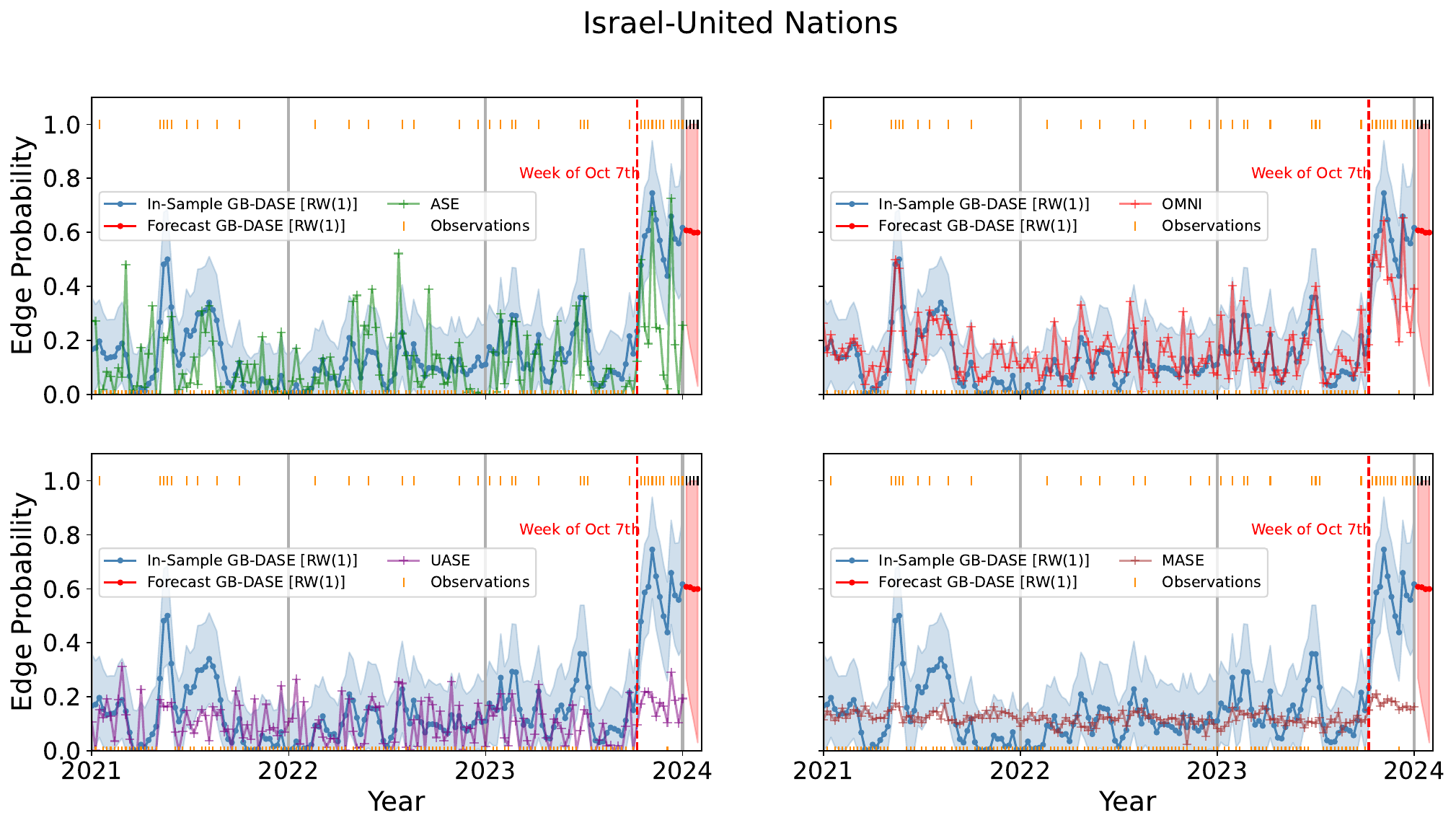}
    \caption{Estimated conflict probabilities between Israel and the United Nations. The curves denote point estimates. The blue and red shaded regions indicate 95\% pointwise credible intervals for in-sample and out-of-sample probabilities, respectively. The orange and black ticks indicate whether a conflict occurred ($y_{ij,t} = 1$) or did not occur ($y_{ij,t} = 0$) on a given week. Orange and black ticks indicate in-sample and out-of-sample weeks, respectively. The dashed horizontal line denotes the week of October 7th, 2023.}
    \label{fig:un_is_rw1}
\end{figure}

\begin{figure}[htbp]
\centering \includegraphics[height=0.34\textheight, width=\textwidth, keepaspectratio]{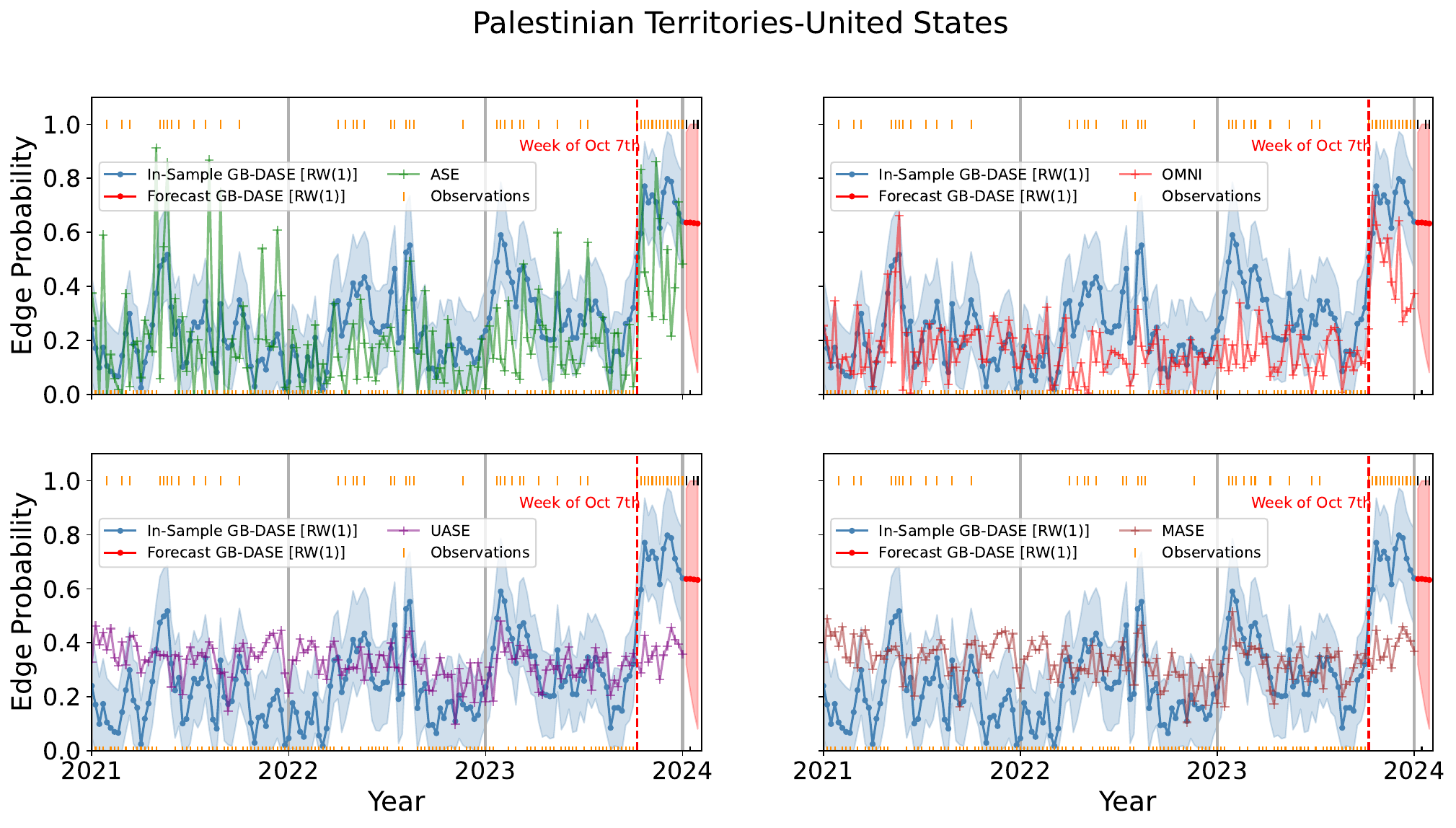}
    \caption{Estimated conflict probabilities between the Palestinian territories and the United States. The curves denote point estimates. The blue and red shaded regions indicate 95\% pointwise credible intervals for in-sample and out-of-sample probabilities, respectively. The orange and black ticks indicate whether a conflict occurred ($y_{ij,t} = 1$) or did not occur ($y_{ij,t} = 0$) on a given week. Orange and black ticks indicate in-sample and out-of-sample weeks, respectively. The dashed horizontal line denotes the week of October 7th, 2023.}
    \label{fig:pl_us_rw1}
\end{figure}

\clearpage

\bibliographystyleSup{agsm}
\bibliographySup{bibliography}

\end{document}